\documentclass{article}

\usepackage[english]{babel}

\usepackage[letterpaper,top=2cm,bottom=2cm,left=3cm,right=3cm,marginparwidth=1.75cm]{geometry}

\usepackage[utf8]{inputenc}
\usepackage[T1]{fontenc}
\usepackage{microtype}

\usepackage{amsmath,amsthm,amsfonts}
\usepackage{graphicx}
\usepackage[colorlinks=true, allcolors=blue]{hyperref}
\usepackage{natbib}
\usepackage{tikz}
\usepackage{pgfplots}
\usepackage{xurl}

\usepackage{subcaption}
\usepackage{xcolor}

\usepackage{mathtools}
\usepackage{algorithm}
\usepackage{algorithmicx}
\usepackage{algpseudocode}
\makeatletter
\renewcommand\fs@ruled{%
  \def\@fs@cfont{\bfseries}%
  \let\@fs@capt\floatc@ruled
  \def\@fs@pre{{\color{black}\hrule height.8pt depth0pt}\kern2pt}%
  \def\@fs@post{\kern2pt{\color{black}\hrule}\relax}%
  \def\@fs@mid{\kern2pt{\color{black}\hrule}\kern2pt}%
  \let\@fs@iftopcapt\iftrue
}
\makeatother

\floatstyle{ruled}
\restylefloat{algorithm}

\usepackage{enumitem}

\usepackage{enumitem}
\usepackage{subcaption}

\theoremstyle{definition}
\newtheorem{lemma}{Lemma}
\newtheorem{theorem}{Theorem}
\newtheorem{example}{Example}

\newtheorem{proposition}{Proposition}

\newcommand{\diag}{{\mathrm{diag}}}

\renewcommand{\deg}{{\mathrm{deg}}}

\newcommand{\xv}{{\pmb{x}}}

\newcommand{\ev}{{\pmb{e}}}
\newcommand{\fv}{{\pmb{f}}}

\newcommand{\yv}{{\pmb{y}}}

\newcommand{\Da}{{\Lambda_{\alpha}}} 
\newcommand{\Db}{{\Lambda_{\beta}}} 

\definecolor{myblue}{RGB}{90,128,184}
\definecolor{myorange}{RGB}{216,127,64}
\definecolor{mygreen}{RGB}{126,147,74}

\definecolor{DarkGreen}{rgb}{0.075,0.375,0.075}
\definecolor{DarkRed}{rgb}{0.5,0.1,0.1}
\definecolor{DarkBlue}{rgb}{0.1,0.1,0.5}
\definecolor{Gray}{rgb}{0.2,0.2,0.2}

\hypersetup{
	colorlinks=true,       
	linkcolor=DarkBlue,          
	citecolor=DarkGreen,        
	filecolor=DarkBlue,      
	urlcolor=DarkBlue,          
	pdftitle={},
	pdfauthor={},
}

\title{Reaching a Consensus in Predictive Loops}

\usepackage{authblk}
\stepcounter{footnote}
\addtocounter{footnote}{-1}
\author[1,2]{Jiduan Wu}
\author[1,3]{Rediet Abebe$^*$}
\author[1,3]{Celestine Mendler-D\"unner\footnote{joint supervision.}}

\affil[1]{Max Planck Institute for Intelligent Systems, T\"ubingen, and T\"ubingen AI Center}
\affil[2]{ETH Zurich}
\affil[3]{ELLIS Institute Tübingen}

\date{}        
\begin{document}

\vspace{0.5cm}
\maketitle
\vspace{-1cm}
\begin{abstract}

Predictions in digital platforms must adapt over time as individuals update their beliefs through social interactions. At the same time, changing predictions alter the content people are exposed to and, consequently, the very beliefs they aim to forecast. 
This recursive coupling between predictions and individuals complicates the analysis of the long-term societal impact of predictive systems.
In this work, we propose a minimal model where predictions and opinions co-evolve, combining insights from network science with concepts from performative prediction.
In our model a platform's predictions influence individual opinions, which then evolve through peer interactions and form the training data for future platform model updates. We demonstrate that this co-evolution induces a novel equilibrium that qualitatively differs from standard network equilibria. In particular, we show how standard predictive objectives can drive networks toward consensus even under conditions where classical opinion-dynamics models lead to disagreement. This emerges because predictive systems dynamically adapt to changing opinions, and learning objectives create spillover effects among individuals beyond the topology of the network. We further analyze systematic deviations from standard prediction and demonstrate amplified effects of targeted platform interventions on equilibrium outcomes, compared to classical network intervention analyses. Together, our results illustrate performativity as an important, yet so far neglected, qualifying factor in social networks.~\looseness=-1

\end{abstract}

\begin{figure*}[h!]
\vspace{0.5cm}
\centering
\includegraphics[width=0.8\textwidth]{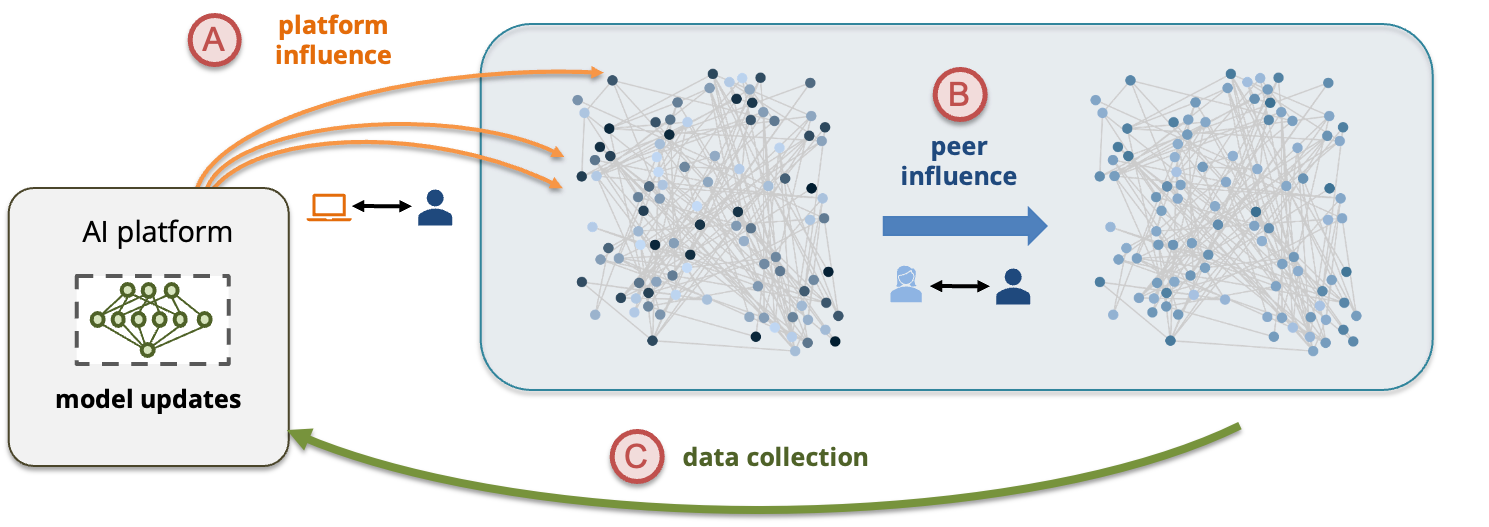}
\caption{\emph{Peer-platform co-influence in opinion dynamics}:~(A)  Individuals interact with a digital platform that deploys a predictive system and update their opinions based on the content they are exposed to.~(B) Individuals' opinions evolve through peer interaction.~(C)  Expressed opinions inform future platform predictions.}
\label{fig:teaser}
\end{figure*}

\section{Introduction}

The formation of opinions has traditionally been modeled as a social process unfolding through interpersonal influence: individuals exchange views with their peers, update their beliefs, and gradually converge or polarize depending on the structure of the network and the strength of social ties. From classical models of diffusion~\citep{degroot1974reaching, holley1975ergodic, friedkin1990social} to contemporary theories of polarization~\citep{bail18polarize}, opinion dynamics have focused on how local interactions between peers aggregate into collective outcomes. However, this assumption no longer accurately reflects how opinions form in today's society where digital platforms play an increasingly dominant role as a source of information.
~\looseness=-1

The effect of digital platforms on beliefs and opinions has been documented in various settings, ranging from preferences for restaurants~\citep{anderson12yelp} and travel providers~\citep{ursu18expedia} to political leaning~\citep{gauthier2026political, huszar2022algorithmic} and medical choices~\citep{curry2014prevalence,farnan2013online}. To account for such platform influence on the formation of opinions several extensions of classical opinion dynamics models have been proposed. Among others, they discuss how platforms change beliefs, modify social ties, and mediate trust. However, these approaches predominantly treat platforms as static or exogenous to the opinion formation process~\citep{out2024impact}. As a result, they miss how digital platforms respond to changing opinions and repeatedly distil aggregate information into future decisions.

To address this, the core focus of this work lies in the interaction of individuals with machine-learning-powered  systems. These are algorithms that are continuously fed with behavioral data about users, make predictions about user preferences and determine what content to show next. As argued above, even when framed as purely predictive, such algorithmic decisions shape downstream user behavior and future data observed by the algorithm---a phenomenon known as \emph{performativity}. As a consequence, learning systems are not merely descriptive or static; rather, they co-evolve with the data they create. This dynamic is conceptualized in the machine learning literature under the umbrella of performative prediction~\citep{PZMH20, hardt25sts}.
Applied to opinion dynamics, performativity implies that any  platform applying predictive algorithms, necessarily becomes an active participant in the opinion formation process. 
This socio-technical feedback-loop is illustrated in Figure~\ref{fig:teaser}. Predictions and opinions are recursively coupled: observations of opinions determine predictions and the performative effects of predictions determine opinions. We refer to this feedback mechanism, characteristic of machine learning powered platforms, as the predictive loop.

To understand how predictive loops alter outcomes of collective opinion formation, we blend models of social influence~\citep{friedkin1990social} with concepts from performative prediction~\citep{PZMH20}. With a minimal model for how predictions and opinions co-evolve, we illustrate how platforms can influence collective outcomes over time, spread information across subgroups, and alter collective outcomes in ways that neither framework alone is able to capture.

\subsection{Our work}

We model the co-evolution of opinions and predictions as a nested feedback loop. The inner-loop describes how opinions evolve through peer interactions, and the outer-loop describes how platforms influence the opinion formation process and subsequently learn from observations to determine future predictions. An equilibrium state emerges where the platform has no reason to change its predictions based on observed data. This new equilibrium state is not only determined by network characteristics and individuals' susceptibility to peer influence, as in traditional network models, but also the platform's choice of learning algorithm and individuals' susceptibility to platform influence. With technical results and simulations, we perform investigations into the characteristics of this new equilibrium and illustrate interesting deviations from the classical network equilibria absent performativity. 

More concretely, we instantiate peer dynamics with the classical Friedkin-Johnsen model~\citep{friedkin1990social}, and focus on self-fulfilling predictions on the side of the platform. Through a theoretical analysis of non-parametric prediction and extensive simulations using parametric models our work offers the following  key take-aways:

\begin{itemize}[topsep=0pt, itemsep=0pt]
    \item \emph{Predictive loops as an explanation for consensus.} We show that by echoing expressed opinions, performative predictions can act as a homogenizing force in networks. For our model we prove that under natural assumptions on the network, the variance of opinions at equilibrium is negatively correlated with the susceptibility of individuals to platform influence, or, in other words, the strength of performativity. Notably, in the limit of high platform susceptibility, a consensus is reached. This consensus emerges beyond primitive graphs and despite stubbornness during peer interactions---standard conditions under which classical results in networks would suggest disagreement.~\looseness=-1 
     \item \emph{Performativity as a pathway for peer influence.} 
     We illustrate how extrapolating predictions to unobserved individuals can induce consensus under even weaker conditions on the network. In particular, we prove that it is sufficient for information to flow in the observed subgroup for a consensus to emerge across the entire population, again, in the limit of high platform susceptibility. The reason is that individuals who are not susceptible to peer influence can indirectly be impacted by the opinions of their peers through model predictions. We offer a lower bound on this performativity-mediated spillover effect unique to predictive loops. Similar in vein, we show how interventions targeted at a single individual impact equilibrium opinions across the entire population in a way that is correlated with the strength of performativity.
     \end{itemize}
Both of these take-aways constitutes an important conceptual deviation from standard theories of networks and showcase performativity as an important qualifying factor in social dynamics.

\section{Preliminaries and related work}
\label{sec:preliminaries}

Throughout this work, graphs serve as a mathematical object to characterize interpersonal relationships in a population of individuals. Each node in the graph corresponds to an individual, and an edge between two nodes represents a social connection. Individuals then interact with a digital platform external to the network.
In the following, we introduce concepts and notation, and provide the necessary background on the individual parts composing the predictive loop as illustrated in Figure~\ref{fig:teaser}.

\vspace{-0.3cm}

\paragraph{The social network graph.}
We use $\mathcal{G}=(\mathcal{V},\mathcal{E})$ to denote a undirected simple graph where $\mathcal{V}$ with $\mathcal{V}=[n]$ is the set of nodes and $\mathcal{E}=\{(i,j)\mid\text{$i$ and $j$ are connected}\}$ represents the set of edges. Assume each node $i\in \mathcal{V}$ represents an individual on the social network $\mathcal{G}$. The adjacency matrix of $\mathcal{G}$ is denoted as $A$ where $a_{ij}=1$ if $(i,j)\in \mathcal{E}$, and $a_{ij}=0$ otherwise. We denote the degree of each node as $d_i$ and use $\deg\coloneqq\diag(d_1,\cdots,d_n)$. 
Each individual $i$ on the network holds a time-dependent opinion~$x_i$. We use the vector notation $\pmb x=[x_1,...,x_n]$ to denote opinions of all $n$ individuals in compact form. 
These opinions are our main subject of study and they can be influenced by both peer interactions and platform predictions. If not specified otherwise we assume the graph  $\mathcal G$ underlying the population is connected.~\looseness=-1

\subsection{Peer influence}
\label{sec:prelimPeer}

An individual's opinion can be influenced by social interactions. This is the basis for the long-standing field of opinion dynamics that seeks to model how beliefs evolve on a social network graph. Early work by \citet{degroot1974reaching} describes belief evolution as a repeated linear averaging mechanism among neighbors in a graph and serves as a baseline model of consensus formation. Later, the influential Friedkin–Johnsen model \citep{friedkin1990social, friedkin1999social} anchors individuals' opinions in their initial beliefs, allowing persistent disagreement even in connected networks. There are numerous extensions and variations of these base models that consider stubbornness of individuals~\citep{yildiz2013binary}, bounded confidence models~\citep{rainer2002opinion}, as well as alternative mechanisms for the aggregation of opinions of peers~\citep{mei22median}. We refer interested readers to recent reviews of opinions dynamics in social networks~\citep{proskurnikov2017tutorial,peralta2025opinion}. In our work, we build on models of opinion dynamics to describe how platform-influenced opinions spread across a population in between individual retraining rounds.

More specifically, we build on the Friedkin-Johnsen (FJ) model~\citep{friedkin1990social}, describing the iterative process of opinion formation across a graph with $n$ nodes. Let the vector $\xv^{(k)}\in[0,1]^n$ denotes individuals' opinions at time step~$k$, with~$\xv^{(0)}=\xv_\mathrm{init}\in[0,1]^n$ referred to as initial opinions. Then, the FJ model is defined by the following recursion   
\begin{equation}
\xv^{(k+1)}=(I_n-\Lambda_{\alpha})\,\xv_\mathrm{init}+\Lambda_{\alpha} W\xv^{(k)}.\label{eq:FJ}
\end{equation}
The influence matrix $W$ defines how individuals weight the opinions of their peers during social interaction. For the purpose of this work the influence matrix is defined by the graph structure as $W:=\mathrm{deg}^{-1}A$ where $A$ denotes the adjacency matrix, hence, it is constant and row-stochastic. The second parameter governing the dynamics is the peer influence matrix $\Da=\diag(\alpha_1,\cdots,\alpha_n)$ with $\alpha_i\in[0,1]$ for $i\in[n]$ parameterizing individual $i$'s susceptibility to peer influence. For $\alpha_i=0$, individual $i$ does not listen to their peers and maintains their initial opinion. As $\alpha_i$ increases, individual $i$ becomes more susceptible to peer influence. For~$\alpha_i\equiv1$,~$\forall i\in[n]$, the dynamics model in Equation~\ref{eq:FJ} recovers the linear averaging mechanism of the DeGroot model~\citep{degroot1974reaching} as a special case. The complement $1-\alpha_i$ is commonly referred to as the stubbornness parameter. 

\vspace{-0.3cm}

\paragraph{Standard network equilibria.}
We use $\mathrm{FJ}_k(\xv_\mathrm{init})$ to denote the state of opinions~$\xv^{(k)}$ after~$k$ discrete time steps of FJ dynamics when initialized at~$\xv_\mathrm{init}$. Similarly, we denote the equilibrium state for~$k\rightarrow\infty$ by $\mathrm{FJ}(\xv_\mathrm{init})$. When~$\alpha_i<1$ for at least one individual~$i$, the dynamics are guaranteed to converge when $\mathcal G$ is connected. The equilibrium can be characterized in closed form:~$\mathrm{FJ}(\xv_\mathrm{init})=(I_n-\Da W)^{-1}(I_n-\Da)\xv_\mathrm{init}$~\citep{parsegov2016novel}. For the special case of DeGroot, the equilibrium corresponds to  $\mathrm{DG}(\xv_\mathrm{init})=\lim_{t\rightarrow \infty} W^t\; \xv_\mathrm{init}$ while additional assumptions are required to guarantee convergence for $\alpha_i\equiv1$, typically expressed as a primitivity assumption on $\mathcal G$~\citep[see, e.g.,][]{degroot1974reaching}.  We will use these standard  network equilibria as a baseline  for our analysis.

\subsection{Platform influence}
\label{sec:prelimPP}

It has been widely documented that beyond peer interaction, the exposure to online content on digital platforms shapes individuals' opinions. For example, yelp recommendations shape restaurant choices~\citep{anderson12yelp}, TV advertisements influences consumer preferences~\citep{adomavicius2013recommender}, 
and ranking algorithms shape clicks~\citep{narayana15click,ursu18expedia,mendler2024engine}. Phenomena such as the spreading of misinformation~\citep{del2016spreading}, filter bubble~\citep{pariser2011filter}, polarisation~\citep{perra2019modelling},  and echo chamber~\citep{cinelli2021echo,chaney2018algorithmic} can be unwanted outcomes from it.

Prior work in opinion dynamics accounts for platform influence by modeling how platforms intervene on the network structure. For example, \citet{zhou2024modeling} and \cite{chitra2019understanding} model the influence of platform recommendations as changes to the edge weights.  \citet{santos21link} and \citet{wang2023relationship} study how link recommendations can influence network structure and equilibria. \citet{out2024impact} study the influence of biased media by modeling the platform as a node with its own expressed opinion, connected to every individual in the network.
\cite{pescetelli2022bots} treat platforms as a filter that ranks the opinions of neighbors and study the equilibrium effect of bots using agent-based simulation.  
In our model, platforms directly influence individuals' opinions through self-fulfilling predictions. Further, the platform is external to the graph and offers personalized predictions to every individual.

Akin to \citet{jia2015opinion,jia2019opinion} and \citet{friedkin2016theory}, we consider the evolution of opinions across multiple stages of peer interactions. But instead of reasoning about changes in the weighting of individuals through social dynamics, we leave the network untouched and study the evolution of opinions when the platform updates the prediction for each individual before each round of peer interactions.

\subsection{Performative prediction} 

Predictive models are rarely static but repeatedly updated based on fresh data. If this data is influenced by the model itself it creates a dynamic in which predictions and data co-evolve. In the machine learning literature such learning dynamics have been formalized under the umbrella of performative prediction~\citep{PZMH20}. Convergence  guarantees and equilibria are derived under different learning settings \citep[e.g.,][]{PZMH20, mendler20stochasticPP,drusvyatskiy23stochastic,wang2023constrained,mehrnaz23pp, farina2026stability}. While these results characterize the asymptotic bahavior of the platform, they model the population as a distribution without specifying the internal structure. Instead, in this work we explicitly leverage network structure present in the population and focus on equilibrium outcomes for the population, rather than properties of the learned model.~\looseness=-1 

Other works in performative prediction that study explicit models for the data generating process, such as rational agent models~\citep{hardt16strategic,jag21alt}, causal models~\citep{mendler22causal, cheng2023causal}, or parametric models~\citep{miller2021outside,izzo21gd}, have not considered network effects among individuals. \citet{wang2023network,narang23multi,saig2025evolutionary} consider network effects among decision makers. \citet{eilat2023performative} discuss a bipartite graph between strategic users and recommended items. But none of these existing models aims to characterize equilibrium outcomes for individuals in networked populations, not does this derive from existing results. We refer to \citet{hardt25sts} for a more comprehensive overview on the literature of performative prediction. For the rest of this work, the most important concept we build on is the notion of performative stability, characterizing a fixed point of model retraining, and, hence, an equilibrium concept we are interested in. We will explain this more in the next section.~\looseness=-1

\vspace{-0.3cm}

\paragraph{Mathematical conventions.}
Before presenting our model it remains to set up necessary notation. We let $(A)_{ij}$ denote the entry at $i$-th row $j$-th column of matrix $A$. The transpose of a matrix $A$ or a vector~$\pmb{v}$ are denoted by $A^{\top}$, $\pmb{v}^{\top}$ respectively. An identity matrix of dimension $d$ is denoted by $I_d$. Let $\ev_i$ be the $i$-th standard basis vector whose $i$-th entry is 1 and remaining entries are $0$. We say that a real matrix is non-negative (positive) if all entries $(A)_{ij}$ are non-negative (positive).  We write~$A=\diag(a_1,\ldots,a_d)$ for a diagonal matrix with entries $(A)_{ij}=a_i$ for $i=j$ and $0$ otherwise.  We denote $A\leq B$ if matrices $A,B$ with same dimensions satisfying $(A)_{ij}\leq(B)_{ij}$ $\forall i,j$. The same convention is used for vectors. A non-negative square matrix $A\in\mathbb{R}^{n\times n}$ is irreducible if $\forall i,j\in[n]$, there exists $m_{ij}\in\mathbb{N}$ such that $(A^{m_{ij}})_{ij}>0$. We use $\sigma(A)$ to denote the spectrum of the matrix $A$ and~$\rho(A)$ to denote the spectral radius of $A$. Moreover, $A$ is primitive if~$\exists m\in\mathbb{N}$ such that $(A^m)_{ij}>0$,~$\forall i,j$. A non-negative matrix $A$ is row-stochastic if $\sum_j A_{i,j}=1$, $\forall i$. We write $a = \mathcal{O}(b)$ if there exist
constants~$C > 0$ such that $a \le C b$.

\begin{algorithm}[t!]
\caption{Algorithmically-mediated opinion dynamics}
\label{alg:retrainMain}
\begin{algorithmic}[1]
\renewcommand{\algorithmicrequire}{\textbf{Parameters:}}                     
\Require ${\pmb{\alpha}}, \pmb{\beta}, K$, a connected simple graph $\mathcal{G}$
\renewcommand{\algorithmicrequire}{\textbf{Input:}}
\Require Innate opinion $\xv^*$, platform algorithm $\mathrm{Algo}(\cdot)$.
\State $\fv^{(1)} = \mathrm{Algo}(\xv^*)$.
\For{$t>0$}  
\State {$\pmb{x}_\mathrm{init}^{(t)}=(I_n-\Db)\pmb{x}^*+\Db \pmb{f}^{(t)}$\Comment{\colorbox{myorange!30}{(A) platform influence}}.}
    \State  {$\xv_0=\xv_\mathrm{init}^{(t)}$.}
    \For {$k=0,1,2,\dots,K-1$}
        \State {$
    \pmb{x}_{k+1}=(I_n-\Da) \pmb{x}_0+ \Da W\pmb{x}_{k}.$ \Comment{\colorbox{myblue!30}{(B) peer influence}}}
    \EndFor
    \State {Platform observes expressed opinions $\xv_{\mathrm{ex}}^{(t)}=\xv_K$.}
    \State {$\pmb{f}^{(t+1)}=\mathrm{Algo}(\xv_\mathrm{ex}^{(t)}).$\Comment{\colorbox{mygreen!30}{(C) platform model update}}}
\EndFor
\end{algorithmic}
\end{algorithm}

\section{A model of peer-platform co-influence}
\label{sec:model}

We start by setting up our model to describe evolving opinions subject to both peer and platform influence. We consider a population of $n$ networked individuals repeatedly interacting with a platform. During each interaction the platform recommends potentially personalized content which influences  individuals opinions. Subsequently, individuals share opinions with their peers and influence each other. After a fixed time period, the platform observes expressed opinions and adjusts future predictions accordingly. We treat each period of peer interaction as a separate instances of the FJ model where initial opinions depend on the platform's active predictions. Subsequent instances are coupled in that predictions of the platform at instance $t$ come from a predictive model trained on the expressed opinion of individuals at instance $t-1$. We are interested in the long-term equilibrium of this coupled dynamic.

Our concrete model is summarized in Algorithm~\ref{alg:retrainMain}. We use $t=1,2,3,...$ to index platform interactions and separate instances of the FJ model. Accordingly, we use $\xv_\mathrm{init}^{(t)}$ to denote the initial opinion of individuals in instance $t$, and $\fv^{(t)}\in [0,1]^n$ to encode predictions for the $n$ individuals during instance $t$. The vector $\xv^*$ denotes individuals' innate opinions, which can be thought of as prior beliefs, personal values or prejudices. To model platform influence, we assume initial opinions are a biased version of individuals' innate opinions, influenced by the predictions of the platform. The strength of the bias depends on individuals' susceptibility to platform influence. More formally,
\begin{equation}
\xv_\mathrm{init}^{(t)}=(I_n-\Db)\xv^*+\Db \fv^{(t)},
\label{eq:P0}
\end{equation}
where the diagonal matrix $\Db=\diag(\beta_1,\ldots,\beta_n)$ with $\beta_i\in[0,1]$ denotes the susceptibility of each individual $i\in[n]$ to platform influence. For $\beta_i=0$, the initial state of individual $i$ corresponds to their innate opinions $x_i^*$, akin to the classical interpretation of the FJ model. The larger $\beta_i$, the more individual $i$'s opinion is confounded by the predictions. 

Then, given the initial opinion $\xv_\mathrm{init}^{(t)}$, we assume peer dynamics follow the FJ model, as detailed in Equation~\ref{eq:FJ}. This means, individuals repeatedly adjust their opinion based on the opinion expressed by their neighbors in the graph, where $\alpha_i$ encodes individual $i$'s susceptibility to peer influence. For $\alpha_i=0$, individual $i$ does not change their opinion during peer interactions. The number of peer interaction steps $K$ performed in each instance is a parameter of our model. For $K=\infty$ we let the FJ dynamics evolve until an equilibrium is reached.

The digital platform then observes the resulting expressed opinions. For each instance $t$ we denote this observed state of opinions after $K$ interaction steps as
\[\xv_\mathrm{ex}^{(t)}:=\mathrm{FJ}_K(\xv_\mathrm{init}^{(t)}).\]
Observations of $\xv_\mathrm{ex}^{(t)}$ are then collected into a dataset and form the basis for the platform to retrain their predictive model. Thus, observations from time step $t$ determine predictions deployed by the platform at time step $t+1$. To indicate that a learning algorithm is involved in this step, we write
\begin{align}
\pmb{f}^{(t+1)}=\mathrm{Algo}\big(\xv_\mathrm{ex}^{(t)}\big).\label{eq:algo}
\end{align}
The algorithm $\mathrm{Algo}(\cdot)$ maps observations to predictions. It serves a flexible placeholder for the platform's learning algorithm and we will discuss specific instantiations in the following sections. Think of the argument $\xv$ as the variable that the platform observes, which is simultaneously the target of prediction. What is implicit in our notation is that the platform might have access to additional meta-data about individuals to train a parametric prediction model to predict opinions. For our theoretical analysis we assume platforms can fully observe opinions, which constitute the target variable in our prediction task. Practical challenges of estimation and measurement are left for future work. Below we provide an real-world example to motivate our framework.

\begin{example}[Content recommendation]
    Envision a setting in which individuals interact with a recommendation platform while forming opinions on a personal lifestyle choices such as smoking or a consequential public issues such as a legal proposal. Each individual’s opinion is represented as a continuous value on a spectrum between two opposing positions. The platform recommends content, including videos, podcasts, and news articles, with varying positions on the issue based on a predictive algorithm and users’ previously expressed opinions. On a daily basis, individuals consume the recommended content, update their opinions, and exchange views with peers through online sharing or in-person conversations. As this process repeats, the platform continuously adapts its recommendations using newly observed behavioral data.
\end{example}

\subsection{Performative stability}
We are interested in the equilibrium opinions under our new model. This equilibrium  differs from classical network equilibria characterizing converging peer dynamics due to the outer loop: even if peer dynamics equilibrate there might still be updates to the predictive model that cause opinions behave differently in the next instance. Thus, we say the dynamics in Algorithm~\ref{alg:retrainMain} have converged if there is no reason for the platform to update their predictive model based on the data they observe. This means, the algorithm continues to recover the model that is already deployed---and hence it repeatedly recovers the same outcome after peer interaction. Following the terminology of \cite{PZMH20}, we denote this equilibrium state \emph{performative stability}.

We say predictions $\fv_\mathrm{PS}$ are performatively stable iff they satisfy the following fixed point equation
\[\fv_\mathrm{PS}=\mathrm{Algo}\big(\mathrm{FJ}_K((I-\Lambda_\beta)\xv^* + \Lambda_\beta \fv_\mathrm{PS})\big).\]

We use $\xv_\mathrm{PS}$ to denote the equilibrium opinions implied by $\fv_\mathrm{PS}$, i.e., $\xv_\mathrm{PS}=\mathrm{FJ}_K((I-\Lambda_\beta)\xv^* + \Lambda_\beta \fv_\mathrm{PS})$. 
Throughout the paper, we discuss instantiations of our model where Algorithm~\ref{alg:retrainMain} converges to a unique performatively stable point. Accordingly, the equilibrium of interest is
\begin{equation}\lim_{t\rightarrow\infty}\mathrm{FJ}_K\big(\xv_\mathrm{init}^{(t)}\big)\rightarrow \xv_\mathrm{PS}.
\label{eq:PS}
\end{equation}

Going forward, we characterize how $\xv_\mathrm{PS}$ depends on $\beta$, $\alpha$ and $K$ for different algorithmic strategies, and analyze how it differs from  classical network equilibria emerging under the Friedkin-Johnsen model in isolation.
A particularly interesting case of our model is when $K\rightarrow\infty$. Here the platform observes opinions at equilibrium and the performatively stable point corresponds to a state where both, learning dynamics and peer dynamics have converged. However, keeping $K$ general allows for a more expressive algorithm and a more general analysis that can cover different length of the deployment cycle. Throughout,  we refer to the reference point $\xv_\mathrm{FJ}:=\mathrm{FJ}(\xv^*)$ as the platform-free equilibrium.

\section{Peer dynamics when perfectly predicting the past}
\label{sec:supervised_learning}

Predictions rely on observations of expressed opinions from the past. At the same time they influence opinions the moment they are used to make recommendations. This can couple instances of peer interaction happening at different time steps. To analyze the effects, we focus on perfect prediction. It represents an idealized case of model retraining where the platform can accurately predict the opinion of each individual in the network. Hence, the learning algorithm simplifies to  
\begin{equation}\mathrm{Algo}(\xv_\mathrm{ex}^{(t)}):=\xv_\mathrm{ex}^{(t)},\quad t\geq1.
\label{eq:SL}
\end{equation}
This baseline instantiation exposes important qualitative effects of performative learning systems without getting into the complexities of data collection, model training and uncertainty quantification. The important effect it captures is how predictions introduce memory to peer dynamics and opinions from time step $t$ impact dynamics at time step $t+1$ through the information distilled into the machine learning model. 

For perfect prediction, the first instance of peer interaction ($t=1$) recovers the classical FJ dynamics anchored at $\xv^{(0)}=\xv^*$. When individuals are not susceptible to platform influence and $K\rightarrow \infty$ it thus holds that~$\xv_\mathrm{PS}=\xv_\mathrm{FJ}$. When $\beta_i>0$ and $t$ grows our model deviates from this classical treatment. The resulting equilibrium will differ qualitatively, as we will show.~\looseness=-1

\subsection{Performative stability}
We characterizing the learning dynamics under perfect prediction. Instantiating our model with the platform policy in Equation~\ref{eq:SL}, we observe that the predictions $\pmb{f}$ admit the following recursion
\begin{align}
    \label{eq:sl_iterative_update}
    \pmb{f}^{(t+1)}&=\Psi_K\left[(I_n-\Db)\pmb{x}^*+\Db\,\pmb{f}^{(t)}\right]\quad \text{with}\quad  \Psi_K\coloneqq\sum_{i=0}^{K-1}(\Da W)^i(I_n-\Da)+(\Da W)^K.
\end{align}
The initial opinions are a linear combination of the current predictions $\pmb{f}^{(t)}$ and the innate opinion $\xv^*$. The matrix 
$\Psi_K$ characterizes peer influence during $K$ consecutive steps of peer interaction, analogous to classical characterizations of opinion dynamics~\citep[e.g.,][]{friedkin1990social}. Recall that $W$ denotes the influence matrix, $K$ the number of peer interaction steps in between model retraining rounds, and $\Lambda_\alpha$ the peer susceptibility matrix.  What is new to our work is an outer loop implied by repeated updates to the platform's  predictions. Interestingly, convergence of our model does not depend on individuals' platform susceptibility, instead, it is sufficient for peer dynamics to be stable.~\looseness=-1 

\begin{proposition}[Existence of a unique performatively stable equilibrium]
\label{prop:existencePS}
Assume the population is characterized by a simple connected graph $\mathcal{G}$, and $\exists i:\alpha_i<1$. Consider a platform that performs perfect prediction of past data. Then, the equilibrium state $\xv_\mathrm{PS}$ exists and is unique.
\end{proposition}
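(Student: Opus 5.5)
The plan is to reduce everything to the affine recursion \eqref{eq:sl_iterative_update}. Write $M:=\Psi_K\Db$ and $\pmb c:=\Psi_K(I_n-\Db)\xv^*$, so that $\fv^{(t+1)}=M\fv^{(t)}+\pmb c$. A performatively stable prediction is a fixed point $\fv=M\fv+\pmb c$, and $\xv_\mathrm{PS}=\Psi_K\big((I_n-\Db)\xv^*+\Db\fv_\mathrm{PS}\big)$.

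\emph{Step 1: $\Psi_K$ is row-stochastic and non-negative.} Non-negativity is immediate from the definition. For the row sums, use $W\pmb 1=\pmb 1$ and telescope:
\[\sum_{i=0}^{K-1}(\Da W)^i(I-\Da)\pmb 1=\sum_{i=0}^{K-1}\big((\Da W)^i\pmb 1-(\Da W)^{i+1}\pmb 1\big)=\pmb 1-(\Da W)^K\pmb 1,\]
so $\Psi_K\pmb 1=\pmb 1$. For $K=\infty$ we use $\Psi_\infty=(I-\Da W)^{-1}(I-\Da)$ instead. This inverse exists because $\rho(\Da W)<1$ when $\mathcal G$ is connected and some $\alpha_i<1$; this is exactly the standard FJ convergence fact from Section~\ref{sec:prelimPeer}. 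The same telescoping gives $\Psi_\infty\pmb 1=\pmb 1$. Consequently $0\le M\le\Psi_K$ entrywise, $M$ is row-substochastic, $\rho(M)\le 1$, and the iterates remain in $[0,1]^n$.

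\emph{Step 2: the case $\rho(M)<1$.} Here $I-M$ is invertible, so $\fv_\mathrm{PS}=(I-M)^{-1}\pmb c$ is the unique fixed point. The iteration converges to it from any start via the Neumann series, which gives existence and uniqueness of $\xv_\mathrm{PS}$ at once. To certify $\rho(M)<1$ I would use the standard criterion for substochastic matrices: $\rho(M)<1$ iff from every node there is a path in the support graph of $M$ to a row with sum strictly less than $1$. Rows with $\beta_i<1$ have sum $1-\sum_j(\Psi_K)_{ij}(1-\beta_j)$, which is less than $1$ whenever $i$ reaches, through $\Psi_K$, some $j$ with $\beta_j<1$.

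\emph{Step 3: the degenerate case, which is the main obstacle.} If every $\beta_i=1$ (or more generally, some class of nodes cannot reach any $\beta_j<1$), then $\pmb c$ vanishes on that class and the dynamics there are $\fv\mapsto\Psi_K\fv$. The fixed-point equation alone then no longer has a unique solution, since constants are fixed points. So in this case I would interpret $\xv_\mathrm{PS}$ as the limit \eqref{eq:PS} of Algorithm~\ref{alg:retrainMain} started from $\fv^{(1)}=\Psi_K\xv^*$, and show that this limit exists and is unique. The key is that $\lim_t\Psi_K^t$ exists. I would argue this via Perron--Frobenius on the class decomposition of $\Psi_K$:
\begin{itemize}
\item The diagonal of $\Psi_K$ contains $1-\alpha_i>0$ (the $i=0$ term) for nodes with $\alpha_i<1$. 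Every recurrent class of $\Psi_K$ must contain such a node, because nodes with $\alpha_i=1$ pass mass along edges of the connected graph toward the stubborn nodes. Each recurrent class is therefore aperiodic.
\item Eigenvalues of modulus one are then only $1$, and they are semisimple. Hence $\Psi_K^t$ converges, and, combined with Step 2 on the transient part, the affine iteration converges to a single limit determined by $\xv^*$.
\end{itemize}
Verifying carefully the claim that every recurrent class contains a stubborn node, for finite $K$ with nodes where $\alpha_i=0$ (whose rows of $\Psi_K$ are $\ev_i^\top$), is the step I expect to require the most care. I would handle it by tracking the support of $(\Da W)^i(I-\Da)$ along graph paths.
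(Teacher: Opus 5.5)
Your proof is correct and has the same skeleton as the paper's: row-stochasticity of $\Psi_K$, existence of $\lim_t\Psi_K^t$, then a split on whether $\rho(\Psi_K\Db)<1$. You establish the key lemma by a different route, though.

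\textbf{How the paper gets $\lim_t\Psi_K^t$.} It splits on fully stubborn nodes ($\alpha_i=0$). If such nodes exist, it permutes them into an identity block and argues, via Lemma~2 of Parsegov et al.\ and connectivity, that the remaining block is Schur stable. If none exist, it uses primitivity of $\Psi_K$ for finite $K$ and a positive row-stochastic block for $K=\infty$.

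\textbf{How you get it.} You give a single Markov-chain argument valid for every $K$ and every pattern of $\alpha$. Each closed class of $\Psi_K$ contains a node with $\alpha_i<1$, so it has a positive diagonal entry and is aperiodic. Hence the only eigenvalue of modulus one is $1$, and it is semisimple. The step you flagged goes through as planned. Follow a shortest graph path from a node to $\{j:\alpha_j<1\}$. While the remaining distance $d$ is at least $K$, the term $(\Da W)^K$ advances $K$ steps along it. Once $d<K$, the term $(\Da W)^d(I_n-\Da)$ puts positive weight directly on a node with $\alpha<1$. For $K=\infty$ the claim is immediate, because the columns of $\Psi_\infty$ with $\alpha_j=1$ vanish.

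\textbf{What your version adds.} It makes explicit two points the paper leaves terse. First, the paper simply asserts that $\lim_t(\Psi_K\Db)^t$ exists once $\lim_t\Psi_K^t$ does. Your split into (i) closed classes of $\Psi_K$ with all $\beta_j=1$, where the forcing term vanishes, and (ii) a Schur-stable remainder actually justifies convergence of the affine iteration. Second, you note that when $\rho(\Psi_K\Db)=1$ the fixed-point equation has many solutions. Uniqueness must then refer to the limit of Algorithm~\ref{alg:retrainMain}, which is what the paper implicitly uses when it writes $\xv_\mathrm{PS}=\lim_t\Psi_K^t\xv^*$ for $\Db=I_n$.

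\textbf{What the paper's version buys.} Its block computations give explicit closed forms for the limit, which it reuses later.

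\textbf{Two harmless slips.}
\begin{itemize}
\item With perfect prediction, $\fv^{(1)}=\mathrm{Algo}(\xv^*)=\xv^*$; your $\Psi_K\xv^*$ is $\fv^{(2)}$. This only shifts the trajectory by one step.
\item The row sum $1-\sum_j(\Psi_K)_{ij}(1-\beta_j)$ of $M$ does not involve $\beta_i$. It is below $1$ exactly when row $i$ of $\Psi_K$ charges some $j$ with $\beta_j<1$ in one step. Multi-step reachability enters only through your path criterion.
\end{itemize}
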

Note that the assumptions coincide with the conditions that guarantee the existence of equilibrium under the FJ model~\citep{parsegov2016novel}. Intuitively, the assumption on the peer susceptibilities implies that there is at least one ``stubborn'' individual that has non-zero weight on their initial belief. The assumption of connectivity makes $\Psi_K$ aperiodic, which implies primitivity and drives the system to convergence. Notably, the result in Proposition~\ref{prop:existencePS} holds for any value of  $\beta_i\in[0,1]$. This implies that our coupled dynamics~\eqref{eq:sl_iterative_update} converge whenever the inner loop dynamics converge. We provide a general closed-form characterization of $\xv_\mathrm{PS}$ in Appendix~\ref{app:proof-perfect-prediction} and discuss particular cases throughout this manuscript.

As we just learned, platform susceptibilities do not matter for the existence of a performatively stable equilibrium, but they critically determine the rate of convergence and the resulting equilibrium opinions. The next result explains how higher platform susceptibility requires more platform retraining steps to reach performative stability.

\begin{proposition}[Convergence under repeated perfect prediction]
\label{prop:convergence_rate_sl}
    Under the same condition as Proposition~\ref{prop:existencePS}, the opinions $\xv_\mathrm{ex}^{(t)}$ under perfect prediction  converge towards $\xv_\mathrm{PS}$ at a linear rate 
    \[\|\pmb{x}_\mathrm{ex}^{(T)}-\pmb{x}_{\mathrm{PS}}\|_2=\mathcal{O}(c^T)\] where $c$ can be characterized as follows:
    \begin{itemize}[topsep=0pt, itemsep=0pt]
        \item If $\beta_i<1\;\forall i\in[n]$ we have $c=\mathcal{O}(\max_i\beta_i)$.
        \item If $\exists i: \beta_i=1$ and $\rho(\Psi_K\Lambda_\beta)<1$ we have $c=\rho(\Psi_K \Db)$.
        \item If $\exists i: \beta_i=1$ and $\rho(\Psi_K\Lambda_\beta)=1$ we have $c=\max\{|\lambda|:\lambda\in \sigma(\Psi_K\Lambda_\beta),|\lambda|<1\}$.
    \end{itemize}
\end{proposition}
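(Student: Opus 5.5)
The plan is to write the outer loop as an affine recursion and read the rate off the iteration matrix $M:=\Psi_K\Db$. From \eqref{eq:sl_iterative_update}, $\fv^{(t+1)}=\Psi_K(I_n-\Db)\xv^*+M\fv^{(t)}$. Since $\xv_\mathrm{ex}^{(t)}=\fv^{(t+1)}$ under perfect prediction, we also have $\xv_\mathrm{PS}=\fv_\mathrm{PS}$. Subtracting the fixed point equation gives the error recursion
\[\xv_\mathrm{ex}^{(T)}-\xv_\mathrm{PS}=M^{T}(\fv^{(1)}-\fv_\mathrm{PS}).\]
So everything reduces to bounding $\|M^T\pmb v\|_2$ for the initial error $\pmb v$. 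First I record that $\Psi_K$ is non-negative and row-stochastic. Non-negativity is clear termwise. Row-stochasticity follows by telescoping: $\Psi_K\mathbf 1=\sum_{i<K}(\Da W)^i(\mathbf 1-\Da\mathbf 1)+(\Da W)^K\mathbf 1$, and $W\mathbf 1=\mathbf 1$ collapses this to $\mathbf 1$. Hence $\|\Psi_K\|_\infty=1$.

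\textbf{Case 1 ($\beta_i<1$ for all $i$).} Here $\|M\|_\infty\le\|\Psi_K\|_\infty\|\Db\|_\infty=\max_i\beta_i<1$. Therefore $\|M^T\pmb v\|_\infty\le(\max_i\beta_i)^T\|\pmb v\|_\infty$. Norm equivalence, with a factor $\sqrt n$ absorbed into the $\mathcal O$, gives the rate $c=\max_i\beta_i$.

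\textbf{Case 2 ($\rho(M)<1$).} This is Gelfand's formula: for any $\epsilon>0$, $\|M^T\|\le C_\epsilon(\rho(M)+\epsilon)^T$. Alternatively, the Jordan form gives $\|M^T\|=\mathcal O(T^{m-1}\rho(M)^T)$, with $m$ the size of the largest Jordan block. I would state the rate as $\rho(M)$ up to this polynomial or $\epsilon$ slack, which the $\mathcal O(c^T)$ notation loosely covers.

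\textbf{Case 3 ($\rho(M)=1$).} This is the main obstacle, because $M^T$ need not vanish and the error decays only on a complementary subspace. I would use Proposition~\ref{prop:existencePS} (convergence and uniqueness) together with Perron--Frobenius theory for the non-negative, substochastic $M$. The first step is to show that the eigenvalue $1$ is semisimple and that no other eigenvalues lie on the unit circle; aperiodicity comes from connectivity and from the positive diagonal contributed by $\Psi_K$'s terms. Then I decompose $\mathbb R^n=\ker(M-I)\oplus\mathcal R$, where $\mathcal R$ is the $M$-invariant sum of the generalized eigenspaces for eigenvalues with $|\lambda|<1$.

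Next I argue that the initial error $\pmb v$ has no component along $\ker(M-I)$. The reason is that a $1$-eigenvector of $M$ is supported on nodes with $\beta_i=1$ and is fixed under the dynamics, so uniqueness and convergence of the iteration force that component to coincide with the limit. More concretely, I would verify that the affine part, restricted to the recurrent class where every $\beta_i=1$ and every $\alpha_i=1$, stays in consensus with its initial value, which defines $\fv_\mathrm{PS}$ there. With that component removed, $M$ restricted to $\mathcal R$ has spectral radius $\max\{|\lambda|:\lambda\in\sigma(M),|\lambda|<1\}$. Applying the Case 2 argument to this restriction yields the claimed $c$.

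The delicate points are the semisimplicity of the eigenvalue $1$ and the vanishing of the kernel component. For both I would try to block-triangularize $M$ by ordering nodes into those with $\beta_i=1$ and $\alpha_i=1$ versus the rest, and then use the fact that the stubborn block has spectral radius below $1$.
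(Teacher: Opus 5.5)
Your Cases~1 and~2 are correct. Case~1 is tidier than the paper's argument. The paper bounds $\rho(\Psi_K\Db)\le\max_k\beta_k$ by Perron--Frobenius monotonicity and reads the rate off the spectral radius, which silently carries the same Jordan/Gelfand slack you flag in Case~2. Your bound $\|\Psi_K\Db\|_\infty\le\max_i\beta_i$ gives rate $\max_i\beta_i$ with no slack.

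The gap is in Case~3, in the step meant to remove the eigenvalue-$1$ component of $\pmb v$. Each of the three facts you rely on there is false.

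(i) Uniqueness is not available. When $\rho(M)=1$, the equation $\fv=\Psi_K(I_n-\Db)\xv^*+M\fv$ has an affine family of solutions; for $\Db=I_n$, every $c\mathbf 1$ solves it. So $\fv_\mathrm{PS}$ is defined only as the limit of the iteration started at $\fv^{(1)}=\xv^*$, and no uniqueness argument can pin down a component.

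(ii) Right $1$-eigenvectors of $M$ are not supported on $\{i:\beta_i=1\}$. Suppose $i$ is fully stubborn ($\alpha_i=0$, so row $i$ of $\Psi_K$ is $\ev_i^\top$) and is the only node with $\beta_i=1$. Then the eigenvector is $u_i=1$, $\pmb u_{-i}=(I-M_{-i,-i})^{-1}M_{-i,i}$, which is positive at every neighbour $j$ of $i$ with $\alpha_j>0$, whatever $\beta_j$ is. Only nonnegative \emph{left} eigenvectors live on $\{\beta_i=1\}$.

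(iii) The recurrent classes are misidentified. If someone is fully stubborn, they are the singletons $\{i\}$ with $\alpha_i=0$ and $\beta_i=1$. Otherwise there is a single class, $\{i:\alpha_i<1\}$ for $K=\infty$ or $[n]$ for finite $K$, on which all $\beta_i$ must equal $1$. A class made of nodes with $\alpha_i=1$ never occurs; for $K=\infty$ their columns of $\Psi_\infty$ vanish. Fully stubborn nodes with $\beta_i=1$ are absorbing and carry the eigenvalue $1$, not a block of spectral radius below $1$. Inside a nontrivial class the iterates also do not stay in consensus. So block-triangularizing along ``$\beta_i=1$ and $\alpha_i=1$ versus the rest'' does not give the split you need.

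The repair is short and makes your ``delicate points'' unnecessary. Proposition~\ref{prop:existencePS} gives $\fv^{(t)}\to\fv_\mathrm{PS}$, hence $M^{T}\pmb v\to\pmb 0$. On the generalized eigenspace of any eigenvalue with $|\lambda|=1$, no nonzero $\pmb w$ has $M^{T}\pmb w\to\pmb 0$: in Jordan coordinates the last nonzero entry keeps its modulus. So $\pmb v\in\mathcal R$ automatically, and your Case~2 argument on $\mathcal R$ finishes the proof. Semisimplicity and aperiodicity are never used. They hold anyway, since $\|M^t\|_\infty\le 1$ for all $t$ and $\lim_t M^t$ exists by the proof of Proposition~\ref{prop:existencePS}.

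The correct replacement for your consensus claim is a conservation law. Any nonnegative $\pmb w$ with $\pmb w^\top M=\pmb w^\top$ satisfies $\pmb w^\top\mathbf 1=\sum_j(\pmb w^\top\Psi_K)_j\beta_j\le\pmb w^\top\Psi_K\mathbf 1=\pmb w^\top\mathbf 1$. Hence $\pmb w^\top\Psi_K(I_n-\Db)=\pmb 0^\top$, $\pmb w^\top\fv^{(t)}$ is constant in $t$, and $\pmb w^\top\pmb v=0$.

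Once repaired, your spectral-projection route is equivalent to the paper's. The paper uses the block-triangular form from the proof of Proposition~\ref{prop:existencePS}: a Schur-stable block, and a row-stochastic block on which all $\beta_j=1$ and $M$ coincides with $\Psi_K$. It reads $c$ off $\rho$ of the first block and the largest modulus below $1$ in the spectrum of the second. The polynomial factor from nontrivial Jordan blocks is glossed over there as well.
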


Thus, the convergence of the outer loop is governed by the individual who is most susceptible to platform influence. The case distinction arises because for $\beta_i=1$ dynamics are no longer anchored in the innate opinion of individual $i$. Furthermore, since $\Psi_K$ is primitive, the last case needs to be treated differently.  But in general, it holds that higher platform susceptibility slows down convergence which is in line with classical results in performative prediction. These results tells us that high sensitivity $\varepsilon$ of the data distribution to changes in the predictive model implies slower convergence~\citep{PZMH20}. Under our model it can be shown that the  distribution shift is $\varepsilon \propto\|\Lambda_\beta\|$ sensitive in the predictions, see Appendix~\ref{app:distribution_sensitivity}.

\subsection{Homogenizing force of performativity}
Next, we look at how platform influence impacts opinions at equilibrium. We start from the fact that absent performativity, the Friedkin-Johnsen model is known to induce a heterogeneous equilibrium $\xv_\mathrm{FJ}$, whenever innate opinions $\xv^*$ are non-uniform. The following result explains how performativity acts as a homogenizing force.
In the following result the variance of $\mathrm{Var}(\xv_\mathrm{PS})$ serves as a quantifier for opinion heterogeneity and the mean $\mathrm{Mean}(\xv_\mathrm{PS})$ describes aggregate trends in the population. 
\begin{proposition}
\label{prop:mean_variance_sl}
  Consider a network characterized by a simple connected regular graph $\mathcal{G}$, its nodes with peer susceptibilities $\alpha_i\equiv\alpha\in(0,1)$  and platform susceptibilities $\beta_i\equiv\beta\in(0,1)$. The platform  performs perfect prediction of past data. Then, the mean opinion at equilibrium is  given by $\mathrm{Mean}({\xv}_\mathrm{PS})=\frac{1} n \pmb{1}^{\top}\pmb{x}^*$ and the variance satisfies 
  \[\frac{\partial \mathrm{Var}(\pmb{x}_{\mathrm{PS}})}{\partial\beta}< 0.\]
\end{proposition}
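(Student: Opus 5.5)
The plan is to diagonalize everything in the eigenbasis of $W$, which is possible because the graph is regular. We first solve the fixed-point equation in closed form. Under perfect prediction $\xv_\mathrm{PS}=\fv_\mathrm{PS}$, and the recursion \eqref{eq:sl_iterative_update} with $\Db=\beta I_n$, $\Da=\alpha I_n$ gives
\[
\xv_\mathrm{PS}=(1-\beta)(I_n-\beta\Psi_K)^{-1}\Psi_K\,\xv^* .
\]
The inverse exists because $\rho(\beta\Psi_K)\le\beta<1$, since $\Psi_K$ is row-stochastic; this is consistent with Proposition~\ref{prop:existencePS}.

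For a $d$-regular graph, $W=A/d$ is symmetric and doubly stochastic. Hence $\Psi_K=g(W)$ is symmetric, where
\[
g(\lambda)=(1-\alpha)\sum_{i=0}^{K-1}(\alpha\lambda)^i+(\alpha\lambda)^K .
\]
Write $W=\sum_j\lambda_j\pmb u_j\pmb u_j^\top$ with orthonormal eigenvectors and $\pmb u_1=\pmb 1/\sqrt n$, $\lambda_1=1$. Then $\xv_\mathrm{PS}=\sum_j h_j(\beta)\,c_j\,\pmb u_j$, where $c_j=\pmb u_j^\top\xv^*$, $\psi_j=g(\lambda_j)$ and $h_j(\beta)=\frac{(1-\beta)\psi_j}{1-\beta\psi_j}$.

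\emph{Mean.} Since $g(1)=1$, we get $h_1=1$. Hence $\pmb 1^\top\xv_\mathrm{PS}=\pmb 1^\top\xv^*$, which gives the mean claim. Equivalently, $\pmb 1^\top\Psi_K=\pmb 1^\top$, so $\pmb 1^\top(I_n-\beta\Psi_K)^{-1}=\pmb 1^\top/(1-\beta)$.

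\emph{Variance.} Centering removes exactly the $\pmb u_1$ component, so
\[
\mathrm{Var}(\xv_\mathrm{PS})=\frac1n\sum_{j\ge2}h_j(\beta)^2c_j^2 .
\]
Differentiating gives $h_j'(\beta)=\frac{\psi_j(\psi_j-1)}{(1-\beta\psi_j)^2}$, and therefore
\[
\frac{\partial\,\mathrm{Var}(\xv_\mathrm{PS})}{\partial\beta}=\frac{2(1-\beta)}{n}\sum_{j\ge2}\frac{\psi_j^2(\psi_j-1)}{(1-\beta\psi_j)^3}\,c_j^2 .
\]
Each summand is nonpositive provided $|\psi_j|<1$ for $j\ge2$. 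In that case $\psi_j-1<0$ and $1-\beta\psi_j>0$.

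\emph{Spectral bound.} This is the step I expect to be the main obstacle. By the triangle inequality, $|g(\lambda)|\le(1-\alpha)\sum_{i<K}\alpha^i+\alpha^K=1$ for $|\lambda|\le1$. Connectivity makes $\lambda=1$ simple, so for $j\ge2$ we have $\lambda_j\in[-1,1)$. If $|\lambda_j|<1$, the inequality is strict because $\alpha>0$. If $\lambda_j=-1$ (the bipartite case), the terms $(-\alpha)^i$ alternate in sign with $\alpha\in(0,1)$, so the triangle inequality is again strict. Hence $|\psi_j|<1$ for all $j\ge2$.

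\emph{Strictness.} Strict negativity needs some $j\ge2$ with $c_j\neq0$ and $\psi_j\neq0$. The condition $c_j\neq0$ for some $j\ge2$ holds exactly when $\xv^*$ is non-uniform; this should be stated as an implicit assumption, since otherwise the variance is identically zero. The remaining delicate point is ruling out $\psi_j=0$ on the support of $\xv^*$. For $K=\infty$ this is immediate because $g(\lambda)=\frac{1-\alpha}{1-\alpha\lambda}>0$. For finite $K$ I would first try to show $g(\lambda)>0$ on $[-1,1]$ by writing $g(\lambda)=\frac{(1-\alpha)+(\alpha\lambda)^K(\lambda-1)\alpha}{1-\alpha\lambda}$, whose numerator is at least $1-\alpha-2\alpha^{K+1}$. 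That bound is only positive for small $\alpha$ or large $K$. Failing a general argument, the fallback is to note that a single eigen-direction with $c_j\neq0$ and $\psi_j\neq0$ suffices, which is generic.
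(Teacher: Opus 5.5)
Your argument is correct and is essentially the paper's proof. Both diagonalize $\Psi_K$ in the orthonormal eigenbasis of the symmetric $W$, read the mean off the $\pmb 1/\sqrt n$ mode, and differentiate the variance termwise. The paper does this only for $K=\infty$, where $\lambda_i=\frac{1-\alpha}{1-\alpha\mu_i}\in(0,1)$ for $i\ge 2$, and like you it tacitly needs $\xv^*$ non-uniform for the strict sign.

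Your unresolved finite-$K$ case is a genuine exception, not a missing argument. Take $K=1$ and $\alpha=\tfrac12$ on the 4-cycle: then $g(-1)=0$, and $\xv^*=(1,0,1,0)^\top$ gives $\Psi_1\xv^*=\tfrac12\pmb 1$. Hence $\mathrm{Var}(\xv_\mathrm{PS})\equiv 0$ for every $\beta$, so strictness really requires $K=\infty$ or your genericity condition.
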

The result is a direct consequence of the equilibrium characterization 
$\xv_\mathrm{PS}=(1-\beta)(I_n-\beta\Psi_K )^{-1}\Psi_K\pmb{x}^*$ when $\alpha\in(0,1)$, $\beta<1$, together with the observation that $\Psi_K$ is row-stochastic.

Thus, as platform susceptibility $\beta$ grows the variance of equilibrium opinions decreases.   Remember that for $\beta=0$ the stable point corresponds to $\xv_\mathrm{PS}=\mathrm{FJ}_K(\xv^*)$ and hence the variance reduction can be as large as $\mathrm{Var}(\xv^*)$ for networks with  small  $\alpha$. 
The mean, in turn, remains unaffected by platform interactions. We defer the discussion on the dependence of $\mathrm{Var}(\xv_\mathrm{PS})$ over $\alpha$ to Appendix~\ref{app:var_peer_platform_sus}. 

Before we further characterize the resulting opinion distribution, let us comment on the case $\alpha=1$ excluded by our result. Here  we know that for $K\rightarrow \infty$ the variance of opinions is zero, after the first instance of peer interaction~\citep{degroot1974reaching}. Interestingly, this is not only maintained throughout retraining steps, but we have $\mathrm{Var}(\xv_\mathrm{PS})=0$ under platform influence, even for finite $K$. We refer to the formal result and a separate discussion of the DeGroot model in Appendix~\ref{app:degroot}.

\subsection{Consensus in the limit}
As the heterogeneity of equilibrium opinions gets reduced when $\beta$ grows, a particularly interesting case is $\beta\rightarrow1$. In this case platform influence is maximized.  We show how, in combination with perfect prediction, this induces consensus at equilibrium.

\begin{theorem}
\label{theorem:consensus}
Consider a network characterized by a simple connected graph $\mathcal{G}$, individuals with peer susceptibilities $\alpha_i\in(0,1)\;\forall i$, and a platform that performs perfect prediction of past data. Then, in the limit $\beta_i=\beta\rightarrow 1$ for all $i$,\footnote{we can relax the homogeneity assumption on $\beta$ with an additional technical assumption on the speed at which the limit is attained for individual $\beta_i$, see Appendix~\ref{app:consensus_sl}.} a consensus is reached at equilibrium. Namely, for any $\xv^*$ there exists a constant $c^*=c^*(x^*)\in[0,1]$ such that: 
\begin{equation}
\lim_{\beta\rightarrow 1}(\xv_\mathrm{PS})_i = c^* \quad \forall i\in[n].
\end{equation}

\end{theorem}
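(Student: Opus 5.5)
We start from the closed-form equilibrium used after Proposition~\ref{prop:mean_variance_sl}. For homogeneous $\beta<1$, the fixed point of the recursion \eqref{eq:sl_iterative_update} with $\Lambda_\beta=\beta I_n$ is
\[
\fv_\mathrm{PS}=(1-\beta)(I_n-\beta\Psi_K)^{-1}\Psi_K\xv^*.
\]
Then $\xv_\mathrm{PS}=\fv_\mathrm{PS}$, since under perfect prediction the expressed opinions coincide with the next predictions. This inverse exists because $\Psi_K$ is row-stochastic, so $\rho(\beta\Psi_K)=\beta<1$. Expanding the Neumann series gives
\[
\xv_\mathrm{PS}=\Big[(1-\beta)\sum_{t\ge0}\beta^t\Psi_K^t\Big]\Psi_K\xv^*,
\]
which is an Abel average of the powers of $\Psi_K$. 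The plan is therefore to show that $\Psi_K$ is a primitive row-stochastic matrix. Then $\Psi_K^t\to \pmb 1\pi^\top$ for its unique stationary distribution $\pi$, and Abel summation yields
\[
\lim_{\beta\to1}\xv_\mathrm{PS}=\pmb 1\pi^\top\Psi_K\xv^*=\pmb 1\,\pi^\top\xv^*,
\]
so that $c^*=\pi^\top\xv^*\in[0,1]$.

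The key steps, in order, are as follows.

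First, verify that $\Psi_K$ is row-stochastic and nonnegative. Nonnegativity is clear since $\alpha_i\in(0,1)$ and $W\ge0$. For the row sums, $(\Lambda_\alpha W)^i(I-\Lambda_\alpha)\pmb 1=(\Lambda_\alpha W)^i\pmb 1-(\Lambda_\alpha W)^{i}\Lambda_\alpha W\pmb1$, because $W\pmb 1=\pmb 1$. The sum over $i$ therefore telescopes to $\pmb 1-(\Lambda_\alpha W)^K\pmb 1$, and adding the last term $(\Lambda_\alpha W)^K\pmb 1$ gives exactly $\pmb 1$.

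Second, show primitivity. The $i=0$ term contributes $I-\Lambda_\alpha$, a positive diagonal, because every $\alpha_i<1$. The $i=1$ term (or, if $K=1$, the term $(\Lambda_\alpha W)^K$) contributes a positive multiple of $W$ on the off-diagonal pattern of $A$, because every $\alpha_i>0$. Hence $\Psi_K\ge c(I+A)$ entrywise for some $c>0$. Since $\mathcal G$ is connected, $(I+A)^{n-1}>0$, so $\Psi_K$ is primitive.

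Third, apply Perron--Frobenius to the primitive stochastic matrix $\Psi_K$. This gives $\Psi_K^t=\pmb 1\pi^\top+R_t$ with $\|R_t\|\le Cr^t$ for some $r<1$. Substituting,
\[
(1-\beta)\sum_t\beta^t\Psi_K^t=\pmb 1\pi^\top+(1-\beta)\sum_t\beta^tR_t,
\]
and the remainder is bounded in norm by $(1-\beta)C/(1-\beta r)\to0$ as $\beta\to1$. Finally, $\pi^\top\Psi_K=\pi^\top$ simplifies the limit to $\pmb 1\,\pi^\top\xv^*$. Because $\pi$ is a probability vector and $\xv^*\in[0,1]^n$, we get $c^*\in[0,1]$.

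The main obstacle is the primitivity argument, together with keeping the limit uniform. Note that the paper's remark that connectivity makes $\Psi_K$ aperiodic relies on $\alpha_i<1$ for all $i$, which is exactly what produces the diagonal. The Abel-limit step itself is routine once the spectral gap is available.

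One more point needs care: the closed form requires $\beta<1$, and we are taking a limit rather than evaluating at $\beta=1$. At $\beta=1$ the equilibrium is generally not unique, since any consensus vector is a fixed point. The statement is therefore genuinely about the limit, and the Abel-summation route is the natural way to handle it.
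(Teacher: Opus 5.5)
Your proof is correct and follows essentially the same route as the paper. Both arguments establish primitivity of $\Psi_K$ from the positive diagonal $I-\Lambda_\alpha$ together with the irreducible term $\Lambda_\alpha W(I-\Lambda_\alpha)$, then read the closed form $(I-\Psi_K\Lambda_\beta)^{-1}\Psi_K(I-\Lambda_\beta)\xv^*$ as an Abel average that converges to $\lim_t\Psi_K^t=\pmb 1\pi^\top$. The paper takes this limit through an entrywise sandwich meant to cover heterogeneous $\beta_i$, and it cites the Abel limit rather than proving it. For the homogeneous case actually stated, you make that step explicit with the spectral-gap bound $(1-\beta)C/(1-\beta r)$.
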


Let us first contrast this result with the classical equilibrium under the Friedkin-Johnsen dynamics. The conventional understanding is that as long as $\xv^*$ is heterogeneous and $\alpha_i<1$ for some $i$, the platform-free equilibrium leads to disagreement~\citep{friedkin1990social}. However, with performativity this is different. Predictions repeatedly carry information from the past round into future interactions and, as a result, opinions move closer together over time. When $\beta_i=1$ for all $i$ this results in a consensus at equilibrium \emph{even if} the dynamics are initialized at a heterogeneous $\xv^*$. The reason is that, under strong performativity, the equilibrium at each step is carried over to initialize the next step, which progressively reduces the influence of the innate opinions $\xv^*$ from the first instance. At a high level, this process can be viewed as a gradual transition from FJ dynamics to DeGroot dynamics, as the effect of innate opinions vanishes.

Now, let us contrast the result with the equilibrium in the DeGroot model~\citep{degroot1974reaching}. 
DeGroot can explain consensus at equilibrium for heterogeneous $\xv^*$ by setting $\alpha_i=1$ for all $i$ and making the assumption that $\mathcal G$ is primitive.  
Theorem~\ref{theorem:consensus} can explain convergence to consensus for more general graphs without primitivity. The reason is that anchoring of initial opinions, with $\alpha_i\in(0,1)$, makes the random walk induced by $\Psi_K$ aperiodic.

\section{Indirect peer influence under partial observation} 

Standard opinion dynamic models assume that peer influence travels along the edges of a graph and the network structure governs how opinions flow. Our next result shows that platforms  can add an overlaying structure to the network. In doing so, it can create new pathways for peer influence, linking people who are not otherwise connected in the graph. 

We demonstrate how these alternate pathways for peer influence arise under a simple model of  prediction with limited information. Formally, suppose the platform observes opinions for individuals in a connected subgraph of $\mathcal G$, denoted as $\mathcal{G}'=(\mathcal{V}', \mathcal{E}')$. To make prediction for unobserved individuals, the platform then extrapolated these observations, as detailed in the predictive policy below: 
\begin{align}
    (\pmb{f}^{(t+1)})_i=\begin{cases}
        (\xv_\mathrm{ex}^{(t)})_i,&\text{for}\quad i\in\mathcal{V}',\\
        \frac{1}{|\mathcal{V}'|}\sum_{j\in\mathcal{V}'}(\xv_\mathrm{ex}^{(t)})_j,&\text{otherwise}
    \end{cases}
    \label{eq:mean_estimator}
\end{align}
The algorithm deviates from perfect prediction for unobserved individuals $i\notin\mathcal V'$. To determine their predictions, it gives equal weight to each observed data point. This is a natural choice if no additional information about individuals is available. In cases with additional information, these features could be used to adjust the weighting based on similarity among individuals. In the following result, we use the uniform weight in the proof for simplicity while it can be easily extended to heterogeneous weights case. It shows how such deviations from perfect prediction couple individuals,  even if they are not susceptible to peer influence. 

\begin{proposition}
\label{prop:consensus_mean_estimation}
   Consider a network characterized by a simple connected graph $\mathcal{G}$. Assume the platform has access to expressed opinions of individuals in a  subgraph $\mathcal{G}'=(\mathcal{V}',\mathcal{E}')$ of $\mathcal{G}$ and $\alpha_i\in(0,1)$ for all $ i\in\mathcal{V}'$. Consider innate opinions $\xv^*$ and $\tilde \xv^*$ that differ in a single coordinate $i_0\in\mathcal V'$, and let $\xv_\mathrm{PS}$ and $\tilde \xv_\mathrm{PS}$ be the corresponding stable points when platform predictions follow Equation~\ref{eq:mean_estimator}.  Fix $q\notin\mathcal V'$. Then, even if $\alpha_q=0$, as long as $\beta_q>0$,  it holds that 
\begin{align*}
    |(\xv_\mathrm{PS})_q-(\tilde \xv_\mathrm{PS})_q|>c\beta_q|x_{i_0}^*-\tilde{x}_{i_0}^*|
\end{align*}
where constant $c>0$ is determined by $\Lambda_{\pmb{\alpha}}$, $\mathcal{G}$, and $\mathcal{G}'$.
\end{proposition}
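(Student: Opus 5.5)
The plan is to work entirely with the linear fixed-point equation and study the difference $\Delta\xv:=\xv_\mathrm{PS}-\tilde\xv_\mathrm{PS}$, which is driven by $\delta:=\xv^*-\tilde\xv^*=d\,\ev_{i_0}$ with $d=x^*_{i_0}-\tilde x^*_{i_0}$. By symmetry we may assume $d>0$.

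\textbf{Step 1: write the policy as a matrix.} Write the policy in Equation~\ref{eq:mean_estimator} as $\fv^{(t+1)}=M\xv_\mathrm{ex}^{(t)}$, where $M$ is non-negative and row-stochastic. Its rows for $i\in\mathcal V'$ are $\ev_i^\top$, and its rows for $i\notin\mathcal V'$ equal $\frac{1}{|\mathcal V'|}\sum_{j\in\mathcal V'}\ev_j^\top$. As in Equation~\ref{eq:sl_iterative_update}, the expressed opinions are $\xv_\mathrm{ex}=\Psi_K\big((I_n-\Db)\xv^*+\Db\fv\big)$. The stable predictions therefore solve
\[
\fv_\mathrm{PS}=M\Psi_K(I_n-\Db)\xv^*+M\Psi_K\Db\fv_\mathrm{PS}.
\]
Existence and uniqueness follow as in Proposition~\ref{prop:existencePS}: restricted to the coordinates of $\mathcal V'$ the dynamics close on themselves, and $\alpha_i<1$ there. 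Subtracting the equations for $\xv^*$ and $\tilde\xv^*$ gives
\[
\Delta\fv=M\Psi_K(I_n-\Db)\delta+M\Psi_K\Db\Delta\fv,\qquad \Delta\xv_\mathrm{ex}=\Psi_K\big((I_n-\Db)\delta+\Db\Delta\fv\big).
\]
All matrices involved are non-negative. Iterating the recursion from $\Delta\fv^{(1)}=M\Psi_K\delta\ge 0$ (if $\fv^{(1)}$ is formed from $\Psi_K\xv^*$) shows by induction that every iterate, and hence the limit, satisfies $\Delta\fv\ge0$ and $\Delta\xv_\mathrm{ex}\ge0$ componentwise.

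\textbf{Step 2: reduce to the average over the observed set.} Use $\alpha_q=0$. Since the $q$-th row of $\Da$ vanishes, the $q$-th row of $(\Da W)^i$ is zero for $i\ge1$, and the $q$-th row of $I_n-\Da$ is $\ev_q^\top$. Hence the $q$-th row of $\Psi_K$ is exactly $\ev_q^\top$. Since $q\neq i_0$ we have $\delta_q=0$, and therefore
\[
(\Delta\xv_\mathrm{PS})_q=\beta_q(\Delta\fv)_q=\frac{\beta_q}{|\mathcal V'|}\sum_{j\in\mathcal V'}(\Delta\xv_\mathrm{ex})_j\;\ge\;\frac{\beta_q}{|\mathcal V'|}(\Delta\xv_\mathrm{ex})_{i_0},
\]
using the non-negativity from Step 1. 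This already exhibits the factor $\beta_q$. It remains to show that $\sum_{j\in\mathcal V'}(\Delta\xv_\mathrm{ex})_j\ge c'\,d$ with $c'>0$ depending only on $\Lambda_\alpha,\mathcal G,\mathcal G'$.

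\textbf{Step 3 (main obstacle): a lower bound uniform in the $\beta_j$, $j\in\mathcal V'$.} The naive bound uses only the diagonal. Since $(\Psi_K)_{i_0i_0}\ge 1-\alpha_{i_0}=:\psi>0$ and $(\Delta\fv)_{i_0}=(\Delta\xv_\mathrm{ex})_{i_0}$, we get
\[
(\Delta\xv_\mathrm{ex})_{i_0}\ge\psi(1-\beta_{i_0})d+\psi\beta_{i_0}(\Delta\xv_\mathrm{ex})_{i_0},\qquad\text{so}\qquad (\Delta\xv_\mathrm{ex})_{i_0}\ge\frac{\psi(1-\beta_{i_0})}{1-\psi\beta_{i_0}}\,d.
\]
This bound degenerates as $\beta_{i_0}\to1$, so the real argument must not let $c$ depend on $\beta_{i_0}$.

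To get uniformity, I would restrict to the observed block. Let $P$ be the principal submatrix of $\Psi_K$ on $\mathcal V'$ plus the mass it sends to unobserved nodes, which is fed back through their mean predictions. I would then track the total observed mass $s:=\pmb 1^\top_{\mathcal V'}\Delta\xv_\mathrm{ex}$. The first round injects at least $\psi d$ of mass at $i_0$. In each later round the fraction $1-\beta_j$ of that mass is replaced by a zero difference, while the fraction $\beta_j$ is re-propagated by a (sub)stochastic matrix.

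The cleanest route is the one behind Theorem~\ref{theorem:consensus}. The perturbation splits into two parts:
\begin{itemize}[topsep=0pt, itemsep=0pt]
\item the direct effect $(1-\beta_{i_0})d$ at $i_0$;
\item the indirect effect, which is at least a fixed positive fraction of $d$ because the observed opinions converge to a convex combination with $i_0$ carrying weight bounded below by a graph-and-$\alpha$ dependent constant, namely the stationary-type weight of $i_0$ under the primitive matrix $\Psi_K$ restricted to the connected $\mathcal V'$.
\end{itemize}
Combining the two, I would aim to show $s\ge c'd$ with $c'$ a minimum over $\beta\in[0,1]^{\mathcal V'}$ of a continuous positive function, using compactness together with the endpoint analyses $\beta_{i_0}<1$ (direct term) and $\beta_{i_0}=1$ (weight term). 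Finally, set $c:=c'/|\mathcal V'|$, reduced slightly to obtain the strict inequality.
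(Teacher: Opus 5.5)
Your Steps 1 and 2 are essentially sound and match the paper's reduction. You write the policy as $\fv=M\xv_\mathrm{ex}$, note that $\alpha_q=0$ makes the $q$-th row of $\Psi_K$ equal to $\ev_q^\top$, and conclude that $(\xv_\mathrm{PS})_q-(\tilde\xv_\mathrm{PS})_q$ is exactly $\beta_q$ times the average difference over $\mathcal V'$.

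The gap is Step 3. A lower bound on that average that is uniform in $\Db$ does not exist, so no compactness argument can deliver it. Take $n=2$ with a single edge, $\mathcal V'=\{1\}=\{i_0\}$, $q=2$, $\alpha_1=a\in(0,1)$ and $\alpha_2=0$. Then $\Psi_K=\begin{bmatrix}1-a&a\\0&1\end{bmatrix}$ for every $K$, and both predictions equal $(\xv_\mathrm{ex})_1$. Solving the fixed point with $d=x^*_{i_0}-\tilde x^*_{i_0}$ gives
\[
(\xv_\mathrm{PS})_2-(\tilde\xv_\mathrm{PS})_2=\frac{\beta_2(1-a)(1-\beta_1)}{(1-a)(1-\beta_1)+a(1-\beta_2)}\,d .
\]
This tends to $0$ as $\beta_1\to1$ with $\beta_2\in(0,1)$ fixed. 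It is exactly $0$ at $\beta_1=1$: the outer loop is then a strict contraction with rate $1-a(1-\beta_2)<1$, and its fixed point ignores $x_1^*$. At $\beta_1=\beta_2=1$, by contrast, the outer loop is the identity on $\fv$, the stable point is inherited from $\fv^{(1)}=M\xv^*$, and the difference jumps to $d$.

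The mechanism is general. With $B:=(\Psi_K\Db M)_{\mathcal V'\mathcal V'}$, restricting the fixed-point equation to $\mathcal V'$ gives
\[
(I-B)\,(\xv_\mathrm{PS}-\tilde\xv_\mathrm{PS})_{\mathcal V'}=d\,(1-\beta_{i_0})\,(\Psi_K)_{\mathcal V',i_0},
\]
so whenever $\rho(B)<1$ the coordinate $x^*_{i_0}$ enters only through the factor $1-\beta_{i_0}$. Your second bullet (a stationary weight of $i_0$) is valid only when $B$ is stochastic, e.g.\ $\Db=I_n$. In that case the fixed-point equations you subtract in Step 1 no longer determine the difference, the limit depends on the initialization, and $\Db\mapsto\xv_\mathrm{PS}$ is discontinuous. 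So the two endpoint analyses cannot be glued by continuity.

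The paper does not attempt uniformity. It keeps the exact expression above, uses that $(I-B)^{-1}=\sum_{t\ge 0}B^t$ is non-negative, and obtains the bound with
\[
c=\frac{1}{|\mathcal V'|}\,\pmb{1}^\top(I-B)^{-1}\left(\Psi_K(I_n-\Db)\ev_{i_0}\right)_{\mathcal V'} ,
\]
shrunk slightly for strictness, since your Step 2 is really an equality. This constant depends on $\Db$, is positive for $\beta_{i_0}<1$, and vanishes at $\beta_{i_0}=1$ whenever $\rho(B)<1$. The fully performative case is treated separately via the positive left Perron vector $\pmb{y}'$ of the stochastic, primitive block $(\Psi_K M)_{\mathcal V'\mathcal V'}$, with primitivity coming from the virtual-node argument. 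This vector records how the initialization $M\xv^*$ propagates.

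To repair your proposal, replace Step 3 by this case split and let $c$ depend on $\Db$ as well. A constant depending only on $\Da$, $\mathcal G$, $\mathcal G'$ cannot be obtained. The example shows the statement cannot be read that way. It also shows that the paper's own special case needs $B$ to be stochastic, not merely $\beta_i=1$ on $\mathcal V'$.
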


The result explains that changes to the innate opinions of individual $i$ impact the opinion of individual $q$ even if $\alpha_q=0$. Thus, there can be indirect spillover effects through predictive models even if individuals are not susceptible to peer influence.  This is an important  deviation from classical network theory. In the meanwhile, $\mathcal{G}'$ is not necessarily connected, while the platform adds another layer of connectivity among individuals in $\mathcal{V}'$ via the platform-susceptible individuals in $\mathcal{V}/\mathcal{V}'$.

A consequence of these indirect spillover effects is that a consensus can be achieved at equilibrium even in the presence of stubborn individuals.

\begin{theorem}
\label{theorem:consensus_partial_observation}
Consider a network characterized by a simple connected graph $\mathcal{G}$. Assume the platform has access to expressed opinions of individuals in a subgraph $\mathcal{G}'=(\mathcal{V}',\mathcal{E}')$ of $\mathcal{G}$ and $\alpha_i\in(0,1)$ for all $ i\in\mathcal{V}'$. Then, if $\beta_i=1$ for all $i$, the equilibrium $\xv_\mathrm{PS}$ correspond to a consensus. Namely, for any $\xv^*$ there exists a constant $c^*=c^*(x^*)\in[0,1]$ such that: 
\begin{equation}
\lim_{\beta\rightarrow 1}(\xv_\mathrm{PS})_i = c^* \quad \forall i\in[n].
\end{equation} 
\end{theorem}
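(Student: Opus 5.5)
We write the outer loop as a linear recursion in the predictions and identify its fixed points in the limit $\beta\to1$. Let $P\in\mathbb{R}^{n\times n}$ denote the prediction operator of Equation~\ref{eq:mean_estimator}. Its row $i\in\mathcal V'$ is $\ev_i^\top$, and every row $i\notin\mathcal V'$ equals $\frac{1}{|\mathcal V'|}\sum_{j\in\mathcal V'}\ev_j^\top$. Thus $P$ is row-stochastic, and $\fv^{(t+1)}=M_\beta\fv^{(t)}+P\Psi_K(I_n-\Db)\xv^*$ with $M_\beta:=P\Psi_K\Db$. With $\beta_i\equiv\beta<1$, the matrix $M_\beta=\beta P\Psi_K$ has spectral radius $\beta<1$. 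Hence $\fv_\mathrm{PS}=(1-\beta)(I_n-\beta P\Psi_K)^{-1}P\Psi_K\xv^*$ and $\xv_\mathrm{PS}=\Psi_K\big((1-\beta)\xv^*+\beta\fv_\mathrm{PS}\big)$. The claim is then the Abel-limit statement: $(1-\beta)\sum_{m\ge0}\beta^m(P\Psi_K)^{m}\to\pmb 1\pi^\top$ as $\beta\to1$, provided the row-stochastic matrix $Q:=P\Psi_K$ has a \emph{unique} stationary distribution $\pi$, i.e.\ a single closed communicating class. Aperiodicity is not required for Abel limits. Granting this, $\fv_\mathrm{PS}\to\pmb 1\,\pi^\top Q\xv^*=\pmb1\,\pi^\top\xv^*$, and $\xv_\mathrm{PS}\to\Psi_K\pmb1\,\pi^\top\xv^*=c^*\pmb 1$ with $c^*=\pi^\top\xv^*\in[0,1]$, because $\Psi_K$ is row-stochastic. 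The same argument also covers the literal case $\beta_i=1$, read as the limit.

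The core step is to show that $Q$ has a single closed class. First, we check that $\Psi_K$ is row-stochastic with $(\Psi_K)_{ii}\ge 1-\alpha_i>0$ for $i\in\mathcal V'$. Its off-diagonal support contains every edge $(i,j)$ with $\alpha_i>0$, since the $k=1$ term $\Da W$ (either in the sum or as $(\Da W)^K$) contributes $\alpha_i/d_i>0$. In particular, every edge of $\mathcal G'$ appears in $\Psi_K$, because $\alpha_i\in(0,1)$ on $\mathcal V'$.

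Now consider the chain $Q=P\Psi_K$. For a row $i\in\mathcal V'$ we have $Q_{i\cdot}=(\Psi_K)_{i\cdot}$, so from $i$ the chain can move to each $\mathcal G'$-neighbour, and also to nodes outside $\mathcal V'$. For a row $i\notin\mathcal V'$ we have $Q_{i\cdot}=\frac1{|\mathcal V'|}\sum_{j\in\mathcal V'}(\Psi_K)_{j\cdot}$, which puts positive mass on every $j\in\mathcal V'$ through the diagonal entries $(\Psi_K)_{jj}>0$.

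Every state therefore reaches $\mathcal V'$. A state outside $\mathcal V'$ reaches it in one step. A state inside $\mathcal V'$ is already there, and any path leaving $\mathcal V'$ returns in one step, landing on an arbitrary node of $\mathcal V'$. Within $\mathcal V'$, the states communicate: either directly via the connectivity of $\mathcal G'$, or, if $\mathcal G'$ is disconnected, by exiting to some $q\notin\mathcal V'$ and jumping anywhere in $\mathcal V'$. This is the platform-added connectivity mentioned after Proposition~\ref{prop:consensus_mean_estimation}.

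Consequently, every closed class of $Q$ must contain a state of $\mathcal V'$, and all such states lie in one communicating class. That class is the unique closed class, so $\pi$ is unique.

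The main obstacle is the degenerate boundary case in which neither the direct nor the exit route is available. This happens if $\mathcal G'$ is disconnected and no node of $\mathcal V'$ has a path to $\mathcal V\setminus\mathcal V'$ with positive weight, for instance if the relevant neighbours outside $\mathcal V'$ have $\alpha=0$. Connectivity of $\mathcal G$ rules this out. An edge leaving a component of $\mathcal G'$ starts at some $i\in\mathcal V'$ with $\alpha_i>0$, so the transition from $i$ to a node outside that component has positive weight. That node is either outside $\mathcal V'$, so the exit route applies, or inside another component of $\mathcal V'$, so the components are linked directly.

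I would spell this out carefully. I would also verify the Abel-limit claim, for example via the decomposition $Q=\pmb1\pi^\top+R$ on the complement of the eigenvalue $1$, where every remaining eigenvalue $\lambda$ satisfies $|\lambda|\le1$ and $\lambda\ne1$. Then $(1-\beta)(I-\beta\lambda)^{-1}\to0$ for each such $\lambda$; Jordan blocks are handled similarly. Finally, I would note that the limit does not depend on the order in which $\beta_i\to1$ when the $\beta_i$ are homogeneous.
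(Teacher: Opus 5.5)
Your proof is correct in its main line, with one misstated support claim that affects only the disconnected-$\mathcal G'$ case (second paragraph), and it takes a different route from the paper.

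The paper works at $\Lambda_\beta=I_n$ directly. It iterates $\xv_\mathrm{ex}^{(t+1)}=\Psi_K M\xv_\mathrm{ex}^{(t)}$, where $M$ is your $P$. Treating the platform as a virtual node, and using $(\Psi_K)_{ii}>0$, it argues that the $\mathcal V'\times\mathcal V'$ block of $\Psi_K M$ is primitive. Its powers therefore converge to a rank-one limit, and the mean predictor spreads the consensus to $\mathcal V\setminus\mathcal V'$. The passage $\beta\to1$ is delegated to the sandwich argument in the proof of Theorem~\ref{theorem:consensus}.

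You instead stay at $\beta<1$, where uniqueness of the stable point is immediate from $\rho(\beta P\Psi_K)=\beta$. You read $(1-\beta)(I_n-\beta Q)^{-1}$ as an Abel mean of the stochastic matrix $Q=P\Psi_K$, which is the cyclic counterpart of the paper's $\Psi_K M$. The claim then reduces to uniqueness of the stationary distribution. This needs only a single closed class rather than primitivity, proves the displayed limit directly, and gives $c^*=\pi^\top\xv^*$ explicitly.

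The paper's route instead speaks to the literal $\beta=1$ retraining dynamics, and for the mean estimator these are a different object. Starting from $\fv^{(1)}=P\xv^*$, they converge to $\pi^\top P\xv^*$; aperiodicity holds because $Q_{ii}=(\Psi_K)_{ii}>0$ on $\mathcal V'$. This value generally differs from $\pi^\top\xv^*$: with $n=2$, $\mathcal V'=\{1\}$, $K=1$ one gets $x_1^*$ versus $(1-\alpha_1)x_1^*+\alpha_1x_2^*$. So your remark that the literal case is "covered" is valid only as a reading of the displayed limit.

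One supporting claim is wrong and needs repair. For $K\ge 2$ the $k=1$ summand of $\Psi_K$ is $\Da W(I_n-\Da)$, which contributes $\alpha_i(1-\alpha_j)/d_i$, not $\alpha_i/d_i$. This vanishes when $\alpha_j=1$, and for $K=\infty$ the whole $j$-th column of $\Psi_\infty$ is then zero. Edges of $\mathcal G'$ are unaffected, since both endpoints have $\alpha\in(0,1)$, so the connected case stands.

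Your one-edge argument for disconnected $\mathcal G'$ does fail, however. Take the path $1$--$3$--$2$ with $\mathcal V'=\{1,2\}$, $\alpha_3=1$, $K=\infty$: then $(\Psi_\infty)_{13}=0$, even though $(\Psi_\infty)_{12}>0$. Also, showing that each single component has an exit does not by itself make all components communicate.

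A clean fix reads $(\Psi_K)_{ij}$ as the probability that a walk from $i$ ends at $j$. At each node $v$ the walk moves to a uniform neighbour with probability $\alpha_v$, otherwise stops, and it is halted at time $K$. Then:
\begin{itemize}
\item For any nonempty $C\subsetneq\mathcal V'$, take a shortest path in $\mathcal G$ from $C$ to $\mathcal V'\setminus C$; its interior lies outside $\mathcal V'$.
\item With positive probability the walk follows this path and ends outside $C$. It ends at the first interior node with $\alpha<1$, at the node where it is halted at time $K$, or at the endpoint in $\mathcal V'\setminus C$.
\item Hence no such $C$ is closed under $Q$. Combined with the one-step jump from $\mathcal V\setminus\mathcal V'$ onto all of $\mathcal V'$, every closed class contains $\mathcal V'$.
\end{itemize}
This gives the uniqueness of $\pi$ that you need.
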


The proof of Theorem~\ref{theorem:consensus_partial_observation} goes through by observing the platform acts as a virtual node that connects nodes in $\mathcal{G}'$ via the mean estimation. The difference to the result in Theorem~\ref{theorem:consensus} is that we do not require $\alpha_i>0$ for every $i$. Instead,  individuals for whom the platform derives predictions from alternate observations, can have peer susceptibility $\alpha_i=0$. Because rather than through direct peer influence,  they get pulled towards the consensus opinion through predictions instead.

\section{Equilibrium response to platform intervention}

The qualitative differences in how opinions spread across the population through platform and peer interactions affects the equilibrium response to platform interventions. In the following we focus on platform interventions that systematically deviate from perfect prediction for a single individual. As our next result will show, such interventions impact equilibrium opinions across the population in a way that is underestimated when considering peer dynamics in isolation. 

Let us start with classical intuition from social networks. It tells us how platform interventions propagate through the network along social connections. In that way, changes to the initial opinion of individual~$j$, naturally influences any individual $l$ within the same connected component of the graph $\mathcal G$ during subsequent steps of peer interaction. In the same way, individual $l$ can be influenced by platform recommendations given to individual $j\neq l$, even if individual $l$ itself is not susceptible to platform influence. What we show next is that the peer-platform co-evolution amplifies these effects.

To make this concrete, consider a platform that deviates from perfect prediction for a single individual and determines predictions as follows: For a fixed $j\in[n]$ let
\begin{equation}(\fv^{(t+1)})_i=\begin{cases} s& \text{for }i=j \\
(\xv_\mathrm{ex}^{(t)})_i &\text{otherwise}.
\end{cases}
\label{eq:steering}
\end{equation}
In words, the platform intervenes by repeatedly predicting $s\in[0,1]$ for individual $j$ and remains faithful to perfect prediction for all other individuals.

To make this interesting we assume that individual $j$ who is targeted by the platform has platform susceptibility $\beta_j>0$ and we study the indirect platform influence on individual~$l\neq j$. To attribute the observed effect to spillovers, we let  $\beta_l=0$. Further, we let $\xv^*=\mathbf 0$ so that absent intervention we have $\xv_\mathrm{PS}=\mathbf 0$ and any deviation from this consensus state can be attributed to the platform intervention~$s$. We set $\beta_i=\gamma$ to control the platform influence on individuals $i\notin \{j,l\}$ exposed to perfect prediction. The next result explains how the effect of the intervention on equilibrium opinions grows with the platform susceptibility of peers. We refer to Appendix~\ref{app:spillover} for a precise characterization of $\xv_\mathrm{PS}$.

\begin{theorem}[Indirect platform influence] 
\label{prop:spillover}
Consider a network characterized by a simple fully-connected graph $\mathcal{G}$, and peer susceptibilities $\alpha_i\equiv\alpha\in(0,1)$. Let  $\xv^*=\mathbf 0$ and consider a platform that always predicts $s>0$ for randomly sampled individual $j$ according to Equation~\eqref{eq:steering}. Consider $K=\infty$. Let platform susceptibilities satisfy $\beta_i>0$ for $i=j$, $\beta_i=0$ for randomly sampled individual~$i=l$, and $\beta_i=\gamma\in(0,1]$ for $ i\notin \{j,l\}$. Then, the influence of intervention~$s$ applied to individual $j$ on individual~$l$ strictly grows with the platform susceptibility $\gamma$ of their peers. In particular, it holds that 
\[
\frac{\partial\Delta^\gamma_l}{\partial\gamma}>0\quad\text{for}\quad \Delta_l^\gamma:=(\xv_\mathrm{PS})_l-(\xv_\mathrm{PS})_l^{\gamma=0}.\]
Similarly, for the broader population we have 
\begin{align*}
        \frac{\partial \mathrm{Mean}(\xv_\mathrm{PS}) }{\partial \gamma}> 0.
\end{align*}
\end{theorem}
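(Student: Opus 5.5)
The plan is to use the closed form of the performatively stable point and differentiate it in $\gamma$, relying on entrywise positivity throughout. With $K=\infty$, each round of peer interaction returns $\mathrm{FJ}(\xv_\mathrm{init})=\Psi\,\xv_\mathrm{init}$, where
\[
\Psi:=\Psi_\infty=(1-\alpha)(I_n-\alpha W)^{-1}=(1-\alpha)\sum_{k\ge0}\alpha^kW^k .
\]
Since $\mathcal G$ is complete, $W=\tfrac{1}{n-1}(\mathbf 1\mathbf 1^\top-I_n)$. Two properties of $\Psi$ follow.
\begin{itemize}
\item $\Psi$ is row-stochastic.
\item $\Psi$ is entrywise positive: the $k=0$ term gives the diagonal, the $k=1$ term gives every off-diagonal entry, and $\alpha\in(0,1)$.
\end{itemize}
We assume $n\ge3$, so that the peer set $\mathcal P:=[n]\setminus\{j,l\}$ is nonempty; otherwise $\gamma$ does not enter the model.

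\textbf{Step 1: fixed-point equation.} Take $\xv^*=\mathbf 0$ and the policy~\eqref{eq:steering}. At a stable point, $\fv_\mathrm{PS}=\xv_\mathrm{PS}+(s-(\xv_\mathrm{PS})_j)\ev_j$, and therefore
\[
\xv_\mathrm{PS}=\Psi\Lambda_\beta\fv_\mathrm{PS}=\Psi D_\gamma\,\xv_\mathrm{PS}+\beta_j s\,\Psi\ev_j ,
\]
where $D_\gamma:=\gamma P$ and $P:=\sum_{i\in\mathcal P}\ev_i\ev_i^\top$. The coordinate $j$ drops out because the platform overrides it with $s$, and coordinate $l$ drops out because $\beta_l=0$.

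\textbf{Step 2: invertibility.} I check that $\rho(\Psi D_\gamma)<1$ for every $\gamma\in(0,1]$. Row $i$ of $\Psi D_\gamma$ sums to $\gamma\sum_{m\in\mathcal P}\Psi_{im}\le 1-\Psi_{ij}<1$, using positivity of $\Psi$. Hence $\|\Psi D_\gamma\|_\infty<1$, the Neumann series converges, and
\[
\xv_\mathrm{PS}(\gamma)=\beta_j s\,(I_n-\gamma\Psi P)^{-1}\Psi\ev_j=\beta_j s\sum_{k\ge0}\gamma^k(\Psi P)^k\Psi\ev_j .
\]
This gives existence and uniqueness of the stable point. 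It also gives convergence of the retraining loop, since the outer map is a contraction. The value at $\gamma=0$ is $\beta_j s\,\Psi\ev_j$, so $\Delta_l^\gamma=\beta_j s\,\ev_l^\top\big[(I-\gamma\Psi P)^{-1}-I\big]\Psi\ev_j$.

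\textbf{Step 3: derivative.} Write $M_\gamma:=(I-\gamma\Psi P)^{-1}$. Differentiating gives
\[
\frac{\partial \xv_\mathrm{PS}}{\partial\gamma}=\beta_j s\,M_\gamma\Psi P M_\gamma\Psi\ev_j .
\]
By the series expansion, $M_\gamma$ is entrywise nonnegative with diagonal entries at least $1$. The factors $\Psi$ are entrywise positive and $P\ne0$. So the vector $\Psi P M_\gamma\Psi\ev_j$ is entrywise positive, and applying $M_\gamma\ge I$ keeps it positive. Since $\beta_j s>0$, every coordinate of $\partial\xv_\mathrm{PS}/\partial\gamma$ is strictly positive. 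In particular $\partial\Delta_l^\gamma/\partial\gamma>0$, and $\partial\,\mathrm{Mean}(\xv_\mathrm{PS})/\partial\gamma=\tfrac1n\mathbf 1^\top\partial\xv_\mathrm{PS}/\partial\gamma>0$. Because $j,l$ are arbitrary indices, the random sampling of $j,l$ plays no role beyond $j\neq l$.

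\textbf{Main obstacle.} The step needing most care is Step 2: it must hold uniformly up to $\gamma=1$, where the peers are fully unanchored. It succeeds because the mass of $\Psi$ on column $j$ always leaks out, as the platform resets coordinate $j$ to $s$. The positivity of $\Psi$ from completeness of $\mathcal G$ is what makes both the contraction bound and the strict inequality in Step 3 work.
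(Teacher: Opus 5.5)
Your argument is correct, and it takes a different route from the paper.

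The paper works with an explicit closed form. It writes $\Psi_\infty=\psi_1 I_n+\psi_2\mathbf 1\mathbf 1^\top$ for the complete graph and inverts $I_n-\gamma\Psi_\infty P$ through two Sherman--Morrison rank-one corrections, which remove $\ev_l$ and then $\ev_j$ from $\gamma I_n$. This expresses $\xv_\mathrm{PS}$ as a combination of $\ev_j$, $\ev_l$ and $\mathbf 1$ with rational coefficients in $\gamma,\psi_1,\psi_2$. The paper then differentiates $(\xv_\mathrm{PS})_l$ and the mean by hand.

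You instead keep the resolvent $M_\gamma=(I-\gamma\Psi P)^{-1}$ abstract. You get invertibility from the row-sum bound and read off the sign of $M_\gamma\Psi P M_\gamma\Psi\ev_j$ from entrywise positivity.

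Your route buys generality and robustness. It uses only that $\Psi_\infty$ is row-stochastic and entrywise positive. So it applies unchanged to any connected graph with heterogeneous $\alpha_i\in(0,1)$, since $\sum_k(\Lambda_\alpha W)^k(I-\Lambda_\alpha)$ is then positive. It also shows that every coordinate of $\xv_\mathrm{PS}$ increases in $\gamma$, not only coordinate $l$ and the mean.

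The explicit route buys quantitative information, such as the size of the amplification, but the algebra is fragile. Solving the fixed point directly, the peers share by symmetry a common value $c$ with $c\,(1-\gamma+2\gamma\psi_2)=\psi_2\beta_j s$. This gives $(\xv_\mathrm{PS})_l=\beta_j s\,\psi_2(1-\gamma\psi_1)/(1-\gamma+2\gamma\psi_2)$. Its denominator lacks the $-\gamma^2\psi_1\psi_2$ term in the paper's formula; for instance, at $\gamma=1$ the peers sit at $\beta_j s/2$. The theorem's conclusions still hold, since $\partial_\gamma(\xv_\mathrm{PS})_l=\beta_j s\,(n-2)\psi_2^2/(1-\gamma+2\gamma\psi_2)^2>0$. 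This also confirms your caveat that $n\ge 3$ is needed for strict monotonicity.
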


To appreciate the result, note that  $(\xv_\mathrm{PS})_l^{\gamma=0}$ equals the opinion for individual $l$ under the FJ dynamics initialized with $\xv_\mathrm{init}=s\beta_j \ev_j $. This corresponds to the classical network equilibrium under platform intervention. Given the network topology each individual will equally be biased towards $s$ during peer interactions and we have $(\xv_\mathrm{PS})_l^{\gamma=0}=s\beta_j\ev_j^\top \Psi_K \ev_l  >0$.  Now, what Theorem~\ref{prop:spillover} tells us is that the presence of a perfect predictor except for individual $j$ amplifies the indirect influence of the intervention $s$ on the equilibrium opinions. The influence strictly grows with the platform susceptibility $\gamma$ of  peers in the population. As a consequence, classical models of peer interaction that treat platforms as static may miss the amplification of the intervention effect because of this repeated coupling of platform and peer influence.

\section{Experimental results} 
\label{sec:simulations}

To complement our theoretical study, we conduct semi-synthetic experiments on real social network data using popular machine learning methods to derive predictions. For this purpose, we extract features for training the predictive model from public data about individual's profile information. We then simulate our model and investigate how practical deviations from perfect prediction impact dynamics and equilibria.

\vspace{-0.3cm}
\paragraph{Simulation environment.}
We build on the Pokec and Yelp dataset.\footnote{The dataset can be downloaded from~\url{https://snap.stanford.edu/data/soc-Pokec.html} and~\url{https://business.yelp.com/data/resources/open-dataset/}} The Pokec dataset contains anonymized profile data of more than 1.6 million individuals in different modalities together with an underlying social network graph. For our simulations, we sample a connected component of 2,163 nodes which we treat as an undirected graph. The Yelp dataset contains numerous reviews with user profiles describing more than 150,000 businesses. In our simulations, we build an undirected connected network of 1,789 users who are friends with each other and left reviews for the randomly chosen business ``Acme Oyster House''.

\begin{figure*}[t]
\centering
        \begin{subfigure}[b]{0.44\linewidth}
            \centering
    \includegraphics[width=\linewidth]{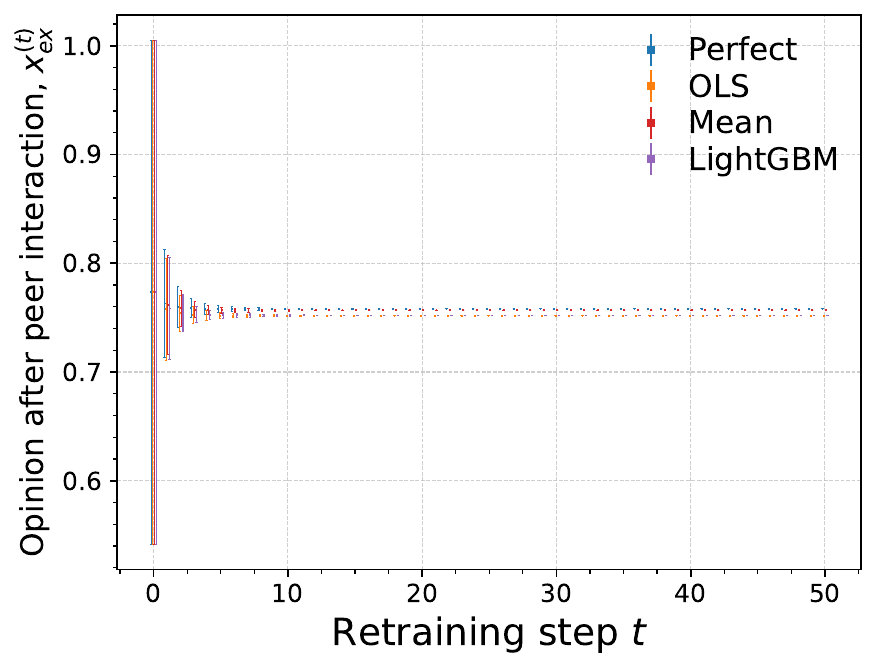}
            \caption{Opinions across retraining steps (Yelp)}
            \label{fig:sl_retrain_steps_yelp}
        \end{subfigure}
        \hfill
            \begin{subfigure}[b]{0.46\linewidth}
            \centering
            \includegraphics[width=\linewidth]{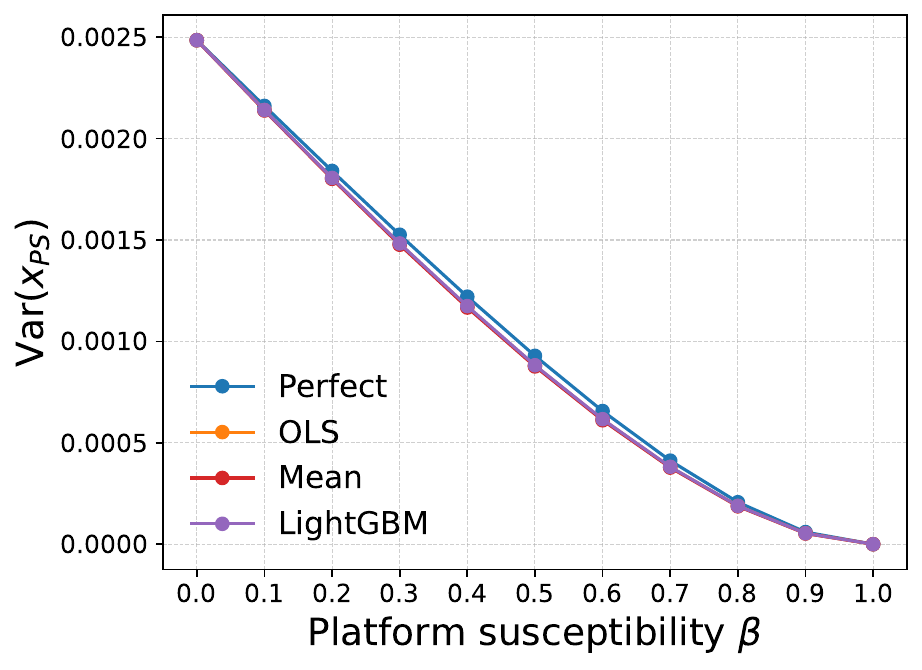}
            \caption{Opinions at equilibrium (Yelp)}
            \label{fig:platform_variance_sus_yelp}
        \end{subfigure}
    \caption{ In $(a)$, we show how performativity homogenizes opinions across retraining steps. The $x$-axis denotes the retraining step $t$, and the $y$-axis corresponds to individuals' expressed opinions in each times step, where the error bars indicate the variance. In (b), we show how platform susceptibility decreases variance of opinions at equilibrium. The $x$-axis denotes the varying homogeneous platform susceptibility while the $y$-axis denotes the variance of opinions in equilibrium. }
    \label{fig:consensus_sl_yelp} 
\end{figure*}

To mimic a machine learning setting we extract node features from the available profile information using a transformer model. For the Pokec network, we choose ``relation$\_$to$\_$smoking'' as the target of prediction. This field contains free form text from which we extract a sentiment scores as a proxy for individuals' opinion towards smoking. For the Yelp dataset, we choose the feature ``stars'' of the reviews as the opinions of users towards the business and normalize it between 0 and 1. We simulate retraining and the dynamics over the network graph (Algorithm~\ref{alg:retrainMain}), treating the opinion extracted from the data as the individuals' innate opinions. For the platform's learning algorithm we contrast perfect prediction (Perfect) with a) mean estimation (Mean), b) Ordinary Least Square (OLS) with predictions clipped to $[0,1]$,  and c) a gradient-boosting method, Light Gradient-Boosting Machine~\citep{ke2017lightgbm}~(LightGBM).

If not stated otherwise, we use the following settings for simulations. We use $K=100$ to approximate convergence of the inner loop dynamics in Algorithm~\ref{alg:retrainMain}. We perform $T=50$ steps of retraining after which the dynamics approximately reached performative stability.  We use a heterogeneous setup where we sample $\{\beta_i\}_{i\in[n]}$ and $\{\alpha_i\}_{i\in[n]}$ from a normal distribution with mean 0.9 and standard deviation 0.1, and the resulting values are clipped within $[0.01, 0.99]$. If the predictions are not perfect, we assume opinions of 80\% of the nodes are observed and use it as the training dataset to predict the opinions of the remaining 20\% population, and we denote these two sets of individuals as $O$ and $U$ respectively. All the simulations are conducted on a commodity hardware using Python 3.13. The codes for the plots can be accessed via \href{https://github.com/wujiduan/reaching-a-consensus-in-predictive-loops.git}{our repository}. The following figures in this section is generated using Yelp dataset. Simulations with Pokec dataset and further results on peer susceptibility are provided in Appendix~\ref{app:more_simulations}.

\begin{figure}
        \begin{subfigure}[b]{0.45\linewidth}
            \centering
            \includegraphics[width=\linewidth]{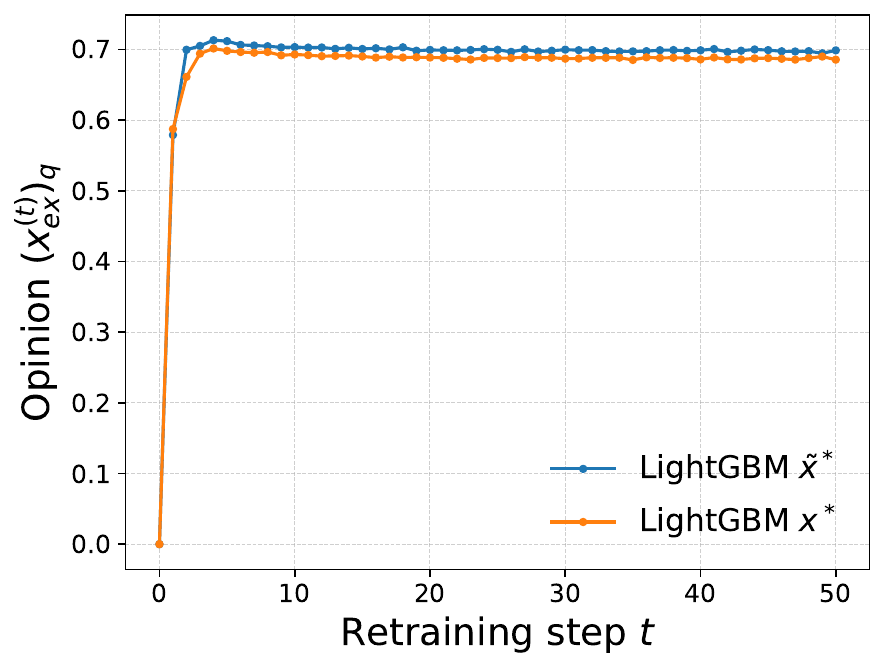}
            \caption{Evolving opinion of stubborn individual (Yelp)}
            \label{fig:stubborn_yelp}
        \end{subfigure}
        \hfill
        \begin{subfigure}[b]{0.45\linewidth}
            \centering
            \includegraphics[width=\linewidth]{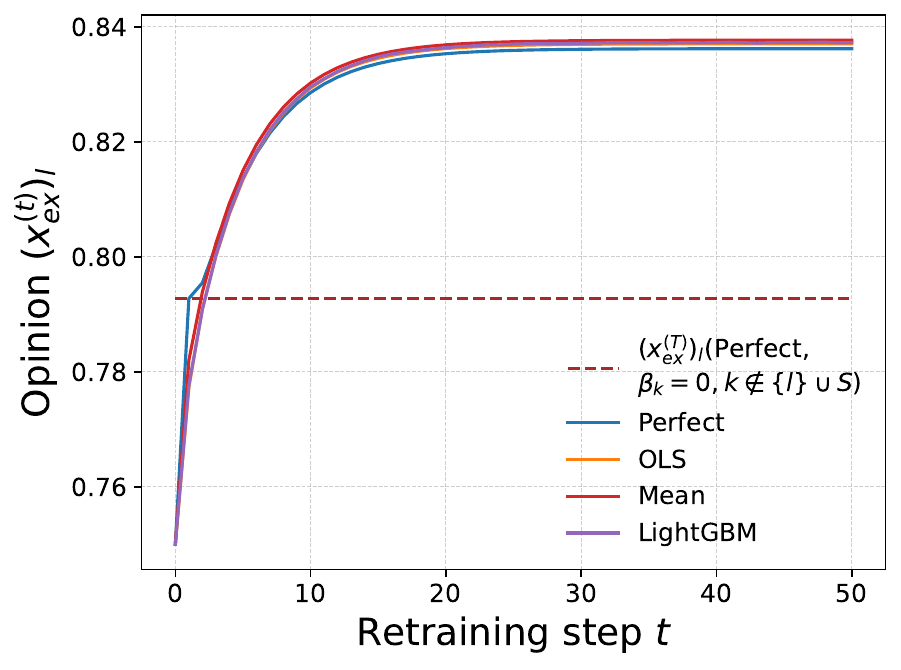}
            \caption{Indirect effect of platform intervention (Yelp) }
            \label{fig:steer_yelp}
        \end{subfigure}
        \caption{ In $(a)$, we show how a stubborn individual $q\in U$ with $\alpha_q=0$ and $x^*_q=0$ is influenced by their peers across retraining steps for two different scenarios. In $(b)$, we show how indirect platform influence increases over retraining steps. We compare different the opinion of the stubborn individual $l$ with $\beta_l=0$. The purple dashed line denotes the opinion of the stubborn individual if we set~$\beta_k=0$,~$k\notin \{l\}\cup S$. }
\end{figure}

\vspace{-0.3cm}

\subsection{Retraining and equilibria}

We visualize how opinions evolve across retraining steps for the different learning settings. In Figure~\ref{fig:sl_retrain_steps_yelp}, we plot the mean and variance of opinions $\xv_\mathrm{ex}^{(t)}$ across retraining steps for $\beta_i=1$ and heterogeneous peer susceptibilities.  We see that for all four predictive strategies, opinions converge, and the variance approximately converges to $0$ as $t$ increases. This illustrates convergence beyond perfect prediction, and suggests that our intuition of performativity as an explanation of consensus is not specific to perfect prediction, but also extends to parametric predictors.

In Figure~\ref{fig:platform_variance_sus_yelp} we again hold the heterogeneous peer susceptibilities fixed, but replace platform susceptibilities by $\beta_i\equiv\beta$, and vary $\beta$ between $0$ and $1$. We can see that the variance of the equilibrium opinion decreases as $\beta$ grows. This is evaluated based on the closed form expression for the equilibrium with perfect prediction. But note that the Yelp network is not regular which suggests that the take-away for Proposition~\ref{prop:mean_variance_sl} is not bound to a specific network structure.

\vspace{-0.3cm}

\subsection{Indirect peer influence}

Next, we show how platform interactions creates indirect peer influence beyond mean estimation. In particular, we extrapolate predictions to unseen individuals using parametric models. In line with Theorem~\ref{prop:consensus_mean_estimation}, we randomly sample an individual~$q$ in unobserved group $U$ in the population and set $\alpha_q=0$, $x_q^*=0$. We then randomly select $10\%$ of individuals in observed group $O$ and set their innate opinions to $1$, and we denote this new innate opinions as $\tilde{\xv}^*$. With the new innate opinions $\tilde{\xv}^*$, we observe how the opinion $(\xv_\mathrm{ex}^{(t)})_q$ converge to a value different from the equilibrium opinion under the original innate opinions $\xv^*$.
We use the LightGBM predictions to visualize this in Figure~\ref{fig:stubborn_yelp}. Empirical results of mean estimation and OLS are deferred to Appendix~\ref{app:more_simulations}.

Next, we show how peer interactions amplify platform influence, as analyzed in Theorem~\ref{prop:spillover}. To this end, we use simulations on the Pokec network, which is not fully connected. We also adopted the same $\{\alpha_i\}$ and $\xv^*$ as in producing Figure~\ref{fig:sl_retrain_steps_yelp}. We randomly choose an individual~$l$ and set $\beta_l=0$. Then, instead of one individual, we randomly sample a set $S$ that contains $10\%$ of individuals of the network and steer them towards $s=1$. In Figure~\ref{fig:steer_yelp}, we observe the opinion of individual~$l$ increasingly deviates from $x_l^*$ and $(\xv_\mathrm{ex}^{(1)})_l$ as retraining step $t$ increases. The purple dashed line shows the setting where $\beta_i=0$ for individuals who are not steered, i.e., $i\notin S$, this corresponds to the outcomes after a single instance of peer interactions. We see how opinions deviate from this state through repeated peer-platform interactions.

\section{Discussion}
\label{sec:future_direction}

With theoretical results and simulations we demonstrated how predictive loops can qualitatively alter equilibria in social networks. We identify two key mechanisms that lead to divergent behavior. First, learning systems introduce memory into peer dynamics by deriving future predictions from past observations, thereby coupling opinions across time. Second, learning systems create new pathways for peer influence beyond the topology of social networks by repeatedly aggregating data across a population and distilling it into personalized predictions. Using a minimal model, we provide intuition for how these mechanisms interact with peer dynamics and alter equilibrium outcomes. While the number of unexplored questions at the intersection of performative prediction and network science, ranging from empirical studies to theoretical understanding, is vast, there are several natural extensions of our model that we discuss here.

For example, in our model we abstract learning systems as non-parametric predictors that have access to perfect measurements of opinions. Extending the study to incorporate measurement uncertainty, stochasticity, or limited information, as well as parametric prediction functions would illuminate other practical dimensions of performative effects. Our simulations suggest that many of the qualitative findings hold beyond perfect prediction. Further, we assume that performative effects are self-fulfilling, and leave the network topology, as well as peer susceptibilities untouched. However, related work suggests that platforms can also change characteristics of the network~\citep[][e.g.,]{zhou2024modeling, chitra2019understanding}. It would be interesting to investigate how such alternate assumptions of platform influence interplay with predictive loops. In that context, interventions at the level of susceptibilities~\citep{abebe18persuade} would also be interesting to study. Similarly, our results suggest that peer dynamics can amplify platform effects. Rethinking established influence maximization results while keeping this co-evolution in mind might help develop alternative explanations for reconciling empirical findings with theoretical models and explain the power digital platforms hold over individuals.

More broadly, the need to understand how digital platforms reshape our society is not only scientific but also regulatory. Recognizing the important role that emerging dynamics involving platforms and users play, the Digital Services Act explicitly calls for greater transparency~\citep{oecd25}:
\begin{quote}
``[...] to shed light on how the algorithms and processes deployed by these platforms influence the way information flows in our society, and influence individual platform users'' 
\end{quote}
The effect of platforms over users is also a central concern in digital antitrust~\citep{stigler19}. Linking our insights to measures of performative power~\citep{hardt2022power,mendler2024engine} could help support future investigations into the economic power of digital platforms. In particular, our results suggest that standard causal designs might underestimate the performative power of digital platforms. Thus, gathering valid empirical estimates of platform effects, keeping network effect in mind, would be an important next step.

\section*{Acknowledgements}
Rediet Abebe and Celestine Mendler Dünner acknowledge the financial support of the Hector Foundation. Abebe was supported by the Andrew Carnegie Fellowship Program. Jiduan Wu would like to thank for the financial support from the Max Planck ETH Center for Learning Systems (CLS). We would like to thank Yatong Chen, Ana-Andreea Stoica, and anonymous reviewers for helpful feedback on the manuscript.

\bibliographystyle{ACM-Reference-Format}
\bibliography{ref}

@article{friedkin1990social,
  title={Social influence and opinions},
  author={Friedkin, Noah E and Johnsen, Eugene C},
  journal={Journal of mathematical sociology},
  volume={15},
  number={3-4},
  pages={193--206},
  year={1990},
  publisher={Taylor \& Francis}
}

@article{degroot1974reaching,
  title={Reaching a consensus},
  author={DeGroot, Morris H},
  journal={Journal of the American Statistical association},
  volume={69},
  number={345},
  pages={118--121},
  year={1974},
  publisher={Taylor \& Francis}
}

@incollection{friedkin1999social,
  author    = {Noah E. Friedkin and Eugene C. Johnsen},
  title     = {Social Influence Networks and Opinion Change},
  booktitle = {Advances in Group Processes},
  year      = {1999},
  volume    = {16},
  pages     = {1--29},
  publisher = {Emerald},
}

@article{rainer2002opinion,
	author = {Hegselmann Rainer and Ulrich Krause},
	journal = {Journal of Artificial Societies and Social Simulation},
	number = {3},
	title = {Opinion Dynamics and Bounded Confidence: Models, Analysis and Simulation},
	volume = {5},
	year = {2002}
}

@article{perra2019modelling,
  title={Modelling opinion dynamics in the age of algorithmic personalisation},
  author={Perra, Nicola and Rocha, Luis EC},
  journal={Scientific reports},
  volume={9},
  number={1},
  pages={7261},
  year={2019},
  publisher={Nature Publishing Group UK London}
}

@article{huszar2022algorithmic,
  title={Algorithmic amplification of politics on Twitter},
  author={Husz{\'a}r, Ferenc and Ktena, Sofia Ira and O’Brien, Conor and Belli, Luca and Schlaikjer, Andrew and Hardt, Moritz},
  journal={Proceedings of the national academy of sciences},
  volume={119},
  number={1},
  pages={e2025334119},
  year={2022},
  publisher={National Academy of Sciences}
}

@incollection{peralta2025opinion,
  title={Opinion dynamics in social networks: From models to data},
  author={Peralta, Antonio F and Kert{\'e}sz, J{\'a}nos and I{\~n}iguez, Gerardo},
  booktitle={Handbook of Computational Social Science},
  pages={384--406},
  year={2025},
  publisher={Edward Elgar Publishing Limited}
}

@InProceedings{cheng2023causal,
  title = 	 {Causal Inference out of Control: Estimating Performativity without Treatment Randomization},
  author =       {Cheng, Gary and Hardt, Moritz and Mendler-D\"{u}nner, Celestine},
  booktitle = 	 {International Conference on Machine Learning},
  pages = 	 {8077--8103},
  year = 	 {2024}
}

@misc{stigler19,
	author = {{Stigler Committee}},
		month = {September},
	title = {Final Report: Stigler Committee on Digital Platforms},
	year = {2019}}

@inproceedings{PZMH20,
        author = {Perdomo, Juan C. and Zrnic, Tijana and Mendler-D\"{u}nner, Celestine and Hardt, Moritz},
        booktitle={International Conference on Machine Learning},
        pages = {7599-7609},
        title = {Performative Prediction},
        volume = {119},
        year = {2020},
        organization={PMLR}
}

@inproceedings{mendler20stochasticPP,
        author = {Mendler-D\"{u}nner, Celestine and Perdomo, Juan and Zrnic, Tijana and Hardt, Moritz},
        booktitle = {Advances in Neural Information Processing Systems},
        pages = {4929--4939},
        title = {Stochastic Optimization for Performative Prediction},
        volume = {33},
        year = {2020}
}

@inproceedings{miller2021outside,
        title={Outside the echo chamber: Optimizing the performative risk},
        author={Miller, John P and Perdomo, Juan C and Zrnic, Tijana},
        booktitle={International Conference on Machine Learning},
        pages={7710--7720},
        year={2021},
        organization={PMLR}
}

@inproceedings{jag21alt,
        author = {Meena Jagadeesan and Celestine Mendler{-}D{\"{u}}nner and Moritz Hardt},
        booktitle = {International Conference on Machine Learning},
        title = {Alternative Microfoundations for Strategic Classification},
        year = {2021},
        pages = 	 {4687--4697},
        volume = 	 {139},
        publisher = {PMLR}
}

@inproceedings{mendler22causal,
        title={Anticipating Performativity by Predicting from Predictions},
        author={Celestine Mendler-D{\"u}nner and Frances Ding and Yixin Wang},
        booktitle={Advances in Neural Information Processing Systems},
        year={2022},
        articleno = {2260}
        }

@article{drusvyatskiy23stochastic,
author = {Drusvyatskiy, Dmitriy and Xiao, Lin},
title = {Stochastic Optimization with Decision-Dependent Distributions},
journal = {Mathematics of Operations Research},
volume = {48},
number = {2},
pages = {954-998},
year = {2023},
}

@InProceedings{izzo21gd,
  title = 	 {How to Learn when Data Reacts to Your Model: Performative Gradient Descent},
  author =       {Izzo, Zachary and Ying, Lexing and Zou, James},
  booktitle = 	 {International Conference on Machine Learning},
  pages = 	 {4641--4650},
  year = 	 {2021},
  volume = 	 {139},
  publisher =    {PMLR}
}

@inproceedings{hardt2022power,
  title={Performative power},
  author={Hardt, Moritz and Jagadeesan, Meena and Mendler-D{\"u}nner, Celestine},
  booktitle={Advances in Neural Information Processing Systems},
  volume={35},
  pages={22969--22981},
  year={2022}
}

@inproceedings{hardt16strategic,
author = {Hardt, Moritz and Megiddo, Nimrod and Papadimitriou, Christos and Wootters, Mary},
title = {Strategic Classification},
year = {2016},
booktitle = {ACM Conference on Innovations in Theoretical Computer Science},
pages = {111–122},
numpages = {12},
}

@article{wang2023constrained,
  title={Constrained optimization with decision-dependent distributions},
  author={Wang, Zifan and Liu, Changxin and Parisini, Thomas and Zavlanos, Michael M and Johansson, Karl H},
  journal={IEEE Transactions on Automatic Control},
  year={2025},
  publisher={IEEE}
}

@InProceedings{mendler2024engine,
  title={An engine not a camera: Measuring performative power of online search},
  author={Mendler-D{\"u}nner, Celestine and Carovano, Gabriele and Hardt, Moritz},
  booktitle={Advances in Neural Information Processing Systems},
  year={2024}
}

@article{hardt25sts,
author = {Moritz Hardt and Celestine Mendler-Dünner},
title = {Performative Prediction: Past and Future},
journal = {Statistical Science},
volume = {40},
number = {3},
pages = {417-436},
year = {2025}
}

@book{pariser2011filter,
  title={The filter bubble: How the new personalized web is changing what we read and how we think},
  author={Pariser, Eli},
  year={2011},
  publisher={Penguin}
}

@article{adomavicius2013recommender,
  title={Do recommender systems manipulate consumer preferences? A study of anchoring effects},
  author={Adomavicius, Gediminas and Bockstedt, Jesse C and Curley, Shawn P and Zhang, Jingjing},
  journal={Information Systems Research},
  volume={24},
  number={4},
  pages={956--975},
  year={2013},
  publisher={INFORMS}
}

@article{cinelli2021echo,
  title={The echo chamber effect on social media},
  author={Cinelli, Matteo and De Francisci Morales, Gianmarco and Galeazzi, Alessandro and Quattrociocchi, Walter and Starnini, Michele},
  journal={Proceedings of the national academy of sciences},
  volume={118},
  number={9},
  pages={e2023301118},
  year={2021},
  publisher={National Academy of Sciences}
}

@article{del2016spreading,
  title={The spreading of misinformation online},
  author={Del Vicario, Michela and Bessi, Alessandro and Zollo, Fabiana and Petroni, Fabio and Scala, Antonio and Caldarelli, Guido and Stanley, H Eugene and Quattrociocchi, Walter},
  journal={Proceedings of the national academy of Sciences},
  volume={113},
  number={3},
  pages={554--559},
  year={2016},
  publisher={National Academy of Sciences}
}

@article{holley1975ergodic,
  title={Ergodic theorems for weakly interacting infinite systems and the voter model},
  author={Holley, Richard A and Liggett, Thomas M},
  journal={The annals of probability},
  pages={643--663},
  year={1975},
  publisher={JSTOR}
}

@article{yildiz2013binary,
  title={Binary opinion dynamics with stubborn agents},
  author={Yildiz, Ercan and Ozdaglar, Asuman and Acemoglu, Daron and Saberi, Amin and Scaglione, Anna},
  journal={ACM Transactions on Economics and Computation (TEAC)},
  volume={1},
  number={4},
  pages={1--30},
  year={2013},
  publisher={ACM New York, NY, USA}
}

@misc{oecd25,
year ={2025},
title ={Social Media Governance Policy Brief: How the DSA can enable a public science of digital platform social impacts},
author = {OECD},
note = {\url{https://oecd.ai/en/wonk/documents/social-media-governance-policy-brief-how-the-dsa-can-enable-a-public-science-of-digital-platform-social-impacts-policy-brief}}
}

@article{jia2015opinion,
  title={Opinion dynamics and the evolution of social power in influence networks},
  author={Jia, Peng and MirTabatabaei, Anahita and Friedkin, Noah E and Bullo, Francesco},
  journal={SIAM review},
  volume={57},
  number={3},
  pages={367--397},
  year={2015},
  publisher={SIAM}
}

@article{friedkin2016theory,
  title={A theory of the evolution of social power: Natural trajectories of interpersonal influence systems along issue sequences},
  author={Friedkin, Noah E and Jia, Peng and Bullo, Francesco},
  journal={Sociological Science},
  volume={3},
  pages={444--472},
  year={2016}
}

@article{proskurnikov2017tutorial,
  title={A tutorial on modeling and analysis of dynamic social networks. Part I},
  author={Proskurnikov, Anton V and Tempo, Roberto},
  journal={Annual Reviews in Control},
  volume={43},
  pages={65--79},
  year={2017},
  publisher={Elsevier}
}

@article{jia2019opinion,
  title={Opinion dynamics and social power evolution: A single--timescale model},
  author={Jia, Peng and Friedkin, Noah E and Bullo, Francesco},
  journal={IEEE Transactions on Control of Network Systems},
  volume={7},
  number={2},
  pages={899--911},
  year={2019},
  publisher={IEEE}
}

@article{wang2023relationship,
  title={On the relationship between relevance and conflict in online social link recommendations},
  author={Wang, Yanbang and Kleinberg, Jon},
  journal={Advances in Neural Information Processing Systems},
  volume={36},
  pages={36708--36725},
  year={2023}
}

@inproceedings{zhou2024modeling,
  title={Modeling the impact of timeline algorithms on opinion dynamics using low-rank updates},
  author={Zhou, Tianyi and Neumann, Stefan and Garimella, Kiran and Gionis, Aristides},
  booktitle={Proceedings of the ACM Web Conference 2024},
  pages={2694--2702},
  year={2024}
}

@article{mei22median,
  title = {Micro-foundation of opinion dynamics: Rich consequences of the weighted-median mechanism},
  author = {Mei, Wenjun and Bullo, Francesco and Chen, Ge and Hendrickx, Julien M. and D\"orfler, Florian},
  journal = {Phys. Rev. Res.},
  volume = {4},
  issue = {2},
  pages = {023213},
  numpages = {10},
  year = {2022}
}

@article{chitra2019understanding,
  title={Understanding filter bubbles and polarization in social networks},
  author={Chitra, Uthsav and Musco, Christopher},
  journal={arXiv preprint arXiv:1906.08772},
  year={2019}
}

@book{seneta2006non,
  title={Non-negative matrices and Markov chains},
  author={Seneta, Eugene},
  year={2006},
  publisher={Springer Science \& Business Media}
}

@article{pescetelli2022bots,
  title={Bots influence opinion dynamics without direct human-bot interaction: the mediating role of recommender systems},
  author={Pescetelli, Niccolo and Barkoczi, Daniel and Cebrian, Manuel},
  journal={Applied Network Science},
  volume={7},
  number={1},
  pages={46},
  year={2022},
  publisher={Springer}
}

@inproceedings{wang2023network,
  title={Network effects in performative prediction games},
  author={Wang, Xiaolu and Yau, Chung-Yiu and Wai, Hoi To},
  booktitle={International Conference on Machine Learning},
  pages={36514--36540},
  year={2023},
  organization={PMLR}
}

@book{berman1994nonnegative,
  title={Nonnegative matrices in the mathematical sciences},
  author={Berman, Abraham and Plemmons, Robert J},
  year={1994},
  publisher={SIAM}
}

@article{
bail18polarize,
author = {Christopher A. Bail  and Lisa P. Argyle  and Taylor W. Brown  and John P. Bumpus  and Haohan Chen  and M. B. Fallin Hunzaker  and Jaemin Lee  and Marcus Mann  and Friedolin Merhout  and Alexander Volfovsky },
title = {Exposure to opposing views on social media can increase political polarization},
journal = {Proceedings of the National Academy of Sciences},
volume = {115},
number = {37},
pages = {9216-9221},
year = {2018}
}

@article{
santos21link,
author = {Fernando P. Santos  and Yphtach Lelkes  and Simon A. Levin },
title = {Link recommendation algorithms and dynamics of polarization in online social networks},
journal = {Proceedings of the National Academy of Sciences},
volume = {118},
number = {50},
pages = {e2102141118},
year = {2021}
}

@article{narayana15click,
author = {Narayanan, Sridhar and Kalyanam, Kirthi},
title = {Position Effects in Search Advertising and their Moderators: A Regression Discontinuity Approach},
year = {2015},
issue_date = {May 2015},
publisher = {INFORMS},
address = {Linthicum, MD, USA},
volume = {34},
number = {3},
journal = {Marketing Science},
month = may,
pages = {388–407},
numpages = {20}
}

@Article{ursu18expedia,
journal={Marketing Science},
author={Raluca M. Ursu},
title={The Power of Rankings: Quantifying the Effect of Rankings on Online Consumer Search and Purchase Decisions},
year={2018},
month={August},
pages={530-552},
volume={37},
number={4}
}

@article{anderson12yelp,
author = {Anderson, Michael and Magruder, Jeremy},
title = {Learning from the Crowd: Regression Discontinuity Estimates of the Effects of an Online Review Database},
journal = {The Economic Journal},
volume = {122},
number = {563},
pages = {957-989},
year = {2012}
}

@InProceedings{mehrnaz23pp,
  title = 	 {Performative Prediction with Neural Networks},
  author =       {Mofakhami, Mehrnaz and Mitliagkas, Ioannis and Gidel, Gauthier},
  booktitle = 	 {Proceedings of The 26th International Conference on Artificial Intelligence and Statistics},
  pages = 	 {11079--11093},
  year = 	 {2023},
  volume = 	 {206},
  series = 	 {Proceedings of Machine Learning Research},
  month = 	 {25--27 Apr},
  publisher =    {PMLR}
}

@inproceedings{
saig2025evolutionary,
title={Evolutionary Prediction Games},
author={Eden Saig and Nir Rosenfeld},
booktitle={The Thirty-ninth Annual Conference on Neural Information Processing Systems},
year={2025}
}

@article{narang23multi,
  author  = {Adhyyan Narang and Evan Faulkner and Dmitriy Drusvyatskiy and Maryam Fazel and Lillian J. Ratliff},
  title   = {Multiplayer Performative Prediction: Learning in Decision-Dependent Games},
  journal = {Journal of Machine Learning Research},
  year    = {2023},
  volume  = {24},
  number  = {202},
  pages   = {1--56}
}

@article{gauthier2026political,
  title={The political effects of X’s feed algorithm},
  author={Gauthier, Germain and Hodler, Roland and Widmer, Philine and Zhuravskaya, Ekaterina},
  journal={Nature},
  pages={1--8},
  year={2026},
  publisher={Nature Publishing Group UK London}
}

@article{parsegov2016novel,
  title={Novel multidimensional models of opinion dynamics in social networks},
  author={Parsegov, Sergey E and Proskurnikov, Anton V and Tempo, Roberto and Friedkin, Noah E},
  journal={IEEE Transactions on Automatic Control},
  volume={62},
  number={5},
  pages={2270--2285},
  year={2016},
  publisher={IEEE}
}

@article{farina2026stability,
  title={The Stability of Online Algorithms in Performative Prediction},
  author={Farina, Gabriele and Perdomo, Juan Carlos},
  journal={arXiv preprint arXiv:2602.24207},
  year={2026}
}

@inproceedings{abebe18persuade,
author = {Abebe, Rediet and Kleinberg, Jon and Parkes, David and Tsourakakis, Charalampos E.},
title = {Opinion Dynamics with Varying Susceptibility to Persuasion},
year = {2018},
isbn = {9781450355520},
publisher = {Association for Computing Machinery},
address = {New York, NY, USA},
booktitle = {Proceedings of the 24th ACM SIGKDD International Conference on Knowledge Discovery \& Data Mining},
pages = {1089–1098},
numpages = {10},
series = {KDD '18}
}

@inproceedings{out2024impact,
  title={The impact of external sources on the friedkin--johnsen model},
  author={Out, Charlotte and Tu, Sijing and Neumann, Stefan and Zehmakan, Ahad N},
  booktitle={Proceedings of the 33rd ACM International Conference on Information and Knowledge Management},
  pages={1815--1824},
  year={2024}
}

@inproceedings{eilat2023performative,
  title={Performative recommendation: diversifying content via strategic incentives},
  author={Eilat, Itay and Rosenfeld, Nir},
  booktitle={International Conference on Machine Learning},
  pages={9082--9103},
  year={2023},
  organization={PMLR}
}

@inproceedings{chaney2018algorithmic,
  title={How algorithmic confounding in recommendation systems increases homogeneity and decreases utility},
  author={Chaney, Allison JB and Stewart, Brandon M and Engelhardt, Barbara E},
  booktitle={Proceedings of the 12th ACM conference on recommender systems},
  pages={224--232},
  year={2018}
}

@article{ke2017lightgbm,
  title={Lightgbm: A highly efficient gradient boosting decision tree},
  author={Ke, Guolin and Meng, Qi and Finley, Thomas and Wang, Taifeng and Chen, Wei and Ma, Weidong and Ye, Qiwei and Liu, Tie-Yan},
  journal={Advances in neural information processing systems},
  volume={30},
  year={2017}
}

@article{curry2014prevalence,
  title={Prevalence of internet and social media usage in orthopedic surgery},
  author={Curry, Emily and Li, Xinning and Nguyen, Joseph and Matzkin, Elizabeth},
  journal={Orthopedic reviews},
  volume={6},
  number={3},
  pages={5483},
  year={2014}
}

@article{farnan2013online,
  title={Online medical professionalism: patient and public relationships: policy statement from the American College of Physicians and the Federation of State Medical Boards},
  author={Farnan, Jeanne M and Snyder Sulmasy, Lois and Worster, Brooke K and Chaudhry, Humayun J and Rhyne, Janelle A and Arora, Vineet M and American College of Physicians Ethics, Professionalism and Human Rights Committee and American College of Physicians Council of Associates and Federation of State Medical Boards Special Committee on Ethics and Professionalism*},
  journal={Annals of internal medicine},
  volume={158},
  number={8},
  pages={620--627},
  year={2013},
  publisher={American College of Physicians}
}

\newpage

\appendix

\section{Proofs}
\label{app:proofs}
Throughout the appendix, we say $A\preceq B$ for two arbitrary symmetric matrices $A,B$, if $B-A$ is positive semi-definite. The notation $\pmb{0}$, $\pmb{1}$ denotes a matrix or vector of appropriate size with all $0$ or $1$ entries respectively, and $I$ denotes the identity matrix.

\subsection{Auxiliary results}

\begin{lemma}
\label{lemma:primitive_matrix}
    (Primitive matrices) Let $M\in\mathbb{R}^{n\times n}$ be irreducible and non-negative. Further let   $D\in\mathbb{R}^{n\times n}$ be a diagonal matrix with non-negative diagonal entries and at least one positive entry. Then, the sum of the two matrices  $M+D$ is primitive. 
\end{lemma}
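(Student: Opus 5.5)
We prove primitivity of $M+D$ directly from the definition given in the paper: we exhibit a single exponent $m$ such that $(M+D)^m$ is entrywise positive. The idea is that irreducibility supplies a positive-weight walk between any pair of nodes, and the positive diagonal entry of $D$ supplies a self-loop at some node $r$. This self-loop can be used to pad walks to any desired length.

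First, note that $M+D$ is non-negative and irreducible, since adding a non-negative diagonal cannot remove any positive entry of $M$. Let $r$ be an index with $D_{rr}>0$; then $(M+D)_{rr}\geq D_{rr}>0$. For the non-negative matrix $B:=M+D$, the entry $(B^m)_{ij}$ is a sum over walks $i=k_0,k_1,\dots,k_m=j$ of products $\prod_s B_{k_{s}k_{s+1}}$. All terms are non-negative, so $(B^m)_{ij}>0$ as soon as one such walk has all factors positive. By irreducibility of $B$, for each pair $(i,r)$ there is some $p_{ir}$ with $(B^{p_{ir}})_{ir}>0$, and for each $(r,j)$ some $q_{rj}$ with $(B^{q_{rj}})_{rj}>0$. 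Set $L:=\max_{i,j}(p_{ir}+q_{rj})$ and $m:=L$.

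For any pair $(i,j)$, build a walk from $i$ to $r$ of length $p_{ir}$, stay at $r$ for $m-p_{ir}-q_{rj}\geq 0$ steps using the positive entry $B_{rr}$, then walk from $r$ to $j$ in $q_{rj}$ steps. Entrywise, $(B^m)_{ij}\geq (B^{p_{ir}})_{ir}\,(B_{rr})^{m-p_{ir}-q_{rj}}\,(B^{q_{rj}})_{rj}>0$. This inequality holds because $B^m=B^{p}B^{\ell}B^{q}$ with non-negative factors, so each entry of the product is at least any single term of its defining sum. Hence $B^m>0$ and $M+D$ is primitive.

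The only delicate point is the padding step. A single exponent has to work for all pairs simultaneously, and the self-loop at $r$ is exactly what removes any periodicity obstruction. The case $i=r$ or $j=r$ is covered by allowing $p$ or $q$ to be $0$, with $B^0=I$ and $(B^0)_{rr}=1$. An alternative route is to cite the standard fact that an irreducible matrix with positive trace is primitive (see \citet{seneta2006non} or \citet{berman1994nonnegative}), but the explicit walk argument above is self-contained.
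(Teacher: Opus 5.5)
Your proof is correct and rests on the same idea as the paper's: the positive diagonal entry of $D$ gives a self-loop at some node $r$, and together with irreducibility of $M+D$ this removes any periodicity. The paper asserts that $M+D$ is irreducible and aperiodic and implicitly invokes the standard fact that this implies primitivity, whereas your walk-padding argument proves that implication directly for the self-loop case and also gives an explicit exponent, $\max_i p_{ir}+\max_j q_{rj}$, which is at most $2n-2$ if you use shortest walks.
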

\begin{proof}
Suppose for $i_0\in[n]$, $D_{i_0,i_0}>0$. The proof is immediate by noticing that the matrix $M+D$ is irreducible and aperiodic by adding a self-loop.
\end{proof}

\begin{lemma}
\label{corollary:perron_frobenius_coro}
Let $M\in\mathbb{R}^{n\times n}$ be an arbitrary primitive and row-stochastic matrix, then the sequence of matrices $\{M^t\}_{t\in\mathbb{N}}$ converges as follows
\begin{align*}
    \lim_{t\rightarrow +\infty}M^t=\mathbf{1}\pmb{y}_M^{\top}.
\end{align*}
where $\pmb{y}_M$ is the eigenvector of $M^{\top}$ with respect to eigenvalue $1$ and $\sum_{i=1}^n y_i=1$.
\end{lemma}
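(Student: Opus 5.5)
The plan is to prove this directly from the Perron--Frobenius theorem for primitive matrices, working through the Jordan (or spectral) decomposition of $M$. We need three facts. First, since $M$ is row-stochastic, $M\mathbf 1=\mathbf 1$, so $1\in\sigma(M)$. Second, since every row of $M$ has nonnegative entries summing to one, $\|M\|_\infty=1$, and therefore $\rho(M)\le 1$; combined with the first fact, $\rho(M)=1$. Third, because $M$ is primitive, Perron--Frobenius (e.g.\ \citet{seneta2006non}) guarantees that the eigenvalue $\rho(M)=1$ is algebraically simple and that every other eigenvalue $\lambda$ satisfies $|\lambda|<1$. The same theorem also gives a left Perron vector with strictly positive entries, i.e.\ an eigenvector $\pmb y_M$ of $M^\top$ for eigenvalue $1$, which we normalize so that $\mathbf 1^\top \pmb y_M=1$.

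Next we write $M$ in Jordan form, $M=S\,\diag(1,J')\,S^{-1}$. Here $J'$ collects the Jordan blocks for the eigenvalues of modulus less than one. We may take the first column of $S$ to be the right eigenvector $\mathbf 1$, and the first row of $S^{-1}$ to be a left eigenvector for eigenvalue $1$, hence a scalar multiple of $\pmb y_M^\top$. The requirement $S^{-1}S=I$ forces (first row of $S^{-1}$)$\cdot\mathbf 1=1$, so this row is exactly $\pmb y_M^\top$ under our normalization. Consequently
\[
M^t=\mathbf 1\pmb y_M^\top + S\,\diag(0,(J')^t)\,S^{-1}.
\]

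It remains to show that $(J')^t\to 0$. This follows from $\rho(J')<1$. Concretely, each Jordan block $J_\lambda$ of size $m$ has entries of $J_\lambda^t$ bounded by $\binom{t}{k}|\lambda|^{t-k}$ with $k<m$, and each such bound tends to $0$ as $t\to\infty$. Alternatively, Gelfand's formula gives $\|(J')^t\|\le (\rho(J')+\epsilon)^t$ for large $t$, for any $\epsilon>0$. Either way, $M^t\to\mathbf 1\pmb y_M^\top$, which is the claimed limit.

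The main obstacle is the third fact: ruling out eigenvalues other than $1$ on the unit circle and establishing that $1$ is simple. This is precisely where primitivity, rather than mere irreducibility, is needed. We do not reprove it and instead invoke the standard Perron--Frobenius theorem for primitive matrices. One self-contained route would be to use $M^m>0$ for some $m$: for a positive matrix, the triangle inequality in $|\lambda|\,|v|=|M^m v|\le M^m|v|$ forces any eigenvector with $|\lambda|=1$ to have entries of a common phase, which yields $\lambda^m=1$ with $v\propto\mathbf 1$. Applying the same argument to $M^{m+1}$, which is also positive, gives $\lambda^{m+1}=1$, and hence $\lambda=1$.
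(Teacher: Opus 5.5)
Your proof is correct and follows essentially the same route as the paper, which invokes Perron--Frobenius together with Theorem~1.2 of \citet{seneta2006non} to write $M^t=\mathbf 1\pmb y_M^\top+(\text{a term controlled by the subdominant eigenvalues})$; your Jordan-form computation is the argument behind that theorem, and it also handles the polynomial factors from nontrivial Jordan blocks, which the paper's $\mathcal{O}(|\lambda_2|^t)$ omits. The only loose end is in your optional self-contained sketch of the Perron--Frobenius step, which would still need to justify equality in $|v|\le M^m|v|$ (e.g.\ at an entry of maximal modulus, using that $M^m$ is positive and row-stochastic) and to exclude a nontrivial Jordan block at $1$ (e.g.\ via boundedness of $M^t$), but your main proof does not depend on that sketch.
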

\begin{proof}
The result is a corollary of the Perron-Frobenius theorem and Theorem 1.2 in~\cite{seneta2006non}.
From~\cite{seneta2006non} we have
\begin{align*}
    M^t=\pmb{1}\pmb{y}_M^{\top}+\mathcal{O}(|\lambda_2|^t)
\end{align*}
where $\{\lambda_i\}_{i=1}^e$ are eigenvalues of $M$ such that $\lambda_1=1>|\lambda_2|\geq|\lambda_3|\geq\cdots\geq |\lambda_e|$. Thus, the second term vanishes as $t\rightarrow \infty$.
\end{proof}

\subsection{Proof of Proposition~\ref{prop:existencePS}} 
\label{app:proof-perfect-prediction}

\begin{proof}

    To characterize the equilibrium under repeated perfect prediction, we note that 
    \begin{align}
        \pmb{x}_\mathrm{ex}^{(t+1)}&=\Psi_K\pmb{x}^{(t)}_\mathrm{init}\notag\\&=\Psi_K((I_n-\Lambda_{\beta})\pmb{x}^*+\Lambda_{\beta}\pmb{f}^{(t)})\notag\\
        &=\Psi_K((I_n-\Lambda_{\beta})\pmb{x}^*+\Lambda_{\beta}\pmb{x}_\mathrm{ex}^{(t)}),\quad t\geq 0.
        \label{eq:iterative_update}
    \end{align}
    From this recursion we can see that for proving the existence of a stable point we need to be able to control $\lim_{t\rightarrow\infty} (\Psi_K\Lambda_\beta)^t$. 
    
    Applying Corollary~1 in \citet{parsegov2016novel}, we know $\Psi_\infty$ is well-defined. Further, for any $K\in\{\infty\}\cup\mathbb{N}$ the matrix $\Psi_K$ is row-stochastic by observing 
    \begin{align*}
        \Psi_K\pmb{1}&=\left[\sum_{i=0}^{K-1}(\Lambda_{\alpha}W)^i(I_n-\Lambda_{\alpha})+(\Lambda_{\alpha}W)^K\right] \pmb{1}\\
        &=(\Lambda_{\alpha}W)^{K-1}\Lambda_{\alpha}\mathbf{1}+(\Lambda_{\alpha}W)^{K-1}(I_n-\Lambda_{\alpha})\mathbf{1}\\
        &\quad+\sum_{j=0}^{K-2}(\Lambda_{\alpha}W)^j(I_n-\Lambda_{\alpha})\pmb{1}\\
        &=(\Lambda_{\alpha}W)^{K-2}\Lambda_{\alpha} \pmb{1}+\sum_{j=0}^{K-2}(\Lambda_{\alpha}W)^j(I_n-\Lambda_{\alpha})\pmb{1}\\
        &=\cdots =\pmb{1},\quad K<\infty,
    \end{align*}
    because  $W$ is row-stochastic. For $K=\infty$ we have $\Psi_{\infty}=\lim_{k\rightarrow\infty}\Psi_k$ and hence $\Psi_{\infty}$ is also row-stochastic. Perron-Frobenius theorem implies $\rho(\Psi_K)\leq 1$.
    
    Now, we take a detour and observe, under permutation of the node indices, $\Psi_K$ can be decomposed as follows
    \begin{align*}
        \Psi_K=\begin{bmatrix}
            I_{n'}&\pmb{0}\\
            (\Psi_K)_{21}&(\Psi_K)_{22}
        \end{bmatrix}
    \end{align*}
    where $n'=|\{i:i\in[n],\alpha_i=0\}|$ denotes the number of individuals who are fully stubborn towards their peers. With this decomposition, we'll show the existence of $\xv_\mathrm{PS}$ by proving $\lim_{t\rightarrow\infty}(\Psi_\infty)^t$ exists first.
    \begin{itemize}
        \item When $n'>0$, if $\rho((\Psi_K)_{22})<1$ the proof is immediate, and it remains to discuss $\rho((\Psi_K)_{22})=1$. Here, we apply Lemma~2 in \citet{parsegov2016novel} to $(\Psi_K)_{22}$, and under node permutation, we have,
    \begin{align*}
        (\Psi_{K})_{22}=\begin{bmatrix}
            ((\Psi_K)_{22})_{11}&((\Psi_K)_{22})_{12}\\
            \pmb{0}&((\Psi_K)_{22})_{22}
        \end{bmatrix}
    \end{align*}
    where block $((\Psi_K)_{22})_{11}$ contains indices $P:=\{i:i\in[n]/[n'],\exists_j\in[n'] :\,(\Psi_K)_{ij}>0\}$, i.e., nodes that are directly connected to completely stubborn nodes $[n']$. Since $\sum_{j=n'+1}^n(((\Psi_K)_{22})_{11})_{ij}<1$, $\forall i\in P$ we know that $\rho(((\Psi_K)_{22})_{11})<1$ and $((\Psi_K)_{22})_{22}$ is row-stochastic. However, note $\mathcal{G}$ is connected and thus a node is either completely stubborn or connected to such a node. We conclude $(\Psi_K)_{22}=((\Psi_K)_{22})_{11}$ and hence $\rho((\Psi_K)_{22})=\rho(((\Psi_K)_{22})_{11})<1$. The limit $\lim_{t\rightarrow\infty} (\Psi_K)^t$ hence exists and corresponds to
    \begin{align*}
        \lim_{t\rightarrow\infty}(\Psi_K)^t=\begin{bmatrix}
            I_{n'}&\pmb{0}\\
            (I_{n-n'}-(\Psi_K)_{22})^{-1}\Psi_{21}&\pmb{0}
        \end{bmatrix}.
    \end{align*}
    \item When $n'=0$, we know that $\Psi_K$ is well-defined, as $\Lambda_\alpha\neq I_n$. For $K<\infty$, we expand the recursion and have  $\Psi_K=I_n-\Lambda_\alpha+(\Lambda_\alpha W)^K+\ldots$, which shows it is connected and aperiodic, and hence primitive. When $K=\infty$, we can write $\Psi_\infty=\sum_{i=0}^{\infty}(\Lambda_\alpha W)^i(I_n-\Lambda_\alpha)$ in the following form under node permutation
    \begin{align*}
        \Psi_\infty=\begin{bmatrix}
            (\Psi_\infty)_{11}&\pmb{0}\\
            (\Psi_\infty)_{21}&\pmb{0}
        \end{bmatrix}
    \end{align*}
    where $(\Psi_\infty)_{11}$ is positive and row-stochastic, and this block corresponds to nodes with $\alpha_i<1$. Observe
    \begin{align*}
        \lim_{t\rightarrow\infty}(\Psi_\infty)^t=\begin{bmatrix}
            \lim_{t\rightarrow\infty}((\Psi_\infty)_{11})^t&\pmb{0}\\
            (\Psi_{\infty})_{21}\lim_{t\rightarrow\infty}((\Psi_\infty)_{11})^t&\pmb{0}
            \end{bmatrix}.
    \end{align*}
    where $\lim_{t\rightarrow\infty}((\Psi_\infty)_{11})^t$ exists since $(\Psi_\infty)_{11}$ is primitive.
    \end{itemize}
    We conclude that  $\lim_{t\rightarrow\infty}(\Psi_K)^t$ exists for any $\Lambda_\alpha\neq I_n$. Hence it follows that for any $\pmb{0}\leq\Lambda_\beta\leq\pmb{I}_n$, $\lim_{t\rightarrow\infty}(\Psi_K\Lambda_\beta)^t$ exists. 
    Further,  given the assumption that $\pmb{0}\leq\Lambda_{\beta}\leq I_n$ with $\Lambda_{\beta}\neq I_n$, we have $\rho(\Psi_K \Lambda_{\beta})\leq\rho(\Psi_K)=1$ since $\Psi_K \Lambda_{\beta}\leq \Psi_K$ by applying a corollary of Perron-Frobenius theorem (Corollary 1.5.(a) in~\cite{berman1994nonnegative}). 

    Building on these results, we can now characterize the stable point.
    
    If $\rho(\Psi_K\Lambda_\beta)<1$, the dynamics in \eqref{eq:iterative_update} are Schur stable and the convergence is immediate. 
    We have 
    \[\xv_\mathrm{PS}=(I_n-\Psi_K\Lambda_\beta)^{-1}\Psi_K(I_n-\Lambda_\beta)\xv^*.\]

    Now we discuss the case when $\rho(\Psi_K\Lambda_\beta)=1$. 
    Apply Lemma 2 in \cite{parsegov2016novel}, we know for row-substochastic matrix $\Psi_K\Lambda_\beta$, under node permutation, it can be decomposed into an upper triangular form
    \begin{align*}
        \Psi_K\Lambda_\beta=\begin{bmatrix}
            (\Psi_K\Lambda_\beta)_{11}&(\Psi_K\Lambda_\beta)_{12}\\
            \pmb{0}&(\Psi_K\Lambda_\beta)_{22}
        \end{bmatrix}
    \end{align*}
    where $(\Psi_K\Lambda_\beta)_{11}$ is Schur stable and $(\Psi_K\Lambda_\beta)_{22}$ is row-stochastic. Note $\Psi_K$ is row-stochastic and $0\leq\beta_i\leq 1$ $\forall i\in[n]$. Then for any node $i$ in row-stochastic matrix $(\Psi_K\Lambda_\beta)_{22}$,
    \begin{align*}
        1=\sum_{j}(\Psi_K)_{ij}\beta_j\leq 1.
    \end{align*}
    The equality holds only when for any node $j$ in row-stochastic matrix $(\Psi_K\Lambda_\beta)_{22}$, $\beta_j=1$. Hence $(\Psi_K\Lambda_\beta)_{22}=(\Psi_K)_{22}$. Let $\Psi_{22}^*:=\lim_{t\rightarrow\infty}(\Psi_{22})^t$ that exists since we've proved the existence of $\lim_{t\rightarrow\infty}(\Psi_K)^t$. The equilibrium is cahracterized as follows
    \begin{align}
        \xv_\mathrm{PS}
        &=\begin{bmatrix}
            (I-(\Psi_K\Lambda_\beta)_{11})^{-1}&\pmb{0}\\
            \pmb{0}&I
        \end{bmatrix}\begin{bmatrix}
            (\Psi_K)_{11}-(\Psi_K\Lambda_\beta)_{11}&(\Psi_K)_{12}\Psi_{22}^*\\
            \pmb{0}&\Psi_{22}^*
        \end{bmatrix}\xv^*.
        \label{eq:generalPS}
    \end{align}
    
    In particular, when $\Lambda_{\beta}=I_n$, the block $(\Psi_K\Lambda)_{11}$ is empty, and we have $\xv_\mathrm{PS}=\lim_{t\rightarrow\infty}\Psi_K^t$. We proved earlier that this limit  exists when $\Lambda_\alpha\neq I_n$.

    Now, we prove that the inverse, $(I_n-\Psi_K\Lambda_{\beta})^{-1}$, in the equilibrimu characterization \eqref{eq:generalPS} is well-defined when $\alpha_i>0\;\forall i\in[n]$, $\Lambda_\alpha\neq I_n$, and $\Lambda_\beta\neq I_n$. We apply a corollary of Perron-Frobenius theorem (Corollary 1.5.(b) in~\cite{berman1994nonnegative}) and notice that $\Psi_K\Lambda_{\beta}\leq \Psi_K$, $\Psi_K\Lambda_{\beta}\neq\Psi_K$, and $\Psi_K+\Psi_K\Lambda_{\beta}$ is irreducible since $\alpha_i>0\;\forall i\in[n]$. Hence $\rho(\Psi\Lambda_{\beta})<1$ and we have
    \begin{align}
        \xv_\mathrm{PS}=(I_n-\Psi_K\Lambda_\beta)^{-1}\Psi_K(I_n-\Lambda_\beta)\xv^*.
        \label{eq:ps_chara}
    \end{align}
\end{proof}

\subsection{Proof of  Proposition~\ref{prop:convergence_rate_sl}}
\label{sec:proof-convergence}

\begin{proof}
    For non-negative matrices, we apply a corollary from Perron-Frobenius theorem (Corollary 1.5.(b) in~\cite{berman1994nonnegative}): If $\pmb{0}\leq A\leq B$, then $\rho(A)\leq \rho(B)$. Observe
    \begin{align*}
        &\pmb{0}\leq\Psi_K\leq \max_k\beta_k\Psi_K,\\ 
        &\rho(\Psi_K \Lambda_{\beta})\leq \rho(\max_k\beta_k\pmb{\Psi}_K)=\max_k\beta_k\rho(\Psi_K)=\max_k\beta_k<1.
    \end{align*}
    As for the case when some $\beta_i=1$, the convergence rate and choice of $c$ is immediate from the proof of Proposition~\ref{prop:existencePS}.
\end{proof}

\subsection{Proof of distribution sensitivity}
\label{app:distribution_sensitivity}

\begin{proof}
    Set $\pmb{f}^{(t)}$ to $\pmb{f}_1^{(t)},\pmb{f}_2^{(t)}$ respectively, and let $\pmb{f}_1^{(t+1)},\pmb{f}_2^{(t+1)}$ be the corresponding predictions at $t+1$. We have $\|\pmb{f}_1^{(t+1)}-\pmb{f}_2^{(t+1)}\|=\|\Psi_K\Lambda_\beta(\pmb{f}_1^{(t)}-\pmb{f}_2^{(t)})\|\leq \|\Lambda_\beta\|\|\Psi_K\|\|\pmb{f}_1^{(t)}-\pmb{f}_2^{(t)}\|$ where $\|\Lambda_\beta\|\|\Psi_K\|$ corresponds to the sensitivity parameter in~\citet{PZMH20}.
\end{proof}

\subsection{Proof of Proposition~\ref{prop:mean_variance_sl}}
\label{app:var_peer_platform_sus}

To study the relationship between variance and susceptibilities we build on the following intermediate result decomposing the equilibrium into a consensus component and a sum of heterogeneous components.
\begin{proposition}[Decomposition of equilibrium]
\label{theorem:equilibrium_decomposition}
Consider a network characterized by a regular, simple, and connected graph $\mathcal{G}$, its nodes with peer susceptibilities $\alpha_i\equiv\alpha\in(0,1)$, platform susceptibilities $\beta_i\equiv\beta\in[0,1]$, and a platform that enforces perfect predictions based on policy~\eqref{eq:SL}. Suppose $K=\infty$ and FJ dynamics converges. Then, the equilibrium $\xv_{\mathrm{PS}}$ can be decomposed as follows 
   \begin{align}
        \label{eq:ps_decomposition_homogeneous}
        \pmb{x}_{\mathrm{PS}}&=\frac{\pmb{1}^{\top}\pmb{y}^*}{n(1-\beta)}   \pmb{1} + \sum_{i=2}^n\frac{\lambda_i \pmb{v}_i^{\top}\yv^*}{1-\beta\lambda_i}\pmb{v}_i,
    \end{align}
    where $\yv^*=(1-\beta)\pmb{x}^*$ and $(\lambda_i, \pmb{v}_i)$ are eigenvalues and corresponding eigenvectors of $\Psi_{\infty}$ with $\pmb{v}_i\perp \pmb{1}\;\forall i\geq 2$. 
\end{proposition}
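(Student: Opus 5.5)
The plan is to start from the closed-form characterization \eqref{eq:ps_chara} from the proof of Proposition~\ref{prop:existencePS}. Under the stated assumptions it reads
\[
\xv_\mathrm{PS}=(I_n-\beta\Psi_\infty)^{-1}\Psi_\infty(1-\beta)\xv^*=(I_n-\beta\Psi_\infty)^{-1}\Psi_\infty\yv^*,
\]
valid for $\beta<1$; the boundary case $\beta=1$ would be read as a limit. The task is then to diagonalize $\Psi_\infty$ in an orthonormal eigenbasis containing $\pmb 1/\sqrt n$ and to expand $\yv^*$ in that basis.

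\textbf{Step 1: $\Psi_\infty$ is symmetric.} For a $d$-regular graph, $W=A/d$ is symmetric. Since $\Lambda_\alpha=\alpha I$, we get
\[
\Psi_\infty=(1-\alpha)\sum_{i\ge0}\alpha^iW^i=(1-\alpha)(I-\alpha W)^{-1},
\]
and this series converges because $\rho(\alpha W)=\alpha<1$. The matrix is therefore a rational function of the symmetric matrix $W$. Consequently $\Psi_\infty$ is symmetric, shares the orthonormal eigenvectors of $W$, and has eigenvalues $\lambda_i=(1-\alpha)/(1-\alpha\mu_i)$, where $\mu_i\in[-1,1]$ are the eigenvalues of $W$.

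\textbf{Step 2: the top eigenvector.} Since $\Psi_\infty$ is row-stochastic (shown in the proof of Proposition~\ref{prop:existencePS}), we have $\Psi_\infty\pmb 1=\pmb 1$, so $\lambda_1=1$ with $\pmb v_1=\pmb 1/\sqrt n$. Connectivity makes $\mu_1=1$ a simple eigenvalue of $W$. Hence the remaining orthonormal eigenvectors $\pmb v_2,\dots,\pmb v_n$ can be chosen orthogonal to $\pmb 1$, and $|\lambda_i|<1$ for $i\ge2$. This also confirms that $1-\beta\lambda_i>0$ for all $i$ when $\beta<1$.

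\textbf{Step 3: spectral expansion.} I write $\yv^*=\sum_i(\pmb v_i^\top\yv^*)\pmb v_i$ and apply the operator $(I-\beta\Psi_\infty)^{-1}\Psi_\infty$, which acts on $\pmb v_i$ as multiplication by $\lambda_i/(1-\beta\lambda_i)$. For $i=1$ this gives
\[
\frac{1}{1-\beta}\,\frac{\pmb 1^\top\yv^*}{n}\,\pmb 1 .
\]
The terms with $i\ge2$ give exactly the stated sum, which yields \eqref{eq:ps_decomposition_homogeneous}.

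The main obstacle is Step 1, namely justifying symmetry and orthogonal diagonalizability. This is exactly where regularity enters. For non-regular graphs, $\deg^{-1}A$ is only similar to a symmetric matrix, and the eigenvectors would not be orthogonal to $\pmb 1$ in the Euclidean sense. The $\beta=1$ endpoint needs a separate remark: the first term degenerates to $\tfrac{1}{n}\pmb 1^\top\xv^*\,\pmb 1$, the consensus limit, while the heterogeneous terms vanish because $\yv^*=\pmb 0$ and $1-\lambda_i>0$.
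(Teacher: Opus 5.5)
Your proof is correct and takes the same route as the paper. Both start from the closed form \eqref{eq:ps_chara}, use regularity to make $W$ symmetric, and diagonalize $\Psi_\infty=(1-\alpha)(I_n-\alpha W)^{-1}$ in $W$'s orthonormal eigenbasis, with $\pmb v_1=\pmb 1/\sqrt n$ and $\lambda_i=(1-\alpha)/(1-\alpha\mu_i)$, before expanding $\yv^*$. Your explicit treatment of the $\beta=1$ endpoint, where cancelling the $(1-\beta)$ factor gives $\frac1n\pmb 1^\top\xv^*\,\pmb 1$, is a small addition that the paper's proof leaves implicit.
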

\begin{proof}[Proof of Proposition~\ref{theorem:equilibrium_decomposition}]
    We work with the explicit expression of $\xv_{\mathrm{PS}}$ in~\eqref{eq:ps_chara} where \[\Psi_K\coloneqq\sum_{i=0}^{K-1}(\Da W)^i(I_n-\Da)+(\Da W)^K.\]Let us first focus on the matrix $W$. The matrix is symmetric and we apply the Schur decomposition  $W=U\Lambda U^{\top}$ with $\Lambda = \diag(\mu_1,\cdots,\mu_n)$ and $U\coloneqq[\pmb{v}_1,\cdots,\pmb{v}_n]$ where $\{\mu_i\,\pmb{v}_i\}_{i\in[n]}$ denote the eigenvalues and eigenvectors of $W$. By definition we have $U^{\top}U=UU^{\top}=I_n$. Further, since $W$ is row-stochastic, apply Gershgorin circle theorem, we have $\rho(W)=1$. Apply Perron-Frobenius theorem on irreducible matrix $W$, w.l.o.g., let $\mu_1=1$ and $\pmb{v}_1=\frac{1}{\sqrt{n}}\pmb{1}^{\top}$ and $\mu_i\neq 1$ and $|\mu_i|\leq 1$ $\forall i\geq 2$. Using this expression in the characterization of stability, we obtain 
$\Psi_{\infty}=U\diag(\lambda_1,\cdots,\lambda_n)U^{\top}$ with $\lambda_i\coloneqq \frac{1-\alpha}{1-\alpha\mu_i}$, and hence
\begin{align*}
        \xv_{\mathrm{PS}}&=U\diag\left(\frac{\lambda_1}{1-\beta\lambda_1},\cdots,\frac{\lambda_n}{1-\beta\lambda_n}\right)U^{\top}.
    \end{align*}
    We can easily verify that $\lambda_1=1$, and $0<\lambda_i<1$ $\forall i\geq 2$ by noticing $\mu_i\neq 1$, $|\mu_i|\leq 1$ for $\forall i\geq 2$, and  $1-\alpha\mu_i-1+\alpha=\alpha(1-\mu_i)>0$. This concludes the proof of Proposition~\ref{theorem:equilibrium_decomposition}.
\end{proof}
Before we move to the proof of Proposition~\ref{prop:mean_variance_sl} let us comment on this result. The characterization applies to regular networks and homogeneous parameters. It shows that the equilibrium has an invariant consensus component $\frac{\pmb{1}}{\sqrt{n}}$ and its coefficient  $\frac{1}{1-\beta}$  is invariant with respect to the choice of $\Lambda_{\alpha}$ and $\mathcal{G}$. The other components characterize the heterogeneity of opinion dynamics in different directions.  The more the heterogeneous components $\pmb{v}_i$, $i\geq 2$ are aligned with the invariant component $(1-\beta)\xv^*$ and the larger the eigenvalue $\lambda_i$, the more $\pmb{v}_i$ will contribute to the equilibrium. While we notice that the consensus component ``dominates'' in the sense that
\begin{align*}
    0<\frac{\lambda_i}{1-\beta\lambda_i}<\frac{1}{1-\beta},\,\forall i.
\end{align*}

Next we build on this decomposition to prove Proposition~\ref{prop:mean_variance_sl}.
\begin{proof}[Proof of Proposition~\ref{prop:mean_variance_sl}]
Note $\pmb{v}_i\perp\pmb{1}$ $\forall i\geq 2$, 
\begin{align*}
    \bar{\xv}_{\mathrm{PS}}&=\frac{(1-\beta)\pmb{1}^{\top}\sum_{i=1}^n\frac{\lambda_i}{1-\beta\lambda_i}\pmb{v}_i\pmb{v}_i^{\top}\xv^*}{n}\\
    &=\frac{\pmb{1}^{\top}\xv^*}{n}.
\end{align*}
As for the variance of opinions, we have
\begin{align*}
    \mathrm{Var}(\xv_{\mathrm{PS}})
    &=\left\|\sum_{i=1}^n \frac{\lambda_i}{1-\beta\lambda_i}\pmb{v}_i^{\top}((1-\beta)\pmb{x}^*)\pmb{v}_i - \bar{\xv}_{\mathrm{PS}}\right\|^2\\
    &=\left\|\sum_{i=2}^n \frac{\lambda_i}{1-\beta(1-\eta)\lambda_i}\pmb{v}_i^{\top}((1-\beta)\xv^*)\pmb{v}_i\right\|^2\\
    &=\sum_{i=2}^n\left(\frac{\lambda_i}{1-\beta\lambda_i}\pmb{v}_i^{\top}((1-\beta)\xv^*)\right)^2.
\end{align*}
As for the relationship between platform susceptibility $\beta$ and $\mathrm{Var}(\xv_{\mathrm{PS}})$, we have
\begin{align*}
    \frac{\partial \mathrm{Var}(\xv_{\mathrm{PS}})}{\partial\beta}
    &=\sum_{i=2}^n\frac{2\lambda_i^2(1-\beta)(\lambda_i-1)(\pmb{v}_i^{\top}\pmb{x}^*)^2}{(1-\beta\lambda_i)^3}<0.
\end{align*}
\end{proof}

Similarly we can also show that under the conditions of Proposition~\ref{prop:mean_variance_sl} the variance decreases as $\alpha$ grows. We omit this result form the main body as it follows from classical insights in peer dynamics.  Namely, via basic calculations, we have 
\begin{align*}
    \frac{\partial\left(\frac{\lambda_i}{1-\beta\lambda_i}\right)^2}{\partial\alpha}=\frac{-\lambda_i(1-\mu_i)}{(1-\beta\lambda_i)^3(1-\alpha\mu_i)^2}<0\quad i\geq 2
\end{align*}
and hence if there exists $i_0\geq2$ such that $\pmb{v}_{i_0}^{\top}\xv^*\neq 0$, the larger $\alpha$ is, the larger $\mathrm{Var}(\xv_{\mathrm{PS}})$ is. Also, note $\mathrm{Var}(\xv^*)=\sum_{i=2}^n(\pmb{v}_i^\top\xv^*)^2$, hence $\mathrm{Var}(\xv_\mathrm{PS})$ is also implicitly determined by $\mathrm{Var}(\xv^*)$. It follows that 
\[\frac{\partial \mathrm{Var}(\xv_{\mathrm{PS}})}{\partial\alpha}<0.\]

\subsection{Proof of Theorem~\ref{theorem:consensus}}
\label{app:consensus_sl}

Next we show consensus in the limit $\beta_i\rightarrow 1\;\forall i$. We proof the claim without the homogeneity assumption on $\beta$. Instead assume the weaker numerical condition on how the limit is attained for different individuals: All we assume is that the limit 
$\lim_{\beta_i,\beta_j\rightarrow 1}\frac{1-\beta_i}{1-\beta_j}$ exists.

\begin{proof}
To prove Theorem~\ref{theorem:consensus} we first  show that $\Psi_K$ is primitive under the conditions of the theorem. We have
\begin{align*}
\Psi_1&=(I_n-\Lambda_{\alpha})+\Lambda_{\alpha}W\\
    \Psi_K&=\Lambda_{\alpha}W(I_n-\Lambda_{\alpha})+(I_n-\Lambda_{\alpha})+\sum_{i=2}^{K-1}(\Lambda_{\alpha}W)^i(I_n-\Lambda_{\alpha})+(\Lambda_{\alpha}W)^K,\quad K\geq 2.
\end{align*}
We note that $\Lambda_{\alpha}W(I_n-\Lambda_{\alpha})$ is  irreducible given that $\alpha_i\in(0,1)$ for $\forall i\in[n]$.\footnote{A side note is that for $K<\infty$, $\Psi_K$ is primitive with a weaker assumption that $\Lambda_\alpha\neq I_n$.}
By Lemma~\ref{lemma:primitive_matrix}, we know that $\Psi_K$ is primitive for any $K\in\{\mathbb{N}\cup\infty\}$. 

From the equilibrium characterization in Equation~\eqref{eq:ps_chara}, we have that 
for $\Lambda_{\beta}=\pmb{1}$ it holds that $\xv_\mathrm{PS}=\lim_{t\rightarrow\infty}\Psi_K^t\pmb{x}^*.$ 
Hence, we get convergence to  
\[c^*=\pmb{y}\pmb{f}^{(0)}=\pmb{y}^{\top}\pmb{x}^*\] where $\pmb{y}$ is the eigenvector of $(\Psi_K)^{\top}$ and satisfies $\pmb{1}^{\top}\pmb{y}=1$.

It remains to show that in the limit  $\Lambda_{\beta}\rightarrow\pmb{1}$ this behavior is maintained. 
Thus, for any sequence $\Lambda_{\beta}\rightarrow\pmb{1}$, we choose $\epsilon>0$ and an arbitrary $c>1$ such that $\frac{\epsilon}{c}\preceq I_n-\Lambda_{\beta}\preceq \epsilon I_n$ 
\begin{align*}
    \left(\sum_{m=0}^{\infty}(\Psi_K(I_n-\epsilon I_n))^m\frac{\epsilon}{c}\right)_{ij}&\leq \left(\sum_{m=0}^{\infty}(\Psi_K\Lambda_{\beta})^m(I_n-\Lambda_{\beta})\right)_{ij}\\
    &\leq \left(\sum_{m=0}^{\infty}(\Psi_K(I_n-\frac{\epsilon}{c}I_n))^m\epsilon\right)_{ij}
\end{align*} 
where the inequality holds since $\Psi$ is non-negative and $I_n-\Lambda_{\beta}$ is diagonal with positive entries. 
Under the assumption that $\lim_{i,j}\frac{1-\beta_i}{1-\beta_j}$ exists, let~$\epsilon\rightarrow 0$ and apply the theorem from~\cite{friedkin1990social}, we get
\begin{align*}
    \left(\frac{1}{c}\lim_{m\rightarrow\infty}(\Psi_K)^m\right)_{ij}&\leq \lim_{\beta\rightarrow\pmb{1}}\left(\sum_{m=0}^{\infty}(\Psi_K\Lambda_{\beta})^m(I_n-\Lambda_{\beta})\right)_{ij}\\
    &\leq \left(c\lim_{m\rightarrow\infty}(\Psi_K)^m\right)_{ij}.
\end{align*}
Since $c>1$ can be chosen arbitrarily, apply equilibrium~\eqref{eq:ps_chara} we have
\begin{align*}
    \lim_{\Lambda_{\beta}\rightarrow\pmb{1}}(I_n-\Psi_K\Lambda_{\beta})^{-1}(I_n-\Lambda_{\beta})=\lim_{j\rightarrow\infty}(\Psi_K)^j.
\end{align*}
Apply Corollary~\ref{corollary:perron_frobenius_coro}, opinions reach consensus as follows
\begin{align*}
    \lim_{\Lambda_{\beta}\rightarrow\pmb{1}}\pmb{x}_{\mathrm{PS}}&=\pmb{1}^{\top}\pmb{y}\lim_{\Lambda_{\beta}\rightarrow\pmb{1}}(I_n-\Lambda_{\beta})^{-1}\Psi_K(I_n-\Lambda_{\beta})\pmb{x}^*.
\end{align*}
\end{proof}

\subsection{Proof of Proposition~\ref{prop:consensus_mean_estimation}}

\begin{proof} 
Throughout this section, let $n':=|\mathcal{V}'|$ and define
\begin{align*}
\Lambda_{\beta,1}&:=(\Lambda_{\beta})_{1:n', 1:n'} 
&\Lambda_{\beta,2}&:=(\Lambda_{\beta})_{n'+1:n, n'+1:n},\\ 
\xv_\mathrm{PS}'&:=(\xv_\mathrm{PS})_{1:n'}
&\Psi_K'&:=(\Psi_K)_{1:n',1:n'}.
\end{align*}
Without loss of generality we can rewrite the predictions as follows,
   \begin{align*}
       \pmb{f}^{(t+1)}&= M\xv_\mathrm{ex}^{(t)},\quad M\coloneqq \begin{bmatrix}
           I_{n'}&\pmb{0}_{n'\times (n-n')}\\
           \frac{1}{n'}\pmb{1}_{(n-n')\times n'}&\pmb{0}_{(n-n')\times(n-n')}
       \end{bmatrix}.
   \end{align*}
   Hence we have
   \begin{align*}
       \xv_\mathrm{ex}^{(t+1)}&=\Psi_K((I_n-\Lambda_\beta)\xv^*+\Lambda_\beta M\xv_\mathrm{ex}^{(t)}).
   \end{align*}
    We look at $\Psi_K\Lambda_\beta M$ in a block-wise way,
    \begin{align*}
        \Psi_{K}\Lambda_\beta M&=\begin{bmatrix}
            (\Psi_{K}\Lambda_\beta M)_{1:n', 1:n'}&\pmb{0}_{n'\times (n-n')}\\
            *&\pmb{0}_{(n-n')\times(n-n')}
        \end{bmatrix},
    \end{align*}
    \begin{itemize}
        \item When $\Lambda_{\beta}=I_n$, we have
    \begin{align*}
        \xv_\mathrm{ex}^{(t)}=(\Psi_{K}M)^t\xv_\mathrm{ex}^{(0)}.
    \end{align*}
    It is easy to verify that $\Psi_{K}M$ is still a row-stochastic matrix and 
    \begin{align*}
        \Psi_{K}M&=\begin{bmatrix}
            (\Psi_{K}M)_{1:n', 1:n'}&\pmb{0}_{n'\times (n-n')}\\
            *&\pmb{0}_{(n-n')\times(n-n')}
        \end{bmatrix},\\(\Psi_{K}M)^t&=\begin{bmatrix}
            ((\Psi_K M)_{1:n', 1:n'})^t&\pmb{0}_{n'\times (n-n')}\\
            *&\pmb{0}_{(n-n')\times(n-n')}
        \end{bmatrix}
    \end{align*}
    The primitivity of the block $(\Psi_{K}M)_{1:n', 1:n'}$ comes from an observation: We assume a virtual node $n+1$ assigned with the opinions of $\mathcal{V}'$, which represents the union of $\mathcal{V}\backslash\mathcal{V}'$. From each node in $\mathcal{V}'$, there's a directed edge to node $n+1$. While the node $n+1$ is connected with some of nodes in $\mathcal{V}'$ with edges starting at $n+1$ because of the connectivity assumption. Hence for any pair of node in $\mathcal{V}'$, there is a path between them, the primitivity is immediate by noticing $(\Psi_K)_{ii}>0$ for $i\in[n]$. Hence opinions in $\mathcal{V}'$ will converge to consensus using the same proof as Theorem~\ref{theorem:consensus}. As the platform is using the mean predictor, we conclude that the whole population will converge to consensus as $t\rightarrow \infty$. We observe that the consensus opinion $c'$ is influenced by nodes in $\mathcal{V}'$ and no assumption about $\alpha_j$,~$j\in\mathcal{V}/\mathcal{V}'$, is needed. And when $\pmb{x}^*$ differs from $\tilde{\xv}^*$ in some $i\in\mathcal{V}'$, the consensus also differs since $\lim_{t\rightarrow\infty}((\Psi_K M)_{1:n',1:n'})^t$ is a positive matrix. 
    \item For general $\Lambda_{\pmb{\beta}}$, since $\lim_{t\rightarrow\infty}((\Psi_K M)_{1:n',1:n'})^t$ exists, we know $\lim_{t\rightarrow\infty}((\Psi_K\Lambda_\beta M)_{1:n',1:n'})^t$ also exists and apply the proof of Proposition~\ref{prop:existencePS}. The dynamics over $\mathcal{V}'$ converges. And the equilibrium can be characterized as follows
    \begin{align*}
        \xv_\mathrm{PS}&=\Psi_K\begin{bmatrix}
            \Lambda_{\pmb{\beta},1}\xv_\mathrm{PS}'+(I_{n'}-\Lambda_{\beta,1})(\xv^*)_{1:n'}\\
            \frac{\sum_{i\in\mathcal{V}'}(\xv_\mathrm{PS})_i}{n'}\Lambda_{\beta,2}\pmb{1}+(I_{n-n'}-\Lambda_{\beta,2})(\xv^*)_{n'+1:n}
        \end{bmatrix},
    \end{align*}
    where $\xv_\mathrm{PS}'$ is influenced by $x_i^*\;\forall i\in\mathcal{V}'$ as long as the coefficient of $\frac{\sum_{i\in\mathcal{V}'}(\xv_\mathrm{PS})_i}{n'}$ is positive. Explicitly,
    \begin{align*}
        \xv_\mathrm{PS}'&=(\Psi_K\Lambda_\beta M)_{1:n',1:n'}\xv_\mathrm{PS}'+(\Psi_K(I_n-\Lambda_\beta)\xv^*)_{1:n'}\\
        (\xv_\mathrm{PS}\mid_{\xv^*}-\xv_\mathrm{PS}\mid_{\xv^*})&=(x_{i_0}^*-\tilde{x}_{i_0}^*)(I_{n'}-(\Psi_K\Lambda_\beta M)_{1:n',1:n'})^{-1}(\Psi_K(I_n-\Lambda_\beta)\ev_{i_0})_{1:n'}.
    \end{align*}
    Note that since $(I_{n'}-(\Psi_K\Lambda_\beta M)_{1:n',1:n'})^{-1}=\sum_{t=0}^{\infty}((\Psi_K\Lambda_\beta M)_{1:n',1:n'})^t$ is positive. The variation of $\xv_\mathrm{PS}'$ has the same sign as $x_{i_0}^*-\tilde{x}_{i_0}^*$.
    \end{itemize}
    Since $\xv_{i_0}^*\neq\tilde\xv_{i_0}^*$, for some $i_0\in [n']$, $\xv_i^*=\tilde\xv_i^*$, $\forall i\in[n']/\{i_o\}$, and note $\alpha_q=0$, we have
    \begin{align*}
        \xv_\mathrm{PS}\mid_{\xv^*}-\xv_\mathrm{PS}\mid_{\tilde\xv^*}&=\Psi_K\begin{bmatrix}
            \Lambda_{\beta,1}(\xv_\mathrm{PS}'\mid_{\xv^*}-\xv_\mathrm{PS}'\mid_{\tilde{\xv}^*})+(1-\beta_{i_0})(x_{i_0}^*-\tilde{x}_{i_0}^*)\ev_{i_0}\\
            \frac{\sum_{i\in\mathcal{V}'}((\xv_\mathrm{PS}\mid_{\xv^*})_i-\xv_\mathrm{PS}\mid_{\tilde{\xv}^*})_i)}{n'}\Lambda_{\beta,2}\pmb{1}
        \end{bmatrix}\\
       (\xv_\mathrm{PS}\mid_{\tilde\xv})_q-(\xv_\mathrm{PS}\mid_{\tilde\xv^*})_q&=\sum_{i=1}^{n'}(\Psi_K)_{q,i}\beta_i((\xv_\mathrm{PS}\mid_{\xv^*})_i-(\xv_\mathrm{PS}\mid_{\tilde\xv^*})_i)\\
       &\quad+\sum_{k=1}^{n'}\frac{\sum_{i\in[n']}(\xv_\mathrm{PS}\mid_{\xv^*})_i-(\xv_\mathrm{PS}\mid_{\tilde\xv^*})_i}{n'}\sum_{j=n'+1}^n(\Psi_K)_{q,j}\beta_j\\
       &> \beta_q\sum_{k=1}^{n'}\frac{\sum_{i\in[n']}(\xv_\mathrm{PS}\mid_{\xv^*})_i-(\xv_\mathrm{PS}\mid_{\tilde\xv^*})_i}{n'}\\
       &=\frac{\beta_q(x_{i_0}^*-\tilde{x}_{i_0}^*)}{n'}\pmb{1}^\top (I_{n'}-(\Psi_K\Lambda_\beta M)_{1:n',1:n'})^{-1}(\Psi_K(I_n-\Lambda_\beta)\ev_{i_0})_{1:n'}.
    \end{align*}
    Specially when $\Lambda_{\beta,1}=I_{n'}$,
    \begin{align*}
        (\xv_\mathrm{PS}\mid_{\tilde\xv})_q-(\xv_\mathrm{PS}\mid_{\tilde\xv^*})_q&=((\xv_\mathrm{PS}\mid_{\xv^*})_1-(\xv_\mathrm{PS}\mid_{\tilde{\xv}^*})_1)(1-\sum_{j=n'+1}^{n}(1-\beta_j)(\Psi_K)_{q,j})\\
        &=y_{i_0}'(x_{i_0}^*-\tilde{x}_{i_0}^*)(1-\sum_{j=n'+1}^{n}(1-\beta_j)(\Psi_K)_{q,j}),
    \end{align*}
    where $\pmb{y}'=\begin{bmatrix}
        y_1',\cdots,y_{n'}'
    \end{bmatrix}^\top\in\mathbb{R}^{n'}$ is the eigenvector of $(\Psi_KM)_{1:n',1:n'}^\top$ w.r.t. 1. Since $(\Psi_KM)_{1:n',1:n'}^\top$ is irreducible, apply Theorem (2.10) in \citet{berman1994nonnegative}, we know $\pmb{y}'$ is positive. Hence by setting $c:=\min_{i\in[n']} |y_i'|$, we conclude the proof.
\end{proof}

\subsection{Proof of Theorem~\ref{prop:spillover}}
\label{app:spillover}
To characterize the indirect platform influence, we mainly prove and utilize the fact that the equilibrium under the assumptions of Theorem~\ref{prop:spillover} can be written as
\begin{align}
    \xv_\mathrm{PS} = \frac{\beta s(1-\gamma+\gamma\psi_2)\psi_1}{1-\gamma+2\gamma\psi_2-\gamma^2\psi_1\psi_2}e_j-\frac{\beta s\gamma\psi_1\psi_2}{1-\gamma+2\gamma\psi_2-\gamma^2\psi_1\psi_2}e_l+\frac{\beta s\psi_2}{1-\gamma+2\gamma\psi_2-\gamma^2\psi_1\psi_2}\pmb{1}
    \label{eq:x_ps_decomposition}
\end{align}
where $\psi_1:=\frac{(1-\alpha)(n-1)}{n-1+\alpha}>0$, $\psi_2:=\frac{\alpha}{n-1+\alpha}>0$.

With the decomposition in~\eqref{eq:x_ps_decomposition}, we have
\begin{align*}
    \mathrm{Mean}(\xv_\mathrm{PS})&=\frac{\pmb{1}^\top \xv_\mathrm{PS}}{n}=\frac{\beta s(1-\gamma\psi_1)}{n(1-\gamma+2\gamma\psi_2-\gamma^2\psi_1\psi_2)}.
    \label{eq:PSapp}
\end{align*}
In particular, $(\xv_\mathrm{PS})_l=\frac{\beta s\psi_2(-\gamma\psi_1+1)}{1-\gamma+2\gamma\psi_2-\gamma^2\psi_1\psi_2}$.
    Take the derivative of $\mathrm{Mean}(\xv_\mathrm{PS})$, $(\xv_\mathrm{PS})_l$ with respect to $\gamma$ and we have 
    \begin{align*}
        \frac{\partial \mathrm{Mean}(\xv_\mathrm{PS}) }{\partial \gamma}&=\frac{\beta s(1-\psi_1-2\psi_2+2\gamma\psi_1\psi_2-\gamma^2\psi_1^2\psi_2)}{(1-\gamma+2\gamma\psi_2-\gamma^2\psi_1\psi_2)^2}> 0.\\
        \frac{\partial (\xv_\mathrm{PS})_l}{\partial \gamma}&=\frac{\beta s \psi_2(1-\psi_1-2\psi_2+2\psi_1\psi_2-\gamma^2\psi_1^2\psi_2)}{(1-\gamma+2\gamma\psi_2-\gamma^2\psi_1\psi_2)^2}>0.
    \end{align*}
    The above inequalities hold since
    \begin{align*}
        1-\psi_1-2\psi_2&\geq 1-\psi_1-n\psi_2=0,\,\text{and}\\
       2\psi_1\psi_2-\gamma^2\psi_1^2\psi_2&> 2\gamma\psi_1\psi_2-\gamma^2\psi_1^2\psi_2>\gamma^2\psi_1\psi_2-\gamma^2\psi_1^2\psi_2\geq 0.
    \end{align*}
Hence $\frac{\partial\left( (\xv_\mathrm{PS})_l-(\xv_\mathrm{PS})_l^{\gamma=0}\right)}{\partial\gamma}>0.$

\begin{proof}[Proof of Equation~\eqref{eq:x_ps_decomposition}]
Now we prove the decomposition of $\xv_\mathrm{PS}$ step by step. With the dynamics described in Proposition~\ref{prop:spillover}, we have the equations
   \begin{align*}
    \xv_\mathrm{init}^{(t+1)}&=(I_n-\Lambda')\xv^*+\tilde{\Lambda}\xv_\mathrm{ex}^{(t)}+\beta s e_j,  \\
       \xv_\mathrm{ex}^{(t+1)}&=\Psi_{\infty}\xv_\mathrm{init}^{(t+1)},
   \end{align*}
   where $\Lambda':=\gamma I_n-\gamma e_le_l^\top + (\beta-\gamma)e_je_j^\top$ and $\tilde{\Lambda}:=\gamma I_n-\gamma e_le_l^\top -\gamma e_je_j^\top$. Let $S:=I_n-\Psi_\infty \Lambda_\beta''$, $\Lambda'':=\gamma I_n-\gamma e_le_l^\top$. Since $\alpha,\gamma,\beta\in(0,1)$, the equilibrium is well-defined:
   \begin{align*}
       \xv_\mathrm{PS}&=(I_n-\Psi_\infty\tilde{\Lambda})^{-1}\Psi_\infty ((I_n-\Lambda')\xv^*+\beta s e_j)\\
       &=\left(S^{-1}-\frac{\gamma S^{-1}\Psi_\infty e_je_j^\top S^{-1}}{1+\gamma e_j^\top S^{-1}\Psi_\infty e_j}\right) \Psi_\infty \left( (I_n-\Lambda''-(\beta-\gamma)e_je_j^\top)\xv^*+\beta s e_j \right)\\
       &=S^{-1}\Psi(I_n-\Lambda'')\xv^*-\frac{\gamma S^{-1}\Psi_\infty e_je_j^\top S^{-1}\Psi_\infty (I_n-\Lambda'')\xv^* }{1+\gamma e_j^\top S^{-1}\Psi_\infty e_j}\\
       &\quad+\left(S^{-1}-\frac{\gamma S^{-1}\Psi_\infty e_je_j^\top S^{-1}}{1+\gamma e_j^\top S^{-1}\Psi_\infty e_j}\right)\Psi_\infty (\beta s -(\beta-\gamma)e_j^\top\xv^*)e_j\\
       &=\beta s \left(S^{-1}-\frac{\gamma S^{-1}\Psi_\infty e_je_j^\top S^{-1}}{1+\gamma e_j^\top S^{-1}\Psi_\infty e_j}\right)\Psi_\infty e_j
   \end{align*}
   where we apply Sherman–Morrison formula in the second equation. Furthermore, we apply Sherman–Morrison formula again and observe 
   \begin{align*}
       S^{-1}&=(I_n-\gamma \Psi_\infty )^{-1}-\frac{\gamma(I_n-\gamma\Psi_\infty)^{-1}\Psi_\infty e_le_l^\top (I_n-\gamma\Psi_\infty)^{-1}}{1+\gamma e_l^\top (I_n-\gamma\Psi_\infty)^{-1}\Psi_\infty e_l}.
   \end{align*}
   Note 
   \begin{align*}
       \Psi_\infty &= (1-\alpha)(I_n-\frac{\alpha}{n-1}(\pmb{1}\pmb{1}^\top-I_n))^{-1}\\
       &=\frac{(1-\alpha)(n-1)}{n-1+\alpha}I_n+\frac{\alpha}{n-1+\alpha}\pmb{1}\pmb{1}^\top\\
       &=\psi_1 I_n+\psi_2\pmb{1}\pmb{1}^\top
   \end{align*}
   where $\psi_1:=\frac{(1-\alpha)(n-1)}{n-1+\alpha}$, $\psi_2:=\frac{\alpha}{n-1+\alpha}$, and $\psi_1+n\psi_2=1$. Hence
   \begin{align*}
       (I_n-\gamma\Psi_\infty)^{-1}
       =\frac{1}{1-\gamma\psi_1}I_n+\frac{\gamma\psi_2}{(1-\gamma\psi_1)(1-\gamma)}\pmb{1}\pmb{1}^\top
       =i_1I_n+i_2\pmb{1}\pmb{1}^\top
   \end{align*}
    where $i_1:=\frac{1}{1-\gamma\psi_1}$, $i_2:=\frac{\gamma\psi_2}{(1-\gamma\psi_1)(1-\gamma)}$, and note $1+\gamma i_1\psi_1=i_1$, $\gamma\psi_2 i_1=(1-\gamma)i_2$. Then we have
    \begin{align*}
        S^{-1}&=i_1I_n+\left(i_2-\frac{\gamma (i_2^2+i_1i_2\psi_2)}{1+\gamma(i_1\psi_1+i_1\psi_2+i_2)}\right)\pmb{1}\pmb{1}^\top  -\frac{\gamma(\psi_1 i_1^2 e_le_l^\top+i_1(i_2+i_1\psi_2)\pmb{1}e_l^\top+i_1i_2\psi_1 e_l\pmb{1}^\top)}{1+\gamma(i_1\psi_1+i_1\psi_2+i_2)}\\
        &=i_1I_n+a_2\pmb{1}\pmb{1}^\top+a_3 e_le_l^\top +a_4\pmb{1}e_l^\top+a_5 e_l\pmb{1}^\top, 
    \end{align*}
    where
    \begin{align*}
        &a_2:=i_2-\frac{\gamma (i_2^2+i_1i_2\psi_2)}{1+\gamma(i_1\psi_1+i_1\psi_2+i_2)}=\frac{i_2(1+\gamma i_1\psi_1)}{1+\gamma(i_1\psi_1+i_1\psi_2+i_2)}=\frac{i_2 i_1}{i_1+i_2},\\
        &a_3:=\frac{-\gamma\psi_1 i_1^2}{1+\gamma(i_1\psi_1+i_1\psi_2+i_2)}=\frac{i_1-i_1^2}{i_1+i_2},\quad a_4:=\frac{-\gamma i_1 (i_2+i_1\psi_2)}{1+\gamma(i_1\psi_1+i_1\psi_2+i_2)}=\frac{-i_1i_2}{i_1+i_2},\\
        &a_5:=\frac{-\gamma i_1i_2\psi_1}{i_1+i_2}=\frac{i_2-i_1i_2}{i_1+i_2}.
    \end{align*}
    Note $j\neq l$ and then 
    \begin{align*}
        \xv_\mathrm{PS}&=\frac{\beta s}{1+\gamma(i_1\psi_1+a_2+a_4\psi_1+i_1\psi_2+a_4\psi_2)}(i_1\psi_1 e_j+(a_3\psi_2+a_5)e_l+(a_2+i_1\psi_2+a_4\psi_2)\pmb{1})\\
        &=\frac{\beta s(i_1+i_2)}{\gamma(i_1^2+i_1i_2+i_2)}\left((i_1-1)e_j+\frac{i_2-i_1i_2}{(i_1+i_2)}e_l+\frac{i_1i_2}{(i_1+i_2)}\pmb{1}\right),\\
        &=\frac{\beta s(1-\gamma+\gamma\psi_2)\psi_1}{1-\gamma+2\gamma\psi_2-\gamma^2\psi_1\psi_2}e_j-\frac{\beta s\gamma\psi_1\psi_2}{1-\gamma+2\gamma\psi_2-\gamma^2\psi_1\psi_2}e_l+\frac{\beta s\psi_2}{1-\gamma+2\gamma\psi_2-\gamma^2\psi_1\psi_2}\pmb{1}.
    \end{align*}
    which concludes the proof.
\end{proof}

\section{Peer-platform co-influence under the  DeGroot model}
\label{app:degroot}
For the main body of this paper we have focused on the Friedkin-Johnsen model to characterize peer dynamics. However, our analysis can also be extended to the DeGroot model~\citep{degroot1974reaching}. In the following we show how the consensus result and how the intuition form the Friedkin-Johnsen model transfers.  The technical challenge is resolved by the additional assumption that graph $\mathcal{G}$ is primitive. The other resultss can also be extended with minor modifications.

\begin{proposition}[Consensus under DeGroot model]
\label{prop:consensus_degroot}
Consider a network characterized by a simple primitive graph $\mathcal{G}$, its nodes with peer susceptibilities $\Lambda_\alpha=I_n$ and platform susceptibilities $\beta_i\in[0,1]\;\forall i\in[n]$. Suppose $K=\infty$ and the platform performs perfect prediction of past data. Then, there exists a constant $d^*\in[0,1]$ such that a consensus is reached at equilibrium, i.e., 
\[(\xv_\mathrm{PS})_i=d^*, \quad\forall i\in[n].\] Furthermore, if $W$ is doubly-stochastic, we have $d^*=\frac{\sum_{i=1}^n(1-\beta_i)x_i^*}{n-\sum_{i=1}^n\beta_i}$ when $\Lambda_\beta\neq I_n$, and $d^*=\frac{\sum_{i=1}^nx_i^*}{n}$ when $\Lambda_\beta=I_n$.
\end{proposition}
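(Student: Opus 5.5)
With $\Lambda_\alpha=I_n$ and $K=\infty$, each round of peer interaction is a full DeGroot run. Since $\mathcal G$ is primitive and $W$ is row-stochastic, Lemma~\ref{corollary:perron_frobenius_coro} gives
\[
\Psi_\infty=\lim_{k\to\infty}W^k=\pmb{1}\pmb{y}^\top ,
\]
where $\pmb{y}\geq\pmb 0$ is the left Perron vector of $W$ normalized by $\pmb 1^\top\pmb y=1$. This rank-one structure drives the whole argument.

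\textbf{Every round ends in consensus.} The recursion \eqref{eq:iterative_update} becomes
\[
\xv_\mathrm{ex}^{(t+1)}=\pmb 1\,\pmb y^\top\big((I_n-\Lambda_\beta)\xv^*+\Lambda_\beta\xv_\mathrm{ex}^{(t)}\big).
\]
So for every $t\ge1$ the expressed opinion has the form $c_t\pmb 1$. Using $\pmb y^\top\Lambda_\beta\pmb 1=\sum_i y_i\beta_i=:b$, the scalars satisfy the affine recursion
\[
c_{t+1}=\pmb y^\top(I_n-\Lambda_\beta)\xv^*+b\,c_t .
\]

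\textbf{Case $\Lambda_\beta\neq I_n$.} The vector $\pmb y$ is strictly positive: primitivity makes $W$ irreducible, and Perron--Frobenius then gives positivity. Hence $b<1$ whenever some $\beta_i<1$. The recursion is then a contraction with unique fixed point
\[
d^*=\frac{\pmb y^\top(I_n-\Lambda_\beta)\xv^*}{1-\pmb y^\top\Lambda_\beta\pmb 1}
=\frac{\sum_i y_i(1-\beta_i)x_i^*}{\sum_i y_i(1-\beta_i)} .
\]
This is a convex combination of the $x_i^*$, so $d^*\in[0,1]$. Equivalently, $\rho(\Psi_\infty\Lambda_\beta)=b<1$, because the only nonzero eigenvalue of the rank-one matrix $\pmb 1\pmb y^\top\Lambda_\beta$ is $\pmb y^\top\Lambda_\beta\pmb 1$. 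The closed form \eqref{eq:ps_chara} therefore applies and can be evaluated directly via Sherman--Morrison:
\[
(I_n-\pmb 1\pmb y^\top\Lambda_\beta)^{-1}=I_n+\frac{\pmb 1\pmb y^\top\Lambda_\beta}{1-b}.
\]
Multiplying by $\pmb 1\pmb y^\top(I_n-\Lambda_\beta)\xv^*$ gives $d^*\pmb 1$, which confirms the formula.

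\textbf{Case $\Lambda_\beta=I_n$.} Now $b=1$ and the constant term vanishes, so $c_{t+1}=c_t=\pmb y^\top\xv^*$ for all $t$. The process is stationary after the first round, and $d^*=\pmb y^\top\xv^*$. This matches the DeGroot equilibrium.

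\textbf{Doubly stochastic $W$.} If $W$ is doubly stochastic then $\pmb 1^\top W=\pmb 1^\top$, so uniqueness of the Perron vector for primitive $W$ gives $\pmb y=\pmb 1/n$. Substituting yields $d^*=\sum_i(1-\beta_i)x_i^*/(n-\sum_i\beta_i)$ when $\Lambda_\beta\neq I_n$, and $d^*=\frac1n\sum_i x_i^*$ when $\Lambda_\beta=I_n$.

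\textbf{Main obstacle.} The only delicate point is justifying $\Psi_\infty=\pmb 1\pmb y^\top$ with $\pmb y>\pmb 0$. Positivity is what guarantees $b<1$ as soon as a single $\beta_i<1$, and hence what excludes a degenerate denominator. I would settle this by appealing to primitivity through Lemma~\ref{corollary:perron_frobenius_coro} together with Perron--Frobenius positivity of the left eigenvector for irreducible matrices. The rest is scalar algebra.
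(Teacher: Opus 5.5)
Your proof is correct, and it reaches the result by a somewhat different route from the paper. The paper stays inside the general machinery of Proposition~\ref{prop:existencePS}. It obtains $\rho(W^\infty\Lambda_\beta)<1$ from a Perron--Frobenius comparison corollary, using $W^\infty\Lambda_\beta\le W^\infty$, $W^\infty\Lambda_\beta\neq W^\infty$, and irreducibility of their sum. It then expands $\xv_\mathrm{PS}$ as the Neumann series $\sum_{t\ge0}(W^\infty\Lambda_\beta)^tW^\infty(I_n-\Lambda_\beta)\xv^*$ and observes that every term is a multiple of $\pmb 1$. It identifies $d^*$ only for doubly stochastic $W$, via mean preservation: $nd^*=\sum_i(1-\beta_i)x_i^*+d^*\sum_j\beta_j$.

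You instead use the rank-one form $\Psi_\infty=\pmb 1\pmb y^\top$ from the outset. This collapses the outer loop to a scalar affine recursion whose factor $b=\sum_i y_i\beta_i$ is exactly the spectral radius, and strict positivity of the Perron vector replaces the comparison corollary. Your route gives more than the paper states:
\begin{itemize}
\item an explicit linear rate $b$;
\item the closed form $d^*=\sum_i y_i(1-\beta_i)x_i^*/\sum_i y_i(1-\beta_i)$ for every primitive $W$, where $y_i\propto d_i$ because $W=\deg^{-1}A$ with $A$ symmetric;
\item $d^*\in[0,1]$ as a genuine convex combination.
\end{itemize}
On the last point, the paper's stated justification is that each Neumann term lies in $[0,1]$, which does not by itself bound the sum. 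The claim is still true, because $(I_n-W^\infty\Lambda_\beta)^{-1}W^\infty(I_n-\Lambda_\beta)$ is row-stochastic, but the paper leaves that implicit. Your doubly stochastic formulas are then just the substitution $\pmb y=\pmb 1/n$, which is equivalent to the paper's averaging step. What the paper's route buys is uniformity with the FJ analysis: the spectral-radius comparison does not need $\Psi$ to be rank one.

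One detail you should state explicitly concerns the case $\Lambda_\beta=I_n$. There every $c\pmb 1$ is a fixed point, so $c_1=\pmb y^\top\xv^*$ is determined by the initialization $\fv^{(1)}=\xv^*$ of Algorithm~\ref{alg:retrainMain}, not by the fixed-point equation. The paper relies on the same convention when it writes $\xv_\mathrm{PS}=W^\infty\xv^*$.
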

\begin{proof}
    Under supervised learning policy~\eqref{eq:SL}, apply \eqref{eq:sl_iterative_update} and set $\Lambda_\alpha=I_n$, we have
    \begin{align*}
        \pmb{x}_\mathrm{ex}^{(t+1)}=W^{\infty}((I_n-\Lambda_\beta)\xv^*+\Lambda_\beta\pmb{x}_\mathrm{ex}^{(t)}).
    \end{align*}
    \begin{itemize}
        \item When $\Lambda_\beta\neq I_n$, note $W^\infty\Lambda_\beta\leq W^\infty$, $W^\infty\Lambda_\beta\neq W^\infty$ and $W^{\infty}\Lambda_\beta+W^\infty$ is irreducible since $W$ is primitive, apply a corollary of Perron-Frobenius theorem (Corollary 1.5.(b) in~\cite{berman1994nonnegative}) and we have $\rho(W^\infty\Lambda_\beta)<\rho(W^\infty)=1$. Hence we characterize the equilibrium as follows
    \begin{align*}
        \xv_\mathrm{PS}=(I_n-W^{\infty}\Lambda_\beta)^{-1}W^\infty(I_n-\Lambda_\beta)\xv^*=\sum_{t=0}^{\infty}(W^{\infty}\Lambda_\beta)^t W^{\infty}(I_n-\Lambda_\beta)\xv^*.
    \end{align*}
    Apply Perron-Frobenius theorem, we have $(I_n-W^{\infty}\Lambda_\beta)^{-1}$ is well-defined since $\beta_i\in(0,1)$ and $W^\infty$ is primitive. Moreover, we know $(W^\infty\Lambda_\beta)^tW^\infty(I_n-\Lambda_\beta)\xv^*=c_t\pmb{1}$ for some constant $c_t\in[0,1],\forall t\geq 0$. Hence there exists $d^*\in[0,1]$ such that $(\xv_\mathrm{PS})_i=d^*$, $\forall i\in[n]$. 
    \item When $\Lambda_\beta=I_n$, we have consensus again since $W$ is primitive and
    \begin{align*}
        \xv_\mathrm{PS}=W^\infty\xv^*.
    \end{align*}
    \end{itemize}

    When $W$ is doubly-stochastic, we know each node has the same long-run influence by applying Lemma~\ref{corollary:perron_frobenius_coro}. In other words, $\bar{\xv}_\mathrm{init}^{(t)}=\bar{\xv}_\mathrm{ex}^{(t)}$, i.e., when $\Lambda_\beta\neq I_n$,
    \begin{align*}
        nd^*=\sum_{i=1}^n(1-\beta_i)x_i^*+d^*\sum_{j=1}^n\beta_j.
    \end{align*}
    When $\Lambda_\beta=I_n$, we have
    \begin{align*}
        d^*=\frac{\sum_{i=1}^nx_i^*}{n},
    \end{align*}
    and we conclude our proof.
\end{proof}

\section{Results with finite $K$}
\label{app:finite_k}
We consider the following steering policy~\eqref{eq:algo} where only finite $K$ steps are observed for updating policies and the following equilibrium point performative stability
\begin{align}
    \xv_{\mathrm{PS}}\coloneqq \lim_{t\rightarrow\infty}\xv_\mathrm{ex}^{(t)}.
    \label{eq:ps_k_finite}
\end{align}

\begin{theorem}
\label{theorem:equilibrium_decomposition_finiteK}
    (Decomposition of equilibrium under steering with finite $K$) For connected simple graph $\mathcal{G}$, let $d_i\equiv d\in\mathbb{N}^*$, $\alpha_i\equiv\alpha\in(0,1)$, $\beta_i\equiv\beta\in[0,1]$ for $\forall i\in[n]$. Under \eqref{eq:steering}, the equilibrium opinion, $\pmb{x}_{\mathrm{PS}}$ can be decomposed into consensus and heterogeneous components as follows
    \begin{align*}
        \pmb{x}_{\mathrm{PS}}&=\sum_{i=1}^n\frac{\lambda_i}{1-\beta\lambda_i}\pmb{v}_i\pmb{v}_i^{\top}((1-\beta)\pmb{x}^*),
    \end{align*}
    where $\{\lambda_i\}$ are eigenvalues of $\Psi_K$ with $\{\pmb{v}_i\}$ being the corresponding eigenvectors. W.l.o.g., let $\lambda_1=1$, $\pmb{v}_1=\frac{1}{\sqrt{n}}\pmb{1}^{\top}$ and we have 
    \begin{align*}
        \frac{\lambda_1}{1-\beta\lambda_1}&=\frac{1}{1-\beta},\\ |\frac{\lambda_i}{1-\beta\lambda_i}|&<\frac{1}{1-\beta},\,\forall i\geq 2.
    \end{align*}
\end{theorem}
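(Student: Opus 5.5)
The plan is to reuse the spectral argument of Proposition~\ref{theorem:equilibrium_decomposition}, now with finite $K$. Despite the reference to \eqref{eq:steering}, the stated formula involves only $\xv^*$ and no intervention $s$. I will therefore read it as the perfect-prediction equilibrium for finite $K$, namely \eqref{eq:ps_chara}:
\[
\xv_\mathrm{PS}=(I_n-\beta\Psi_K)^{-1}\Psi_K(1-\beta)\xv^*,
\]
valid for $\beta<1$ by Proposition~\ref{prop:existencePS}. The first step is to diagonalize $\Psi_K$.

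\textbf{Step 1: diagonalize $\Psi_K$.} Since the graph is $d$-regular, $W=\tfrac1d A$ is symmetric, so $W=U\diag(\mu_1,\dots,\mu_n)U^\top$ with $U$ orthogonal, $\mu_1=1$ and $\pmb v_1=\pmb 1/\sqrt n$. Because $\Lambda_\alpha=\alpha I_n$, every term of
\[
\Psi_K=\sum_{i=0}^{K-1}(\alpha W)^i(1-\alpha)+(\alpha W)^K
\]
is a polynomial in $W$. Hence $\Psi_K=U\diag(\lambda_1,\dots,\lambda_n)U^\top$ with
\[
\lambda_i=(1-\alpha)\frac{1-(\alpha\mu_i)^K}{1-\alpha\mu_i}+(\alpha\mu_i)^K .
\]
Then $(I_n-\beta\Psi_K)^{-1}\Psi_K$ has eigenvalues $\lambda_i/(1-\beta\lambda_i)$ on the same eigenvectors. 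Multiplying by $(1-\beta)\xv^*$ gives the claimed sum $\sum_i \frac{\lambda_i}{1-\beta\lambda_i}\pmb v_i\pmb v_i^\top((1-\beta)\xv^*)$. For $\mu_1=1$ the formula gives $\lambda_1=1$, which yields the first coefficient $1/(1-\beta)$.

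\textbf{Step 2: bound the remaining coefficients.} It remains to show $|\lambda_i/(1-\beta\lambda_i)|<1/(1-\beta)$ for $i\ge2$. I will first establish $|\lambda_i|<1$ for $i\geq 2$, and expect this to be the main obstacle, since the $\mu_i$ may be negative (for instance $\mu_i=-1$ on bipartite graphs). Rewrite $\lambda_i=g(\alpha\mu_i)$ with
\[
g(z)=(1-\alpha)\sum_{k=0}^{K-1}z^k+z^K .
\]
Then $|g(z)|\le (1-\alpha)\sum_{k<K}|z|^k+|z|^K$, and the right-hand side equals $g(|z|)$. Since $g(1)=1$ and $g$ is increasing on $[0,1]$, we get $|g(z)|\le 1$ whenever $|z|\le1$. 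The case of equality must be excluded. When $|\mu_i|<1$ we have $|\alpha\mu_i|<1$, and strict monotonicity of $g$ on $[0,1]$ gives $g(|z|)<1$, so $|\lambda_i|<1$. The only other possibility is $\mu_i=-1$; then $z=-\alpha$ with $\alpha<1$, so $|z|<1$ again and the strict bound holds. Here strictness also relies on the coefficient $1-\alpha$ being positive. Connectivity guarantees $\mu_i\ne1$ for $i\ge2$.

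\textbf{Step 3: conclude.} With $|\lambda_i|<1$ and $\beta\in[0,1)$,
\[
|1-\beta\lambda_i|\ \ge\ 1-\beta|\lambda_i|\ >\ 1-\beta ,
\]
so
\[
\left|\frac{\lambda_i}{1-\beta\lambda_i}\right|<\frac{|\lambda_i|}{1-\beta}<\frac{1}{1-\beta}.
\]
This completes the bound.

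\textbf{Edge case.} The case $\beta=1$ has to be handled as a limit, or excluded, since the coefficient $1/(1-\beta)$ is then undefined; there the formula should be read as the limit $\lim_t\Psi_K^t$ from Theorem~\ref{theorem:consensus}.
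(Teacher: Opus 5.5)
\textbf{There is a gap in Step~2, at exactly the step you flagged as the crux.} Otherwise your route matches the paper's: read the statement as the perfect-prediction equilibrium \eqref{eq:ps_chara}, diagonalize the symmetric $W$, note that $\Psi_K$ is a polynomial in $W$, and bound $|\lambda_i|$. Step~1 is correct.

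The normalization in Step~2 is wrong. You have $g(1)=1+K(1-\alpha)>1$, so ``$|g(z)|\le 1$ whenever $|z|\le 1$'' is false; for $K=1$, $g(z)=1-\alpha+z$. The correct normalization is $g(\alpha)=1$, which is how you obtained $\lambda_1=1$. The relevant arguments satisfy $|z|=\alpha|\mu_i|\le\alpha$.

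For $|\mu_i|<1$ the fix is cosmetic: $|z|<\alpha$, and strict monotonicity gives $|\lambda_i|\le g(|z|)<g(\alpha)=1$. For $\mu_i=-1$ (bipartite graphs), however, $|z|=\alpha$ exactly. Then $g(|z|)=1$, and the monotonicity argument yields only $|\lambda_i|\le 1$. Your sentence ``$|z|<1$ again and the strict bound holds'' rests on the false normalization. That is not enough for Step~3, which needs $|\lambda_i|<1$.

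The missing observation is that for $z=-\alpha$ the triangle inequality $|g(-\alpha)|\le g(\alpha)$ is strict, because the terms have mixed signs. The $k=0$ term $1-\alpha$ is positive, while the next term is negative: $-\alpha(1-\alpha)$ if $K\ge 2$, or $-\alpha$ if $K=1$. Explicitly, $g(-\alpha)=(1-\alpha+2\alpha(-\alpha)^K)/(1+\alpha)\in(-1,1)$. Your remark that strictness ``relies on the coefficient $1-\alpha$ being positive'' gestures at this but does not state it.

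The paper avoids the case split by writing $\lambda_i=c_i+(1-c_i)(\alpha\mu_i)^K$ with $c_i=\frac{1-\alpha}{1-\alpha\mu_i}\in(0,1)$ for $\mu_i\in[-1,1)$. This is a convex combination of $1$ and a number of modulus at most $\alpha^K<1$, with positive weight on the latter. Hence $|\lambda_i|\le c_i+(1-c_i)\alpha^K<1$ uniformly for all $i\ge 2$.

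A minor point in Step~3: at $\beta=0$ the inequality $1-\beta|\lambda_i|>1-\beta$ is an equality, so the first strict inequality in your final chain should be $\le$. The conclusion still holds, because $|\lambda_i|<1$ makes the second inequality strict.
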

    \begin{proof}
        First of all, since $W$ is symmetric, $W$ can be decomposed as follows
        \begin{align*}
            W&=U\diag(\mu_1,\cdots,\mu_n)U^{\top},\,U\coloneqq\begin{bmatrix}
                \pmb{v}_1&\cdots&\pmb{v}_n
            \end{bmatrix},
        \end{align*}
        where $U^{\top}U=UU^{\top}=I_n$ and $\{\mu_i\}_{i\in[n]}$ are eigenvalues of $W$ and $\{\pmb{v}_i\}_{i\in[n]}$ are corresponding eigenvectors. Since $W$ is row-stochastic, apply Gershgorin circle theorem, we have $\rho(W)=1$. Apply Perron-Frobenius theorem since $W$ is also irreducible, we obtain $\mu_1=1$ and $\pmb{v}_1=\frac{1}{\sqrt{n}}\pmb{1}^{\top}$ w.l.o.g. and $\mu_i\neq 1$ and $|\mu_i|\leq 1$ for $\forall i\geq 2$. From \eqref{eq:ps_chara}, we can easily obtain
        \begin{align*}
            \Psi_K&=U\diag(\lambda_1,\cdots,\lambda_n)U^{\top},\\
            \lambda_i&\coloneqq \frac{1-\alpha}{1-\alpha\mu_i}+\left(1-\frac{1-\alpha}{1-\alpha\mu_i}\right)\alpha^K\mu_i^K,\\
            \xv_{\mathrm{PS}}&=U\diag\left(\frac{\lambda_1}{1-\beta\lambda_1},\cdots,\frac{\lambda_n}{1-\beta\lambda_n}\right)U^{\top}.
        \end{align*}
        We can easily verify that $\lambda_1=1$, $|\lambda_i|<1$ for $\forall i\in[n]$ by noticing $\mu_i\neq 1$, $|\mu_i|\leq 1$ for $\forall i\geq 2$, and observe
        \begin{align*}
            \frac{1-\alpha}{1-\alpha\mu_i}<1,\,
            |\lambda_i|<\max\{1,\alpha^K\mu_i^K\} =1.
        \end{align*}
        Hence we conclude the proof.
    \end{proof}

\begin{lemma}
(Mean and variance of $\pmb{x}_{\mathrm{PS}}$ with finite $K$)
Under the same setting as Theorem~\ref{theorem:equilibrium_decomposition}, and consider the mean and the variance of $\xv_{\mathrm{PS}}$ defined in~\eqref{eq:ps_k_finite}, the following claims apply.
\begin{enumerate}
    \item The invariant opinion $\xv^*$ determine the mean 
    \begin{align*}
        \bar{\xv}_{\mathrm{PS}}=\frac{\pmb{1}^{\top}\xv^*}{n}.
    \end{align*}
    \item The variance $\mathrm{Var}(\xv_{\mathrm{PS}})$ increases as $\alpha$ increases if $K$ is large enough. As for dependence over $\beta$, we have the variance decreases as $\beta$ increases.
\end{enumerate}
\end{lemma}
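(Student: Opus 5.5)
The plan is to work entirely with the spectral decomposition in Theorem~\ref{theorem:equilibrium_decomposition_finiteK}, specialised to the setting with a $d$-regular graph (so $W$ is symmetric) and homogeneous $\alpha,\beta$. Write $W=U\diag(\mu_1,\dots,\mu_n)U^\top$ with $\mu_1=1$ and $\pmb v_1=\tfrac{1}{\sqrt n}\pmb 1$. Then $\Psi_K=U\diag(\lambda_1,\dots,\lambda_n)U^\top$ with
\[\lambda_i=r_i+(1-r_i)(\alpha\mu_i)^K,\qquad r_i:=\frac{1-\alpha}{1-\alpha\mu_i}.\]
Set $h(\lambda):=\frac{(1-\beta)\lambda}{1-\beta\lambda}$. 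For $\beta<1$ the fixed point is $\xv_{\mathrm{PS}}=(1-\beta)(I_n-\beta\Psi_K)^{-1}\Psi_K\xv^*$, which gives
\[\xv_{\mathrm{PS}}=\sum_{i=1}^n h(\lambda_i)\,(\pmb v_i^\top\xv^*)\,\pmb v_i .\]
Every statement then reduces to scalar monotonicity of $h(\lambda_i)^2$.

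\textbf{Mean.} Since $\lambda_1=1$, we have $h(\lambda_1)=1$. Because $\pmb v_i\perp\pmb 1$ for $i\ge2$, multiplying by $\tfrac1n\pmb 1^\top$ kills every term except the consensus one. This leaves $\tfrac1n\pmb 1^\top\pmb v_1\pmb v_1^\top\xv^*=\tfrac1n\pmb 1^\top\xv^*$. The case $\beta=1$ follows either by continuity or directly from $\lim_t\Psi_K^t=\tfrac1n\pmb 1\pmb 1^\top$, using Lemma~\ref{corollary:perron_frobenius_coro} and the double stochasticity of the symmetric $\Psi_K$.

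\textbf{Variance.} Orthonormality gives
\[\mathrm{Var}(\xv_{\mathrm{PS}})=\sum_{i\ge2}h(\lambda_i)^2(\pmb v_i^\top\xv^*)^2,\]
with $|\lambda_i|<1$ for $i\ge2$ by Theorem~\ref{theorem:equilibrium_decomposition_finiteK}. For $\beta$, note that $\lambda_i$ does not depend on $\beta$, and
\[\partial_\beta h(\lambda)^2=\frac{2(1-\beta)\lambda^2(\lambda-1)}{(1-\beta\lambda)^3}.\]
This is $\le0$ because $\lambda<1$ and $1-\beta\lambda>0$. It is strictly negative whenever some $\pmb v_i^\top\xv^*\neq0$, i.e.\ whenever $\xv^*$ is non-uniform, and this holds irrespective of the sign of $\lambda_i$. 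This settles the $\beta$ claim.

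\textbf{Dependence on $\alpha$ (the main obstacle).} Here I would use the chain rule $\partial_\alpha h(\lambda_i)^2=2h(\lambda_i)h'(\lambda_i)\,\partial_\alpha\lambda_i$, where $h'(\lambda)=\frac{1-\beta}{(1-\beta\lambda)^2}>0$. The "$K$ large enough" hypothesis is used as follows:
\begin{itemize}
\item We have $\partial_\alpha r_i=-\frac{1-\mu_i}{(1-\alpha\mu_i)^2}$. For $i\ge2$ this is bounded away from zero, uniformly for $\alpha$ in a compact subset of $(0,1)$, since $\mu_i\neq1$.
\item The correction $(1-r_i)(\alpha\mu_i)^K$ and its $\alpha$-derivative are $\mathcal O(K\alpha^{K-1})$. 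Choosing $K$ large therefore makes $\operatorname{sign}\partial_\alpha\lambda_i=\operatorname{sign}\partial_\alpha r_i$.
\item Likewise $\lambda_i>0$ for large $K$, because $r_i>0$, and hence $h(\lambda_i)>0$.
\end{itemize}
The step I expect to be delicate is precisely the sign bookkeeping in this chain. Carrying out the steps above literally yields $\partial_\alpha\lambda_i<0$, and hence a variance that is monotone \emph{decreasing} in the peer susceptibility $\alpha$. This agrees with the appendix derivative $\partial_\alpha(\lambda_i/(1-\beta\lambda_i))^2<0$ and the conclusion $\partial\mathrm{Var}/\partial\alpha<0$ obtained earlier for $K=\infty$. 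The same computation gives an \emph{increasing} variance in the stubbornness $1-\alpha$.

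So the proof as planned establishes the following. The strict monotonicity in $\alpha$ holds for $K$ large, with the direction "increasing" as worded only if $\alpha$ is read as stubbornness. With $\alpha$ as peer susceptibility, the direction that follows from this chain rule is decreasing. I would state that explicitly in the write-up rather than assert the sign the wording suggests.
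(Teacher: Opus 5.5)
Your proposal is correct and follows the paper's route. The paper's proof reuses the spectral argument of Proposition~\ref{prop:mean_variance_sl}, with the eigenvalues $\lambda_i$ of $\Psi_K$ from Theorem~\ref{theorem:equilibrium_decomposition_finiteK}, and takes $K$ large so that $\lambda_i>0$. You are also right that the $\alpha$-dependence comes out \emph{decreasing}. This matches the paper's own derivative $\partial_\alpha\bigl(\lambda_i/(1-\beta\lambda_i)\bigr)^2<0$, its conclusion $\partial\mathrm{Var}(\xv_{\mathrm{PS}})/\partial\alpha<0$ in the appendix, and its simulations, so ``increases'' in the statement is a slip.

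Your extra control of the sign of $\partial_\alpha\lambda_i$ is genuinely needed and is missing from the paper's sketch, because $\lambda_i>0$ alone does not suffice: for $K=2$ and $\mu_i=-1$ one has $\lambda_i=1-2\alpha+2\alpha^2>0$, yet $\partial_\alpha\lambda_i=4\alpha-2>0$ for $\alpha>1/2$. The only small fix concerns strictness of the $\beta$-monotonicity. It needs some $i\ge 2$ with both $\lambda_i\neq 0$ and $\pmb v_i^\top\xv^*\neq 0$, since $\lambda_i=0$ is possible for odd $K$ (e.g.\ $K=1$, $\mu_i=-1$, $\alpha=1/2$).
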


\begin{proof}
The structure of proof of Proposition~\ref{prop:mean_variance_sl} can be applied here. The only difference in the result is the dependence of variance over peer susceptibilities $\Lambda_{\alpha}$. Notice $\lambda_i$ can be negative. Hence by assuming $K$ is large enough,  
\begin{align*}
    K&\geq \max_i\left\{\log\frac{|1-\alpha|}{\alpha(1-\mu_i)}/\log(\frac{1}{\alpha|\mu_i|})\right\}
\end{align*}
we ensure $\lambda_i>0$ and the claims still apply. 
\end{proof}

\section{Additional simulations}
\label{app:more_simulations}
We defer part of the simulation results to this section. In Figure~\ref{fig:consensus_sl_pokec} and Figure~\ref{fig:mean_steer_pokec}, the same procedure for generating plots with Yelp dataset is applied to Pokec dataset. Note that we use $T=50$ for generating Figure~\ref{fig:platform_variance_sus_pokec} due to computation efficiency. The finite approximation leads to the slight deviation of $\mathrm{Var}(\xv_\mathrm{PS})$ from zero in the case when $\beta=1$. In Figure~\ref{fig:peer_variance_sl}, we use homogeneous peer susceptibilities, $\alpha_i\equiv\alpha$, and vary them while holding platform susceptibilities fixed. We observe the variance of opinion equilibrium decreases as peer susceptibility increases, which was proved for regular networks in Appendix~\ref{app:var_peer_platform_sus}. 

In Figure~\ref{fig:steer2}, under the mean estimation, the result illustrate Proposition~\ref{prop:consensus_mean_estimation} where the (peer)-stubborn individual $l\in U$ converges to a different equilibrium with different opinions in observed set $O$.

\begin{figure*}[t]
\centering
        \begin{subfigure}[b]{0.45\linewidth}
            \centering
    \includegraphics[width=\linewidth]{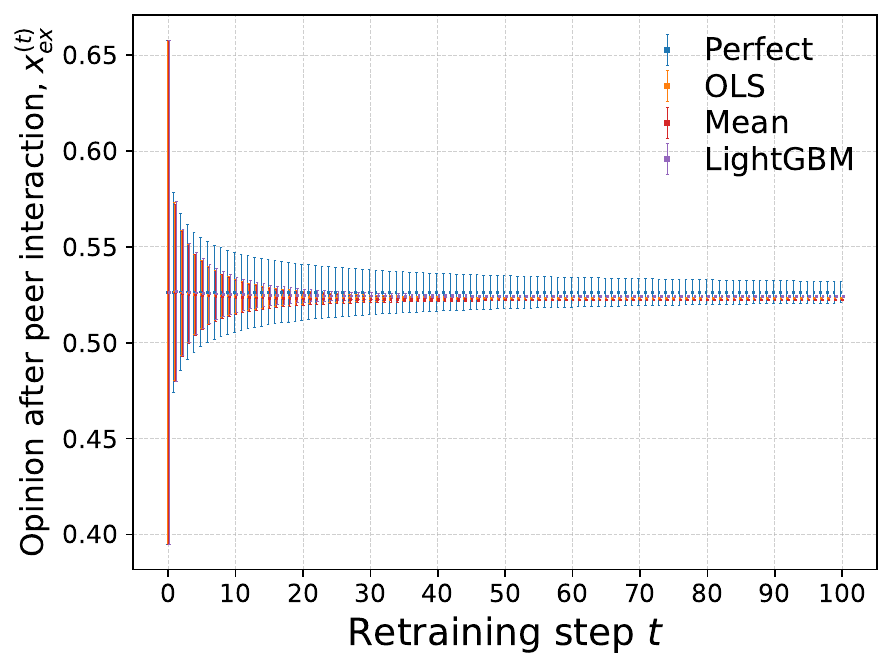}
            \caption{Opinions across retraining steps (Pokec)}
            \label{fig:sl_retrain_steps_pokec}
        \end{subfigure}
        \hfill
            \begin{subfigure}[b]{0.45\linewidth}
            \centering
            \includegraphics[width=\linewidth]{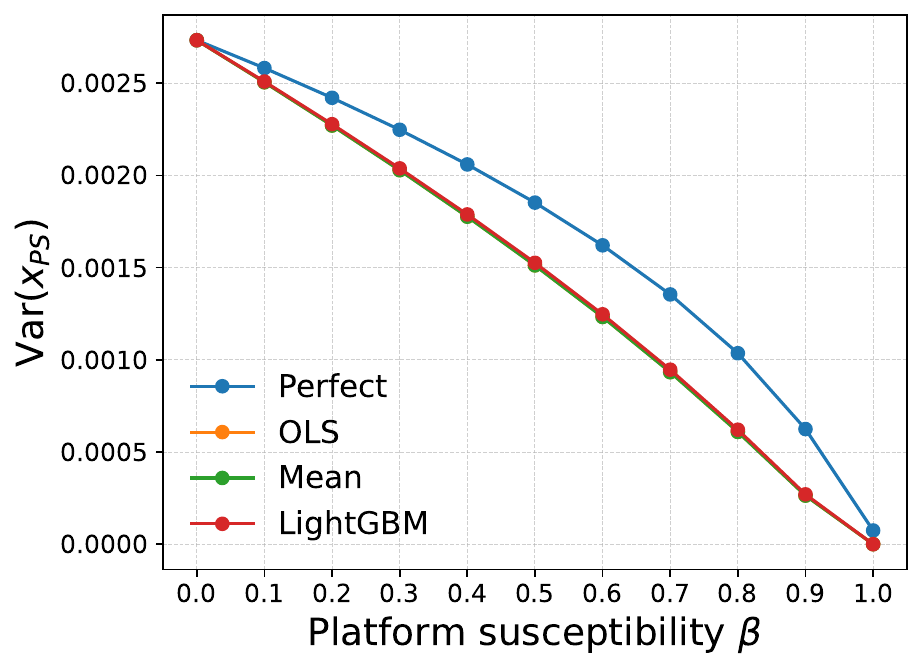}
            \caption{Opinions at equilibrium (Pokec)}
            \label{fig:platform_variance_sus_pokec}
        \end{subfigure}
    \caption{ In $(a)$, we show how performativity homogenizes opinions across retraining steps. The $x$-axis denotes the retraining step $t$, and the $y$-axis corresponds to individuals' expressed opinions in each times step, where the error bars indicate the variance. In (b), we show how platform susceptibility decreases variance of opinions at equilibrium. The $x$-axis denotes the varying homogeneous platform susceptibility while the $y$-axis denotes the variance of opinions in equilibrium. }
    \label{fig:consensus_sl_pokec} 
\end{figure*}

\begin{figure}[t]
    \begin{subfigure}[b]{0.45\linewidth}
        \centering
        \includegraphics[width=\linewidth]{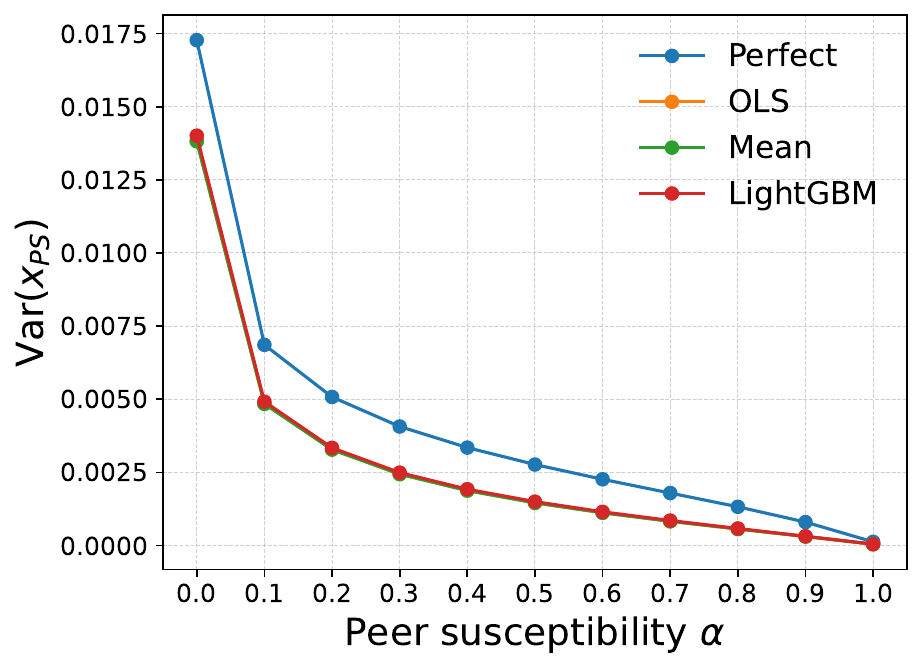}
        \caption{Opinions at equilibrium (Pokec) }
            \label{fig:peer_variance_sl_pokec}
    \end{subfigure}
    \hfill
    \begin{subfigure}[b]{0.45\linewidth}
    \centering 
    \includegraphics[width=\linewidth]{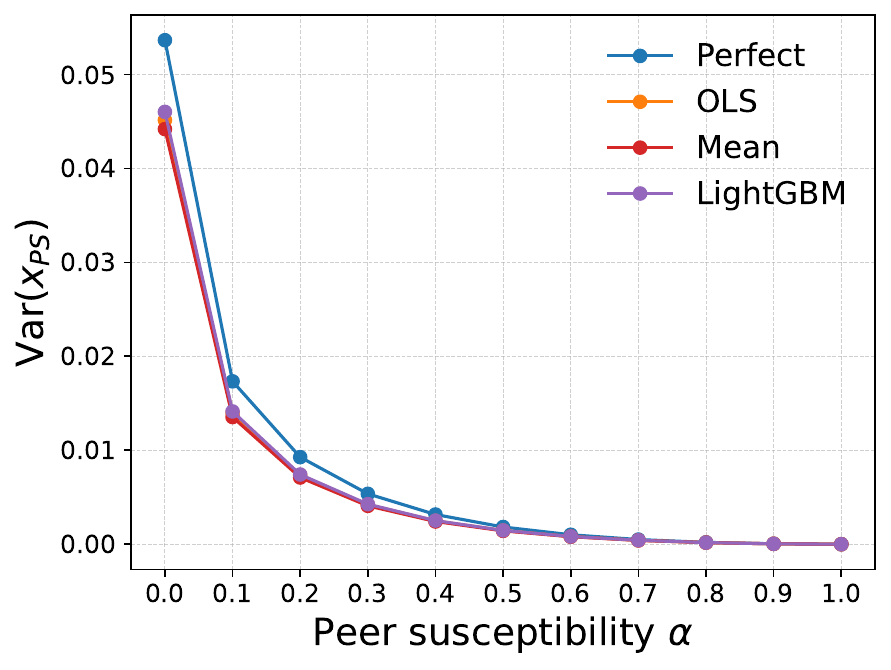}
    \caption{Opinions at equilibrium (Yelp)}
    \end{subfigure}
    \caption{ We show how peer susceptibility decreases variance at equilibrium. The $x$-axis denotes the varying homogeneous peer susceptibility while the $y$-axis denotes the variance of opinions in equilibrium. In~(a), we use Pokec dataset and in~(b), we use Yelp dataset.} 
    \label{fig:peer_variance_sl}
\end{figure}

\begin{figure}[t]
        \begin{subfigure}[b]{0.45\linewidth}
            \centering
            \includegraphics[width=\linewidth]{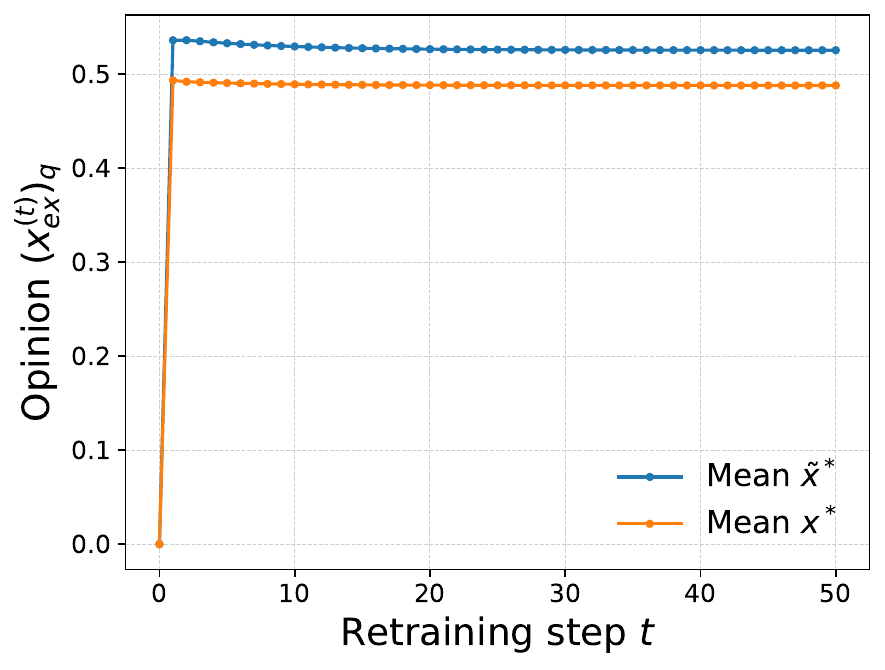 }
            \caption{Evolving opinion of stubborn individual (Pokec)
            }
        \end{subfigure}
        \hfill
        \begin{subfigure}[b]{0.45\linewidth}
            \centering
            \includegraphics[width=\linewidth]{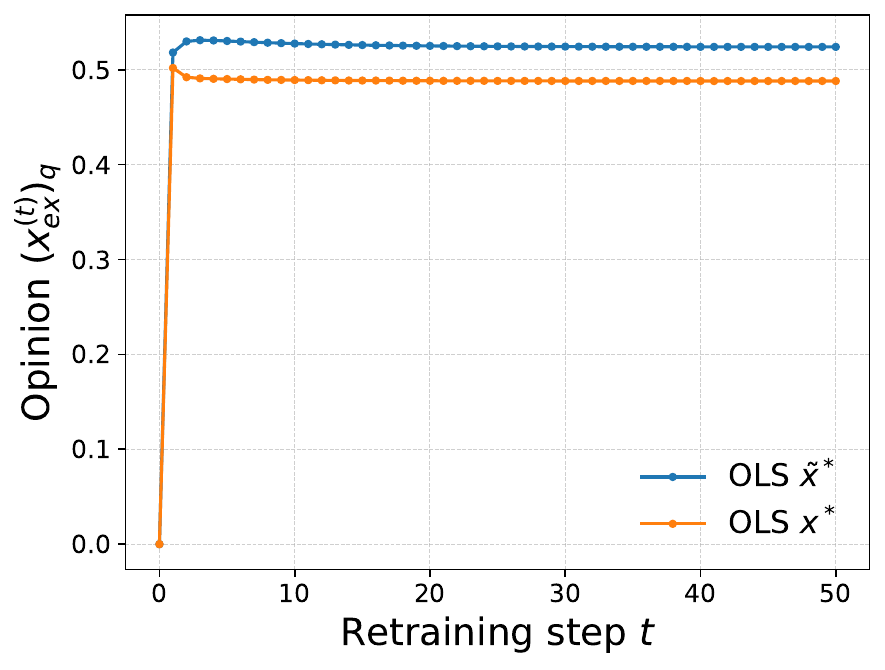}
            \caption{Evolving opinion of stubborn individual (Pokec)}
        \end{subfigure}
        \begin{subfigure}[b]{0.45\linewidth}
            \centering
            \includegraphics[width=\linewidth]{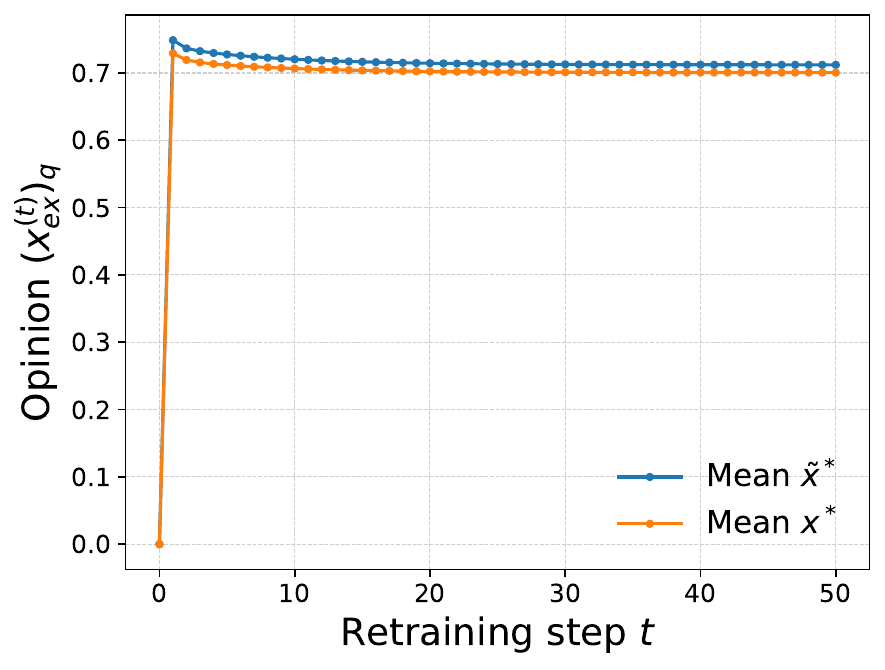 }
            \caption{Evolving opinion of stubborn individual (Yelp)
            }
        \end{subfigure}
        \hfill
        \begin{subfigure}[b]{0.45\linewidth}
            \centering
            \includegraphics[width=\linewidth]{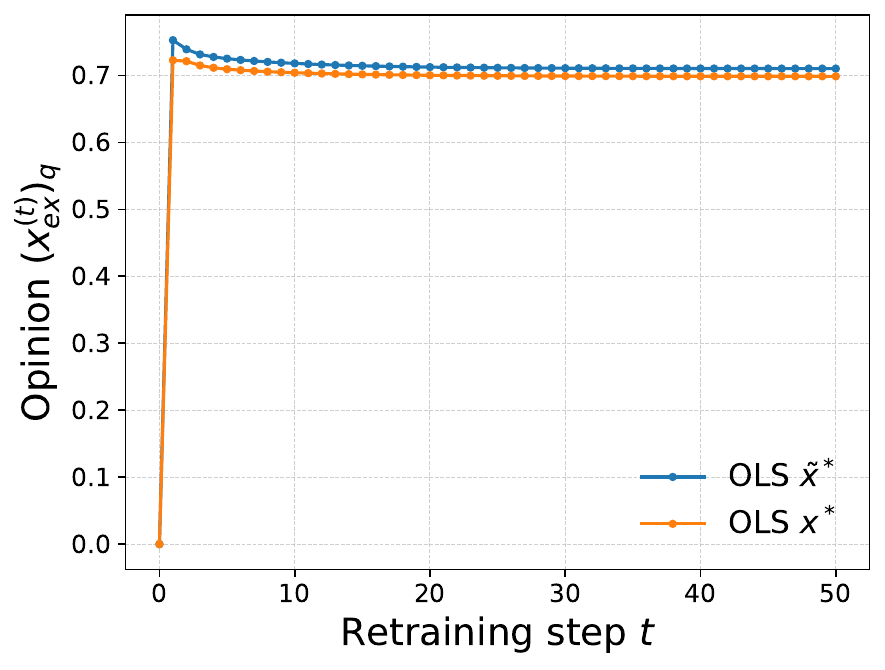}
            \caption{Evolving opinion of stubborn individual (Yelp)}
        \end{subfigure}
        \caption{ We show the individual $l\in U$ with $\alpha_l=0$ converges to different equilibria under different innate opinions of individuals in set $O$. The $x$-axis denotes the retraining steps, and the $y$-axis denotes the opinions of individual $l$. In~(a), the platform uses the mean estimation as in Proposition~\ref{prop:consensus_mean_estimation}. We use Pokec dataset for~(a) and (b), and Yelp for~(c) and (d). In~(b), the platform uses OLS for platform predictions. }
        \label{fig:steer2}
\end{figure}

\begin{figure}
        \begin{subfigure}[b]{0.45\linewidth}
            \centering
            \includegraphics[width=\linewidth]{figs/pokec_ridge_parametric_sl_retrain_steps_stubborn_unlabeled.pdf}
            \caption{Evolving opinion of stubborn individual (Pokec) }
            \label{fig:stubborn_pokec}
        \end{subfigure}
        \hfill
        \begin{subfigure}[b]{0.45\linewidth}
            \centering
            \includegraphics[width=\linewidth]{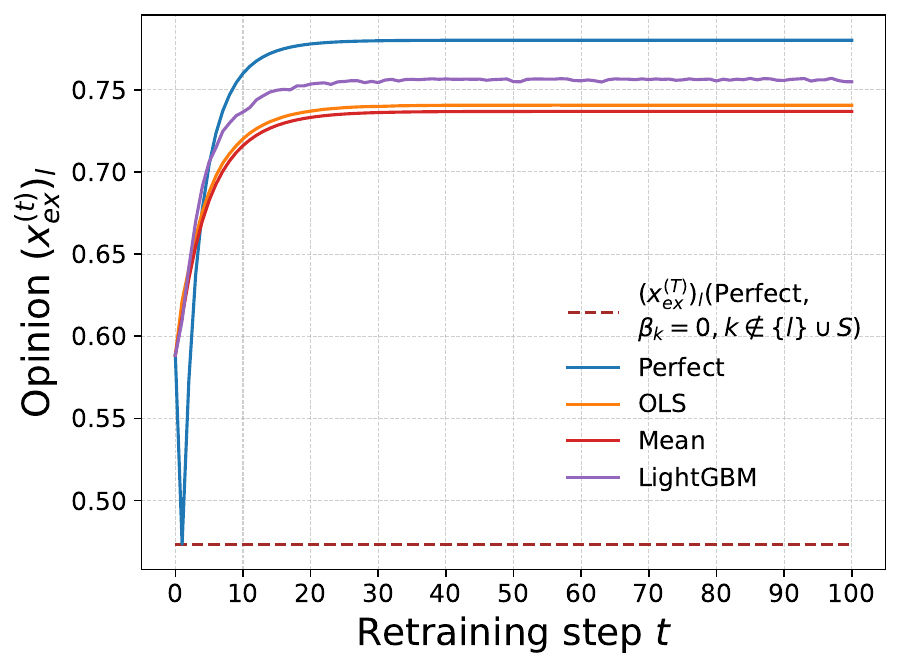}
            \caption{Indirect effect of platform intervention. (Pokec)}
            \label{fig:steer_pokec}
        \end{subfigure}
        \caption{ In $(a)$, we show how a stubborn individual $q\in U$ with $\alpha_q=0$ and $x^*_q=0$ is influenced by their peers across retraining steps for two different scenarios. In $(b)$, we show how indirect platform influence increases over retraining steps. We compare different the opinion of the stubborn individual $l$ with $\beta_l=0$. The purple dashed line denotes the opinion of the stubborn individual if we set $\beta_k=0$, $k\notin \{l\}\cup S$. }
        \label{fig:mean_steer_pokec}
\end{figure}

\end{document}